\def\N{\mathbb{N}}
\def\ed{wzb}
\def\ta{\mathtt{a}}
\def\tb{\mathtt{b}}
\DeclareMathOperator{\ScatFact}{ScatFact}
\def\nth#1{#1$^{\text{th}}$}
\title{$k$-Spectra of Weakly-$c$-Balanced Words}
\author{Joel D. Day\inst{1} \and Pamela Fleischmann\inst{2} \and Florin 
Manea\inst{2} \and Dirk Nowotka\inst{2} }
\authorrunning{J.D. Day \and P. Fleischmann \and F. Manea \and D. Nowotka}
\institute{Loughborough University, UK, \email{J.Day@lboro.ac.uk} \and 
Kiel University, 
Germany \email{\{fpa,flm,dn\}@informatik.uni-kiel.de}}
\begin{document}

\maketitle
\begin{abstract}
A word $u$ is a scattered factor of $w$ if $u$ can be obtained from $w$ by deleting some of its
letters. That is, there exist the (potentially empty) words $u_1,u_2,..., u_n$, and  
$v_0,v_1,..,v_n$ such that $u = u_1u_2...u_n$ and $w = v_0u_1v_1u_2v_2...u_nv_n$. We consider the 
set of length-$k$ scattered factors of a given word $w$,  called here $k$-spectrum and denoted 
$\ScatFact_k(w)$. We prove a series of properties of the sets $\ScatFact_k(w)$ for binary 
weakly-$0$-balanced and, respectively, weakly-$c$-balanced words $w$, i.e., words over a two-letter alphabet where the number of occurrences of each letter is 
the same, or, respectively, one letter has $c$-more occurrences than the other. 
In particular, we consider the question which cardinalities $n= 
|\ScatFact_k(w)|$ are obtainable, for a positive integer $k$, when $w$ is either a weakly-$0$-balanced binary word of 
length $2k$, or a weakly-$c$-balanced binary word of length $2k-c$. We also consider the problem of 
reconstructing words from their $k$-spectra.
\end{abstract}

\section{Introduction}

Given a word $w$, a scattered factor (also called scattered subword, or simply subword in the 
literature) is a word obtained by removing one or more factors from $w$. More formally, $u$ is a 
scattered factor of $w$ if there exist $u_1,\ldots,u_n\in\Sigma^{\ast}$, $v_0,  \ldots, 
v_n\in\Sigma^{\ast}$ such that $u=u_1u_2\ldots u_n$ and $w = v_0 u_1 v_1 u_1 \ldots u_n v_n$. 
Consequently a scattered factor of $w$ can be thought of as a representation of $w$ in which some parts 
are missing. As such, there is considerable interest in the relationship of a word and its 
scattered 
factors from both a theoretical and practical point of view. For an introduction to the study of scattered factors, 
see Chapter 6 of~\cite{Loth97}. 
On the one hand, it is easy to imagine how, in any situation where discrete, linear data is read 
from an imperfect input -- such as when sequencing DNA or during the transmission of a digital 
signal -- scattered factors form a natural model, as multiple parts of the input may be missed, but 
the rest will remain unaffected and in-sequence. For instance, various applications and connections of this model in verification are discussed in \cite{Zetzsche16,HalfonSZ17} within a language theoretic framework, while applications of the model in DNA sequencing are discussed in \cite{ElzingaRW08} in an algorithmic framework.
On the other hand, from a more algebraic  
perspective, there have been efforts to bridge the gap between the non-commutative field of 
combinatorics on words with traditional commutative mathematics via Parikh matrices (cf. 
e.g.,~\cite{Mat04,Salomaa05}) which are closely related to, and influenced by the topic of 
scattered 
factors.

The set (or also in some cases, multi-set) of scattered factors of a word $w$, denoted 
$\ScatFact(w)$ is typically exponentially large in the length of $w$, and contains a lot of 
redundant information in the sense that, for $k^\prime<k\leq|w|$, a word of length $k^\prime$ is a 
scattered factor of $w$ if and only if it is a scattered factor of a scattered factor of $w$ of 
length $k$. This has led to the idea of $k$-spectra: the set of all length-$k$ scattered factors of 
a word. For example, the $3$-spectrum of the word $\ta\tb\ta\tb\tb\tb$ is the set
$\{\ta\ta\tb, \ta\tb\ta, \ta\tb\tb, \tb\ta\tb, \tb\tb\tb\}$.
Note that unlike some literature, we do not consider the $k$-spectra to be the multi-set of 
scattered factors in the present work, but rather ignore the multiplicities. This distinction is 
non-trivial as there are significant variations on the properties based on these different 
definitions (cf. e.g.,~\cite{conf/dlt/Manuch99}).
Also, the notion of $k$-spectra is closely related to the classical notion of 
factor complexity of words, which counts, for each positive integer $k$, the 
number of distinct factors of length $k$ of a word. Here, the cardinality of the 
$k$-spectrum of a word gives the number of the word's distinct {\em scattered} 
factors of length $k$. 

One of the most fundamental questions about $k$-spectra of words, and indeed sets of scattered 
factors in general, is that of recognition: given a set $S$ of words (of length 
$k$), is $S$ the subset 
of a $k$-spectrum of some word? In general, it remains a long standing 
goal of the theory to give a ``nice'' descriptive characterisation of scattered factor sets (and 
$k$-spectra), and to better understand their structure~\cite{Loth97}.
Another fundamental question concerning $k$-spectra, and one well motivated in 
several 
applications, 
is the question of reconstruction: given a word $w$ of length $n$, what is the smallest value $k$ such that the 
$k$-spectrum of $w$ is uniquely determined? 
This question was addressed and solved successively in a variety of cases. In particular, in~\cite{dress04}, the exact bound of 
$\frac{n}{2}+1$ is given in the general case. Other variations, including for the definition of 
$k$-spectra where multiplicities are also taken into account, are considered 
in~\cite{conf/dlt/Manuch99}, while~\cite{Holub09} considers the question of reconstructing words 
from their palindromic scattered factors.

In the current work, we consider $k$-spectra in the restricted setting of 
a binary alphabet $\Sigma=\{\ta,\tb\}$. For such an alphabet, we can always identify the natural number $c\in\N_0$
which describes how weakly balanced a word is: $c$ is the difference between 
the amount of $\ta$s and $\tb$s. Thus, it seems natural to categorise all words 
over $\Sigma$ according to this difference: a binary word where one letter has 
exactly $c$ more occurrences than the other one is called weakly-$c$-balanced. In 
Section~\ref{min-max} the cardinalities of $k$-spectra of weakly-$c$-balanced words of length 
$2k-c$ are investigated. Our first results concern the minimal and maximal cardinality $\ScatFact_k$ might have. 
We show that the cardinality ranges for weakly-$0$-balanced between $k+1$ and $2^k$, and determine exactly for which words of length $2k$ these values are reached. 
In the case of weakly-$c$-balanced words, we are able to replicate the result regarding the minimal cardinality of $\ScatFact_k$, but the case of maximal cardinality seems to be more complicated. 
To this end, it seems that the words containing many alternations between the two letters of the alphabet have larger sets $\ScatFact_k$. Therefore, we first investigate the scattered factors of the words which are prefixes of 
$(\ta\tb)^\omega$ and give a precise description of all scattered factors of any length of such words. That is, not only we compute the cardinality of $\ScatFact_k(w)$, for all such words $w$, 
but also describe a way to obtain directly the respective scattered factors, without repetitions. We use this to describe exactly the sets $\ScatFact_i$ for the word $(\ta\tb)^{k-c}\ta^c$, which seems a good candidate for a weakly-$c$-balanced word with many distinct scattered factors. 

Further, in Section \ref{sec:card}, we explore more the cardinalities of 
$\ScatFact_k(w)$ for weakly-$0$-balanced words $w$ of length $2k$. We obtain 
for these words that the smallest three numbers which are possible 
cardinalities for their $k$-spectra are $k+1$, 
$2k$, and $3k-3$, thus identifying two gaps in the set of such cardinalities. 
Among other results on this topic, we show that for every constant $i$ there 
exist a word $w$ of length $2k$ such that $|\ScatFact_k(w)|\in \Theta(n^i)$; we 
also show how such a word can be constructed.

Finally, in Section \ref{reconstruct}, we also approach the question of reconstructing 
weakly-$0$-balanced words from $k$-spectra in the specific case that the spectra are also limited to 
weakly-$0$-balanced 
words only. While we are not able to resolve the question completely, we conjecture that the 
situation is similar to the general case: the smallest value $k$ such that the 
$k$-spectrum of $w$ is uniquely determined is $k = \frac{|w|}{2}+1$ if 
$\frac{|w|}{2}$ 
is odd and $k= \frac{|w|}{2} +2$, otherwise, in the case when $w$ contains at most two blocks of $\tb$s. 

After introducing a series of basic definitions, preliminaries, and 
notations, the organisation of the paper follows the description above. The proofs can be found in \cite{dlt2019}.

\section{Preliminaries}

Let $\N$ be the set of natural numbers, $\N_0 = \N\cup\{0\}$, and let 
$\N_{\geq k}$ be all natural numbers greater than or equal to $k$. 
Let $[n]$ denote the set $\{1,\ldots, n\}$ and $[n]_0 = [n]\cup\{0\}$
for an $n\in\N$.

We consider words $w$ over the alphabet $\Sigma=\{\ta,\tb\}$. 
$\Sigma^*$ denotes the set of all finite words over $\Sigma$, also called binary words. 
$\Sigma^{\omega}$ the set of all infinite words over $\Sigma$, also called binary infinite words.
The {\em empty word} is denoted by $\varepsilon$ and 
$\Sigma^+$ is the free semigroup $\Sigma^*\backslash\{\varepsilon\}$.
The length of a word $w$ is denoted by $|w|$. Let 
$\Sigma^{\leq k}:=\{w\in\Sigma^{\ast}|\,|w|\leq k\}$ and $\Sigma^k$ be the 
set of all words of length exactly $k\in\N$. The number of occurrences of a 
letter $\ta\in\Sigma$ in a word $w\in\Sigma^{\ast}$ is denoted by $|w|_\ta$.
The \nth{i} letter of a word $w$ is given by $w[i]$ for $i\in[|w|]$. For a given 
word $w\in\Sigma^n$ the {\em reversal} of $w$ is defined by 
$w^R=w[n]w[n-1]\dots w[2]w[1]$. The powers of $w\in\Sigma^{\ast}$ are 
defined recursively by $w^0=\varepsilon$, $w^n=ww^{n-1}$ for $n\in\N$.

A word $w\in\Sigma^{\ast}$ is called {\em weakly-$c$-balanced} if $||w|_a-|w|_b|=c$ 
for $c\in\N_0$. 
Thus weakly-$0$-balanced words have the same number of $\ta$s and $\tb$s. Let 
$\Sigma_{\ed}^{\ast}$ be the set of all weakly-$0$-balanced words over $\Sigma$. 
For example, $\ta\tb\ta\ta$ is weakly-$2$-balanced, $\ta\tb\ta$ is weakly-$1$-balanced, 
while $\ta\tb\tb\ta\tb\ta$ is weakly-$0$-balanced.

A word $u\in\Sigma^{\ast}$ is a \emph{factor} of $w\in\Sigma^{\ast}$, if 
$w=xuy$ holds for some words $x,y\in\Sigma^{\ast}$. Moreover, $u$ is a 
\emph{prefix} of $w$ if $x=\varepsilon$ holds and a \emph{suffix} if 
$y=\varepsilon$ holds. The factor of $w$ from the \nth{i} to the \nth{j} 
letter will be denoted by $w[i..j]$ for $0\leq i\leq j\leq |w|$. 
Given a letter $\ta\in\Sigma$ and a word $w\in\Sigma^{\ast}$, a \emph{block} of 
$\ta$ is a factor $u = w[i..j]$ with $u=\ta^{j-i}$, such that  either $i=1$ or 
$w[i-1]=\tb\neq \ta$ and either $j=|w|$ or $w[j+1]=\tb\neq \ta$. For example the 
word $\ta\tb\ta\ta\ta\tb\ta\ta\tb\tb$ has 3 $\ta$-blocks and 3 $\tb$-blocks.
Scattered factors and $k$-spectra are defined as follows.

\begin{definition}
A word $u=a_1\dots a_n\in\Sigma^n$, for $n\in\N$, is a {\em scattered factor} of 
a word 
$w\in\Sigma^+$ if there exists $v_0,\dots,v_n\in\Sigma^{\ast}$ with 
$w=v_0a_1v_1\dots 
v_{n-1}a_nv_n$. Let $\ScatFact(w)$ denote the set of $w$'s scattered factors and 
consider 
additionally $\ScatFact_k(w)$ and $\ScatFact_{\leq k}(w)$ as the two subsets of 
$\ScatFact(w)$ 
which contain only the scattered factors of length $k\in\N$ or the ones up to 
length $k\in\N$.
\end{definition}

The sets $\ScatFact_{\leq k}(w)$ and $\ScatFact_k(w)$ are also known as {\em 
full 
$k$-spectrum} and, respectively, 
{\em $k$-spectrum} of a word $w\in\Sigma^{\ast}$ (see \cite{BerKar03}, 
\cite{conf/dlt/Manuch99}, 
\cite{RozSal97}) and moreover, scattered factors are often called {\em subwords} 
or {\em 
scattered subwords}. Obviously the $k$-spectrum is empty for $k>|w|$ and 
contains 
exactly $w$'s letters for $k=1$ and only $w$ for $k=|w|$. Considering the word 
$w=\ta\tb\tb\ta$, 
the other spectra are given by $\ScatFact_2(w)=\{\ta^2,\tb^2,\ta\tb,\tb\ta\}$ 
and 
$\ScatFact_3(w)=\{\ta\tb^2,\ta\tb\ta,\tb^2\ta\}$.

It is worth noting that if $u$ is a scattered factor of $w$, and $v$ is a 
scattered factor of $u$, then $v$ is a scattered factor of $w$. 
Additionally, notice two  important symmetries regarding $k$-spectra. 
For $w \in 
\Sigma^*$ and the {\em renaming morphism}
$\overline{\cdot}:\Sigma\rightarrow\Sigma$ with $\overline{\ta}=\tb$ and 
$\overline{\tb}=\ta$ we have $
\ScatFact(w^R) = \{u^R \mid u \in \ScatFact(w)\}$ and $\ScatFact(\overline{w}) 
= 
\{ \overline{u} \mid u \in \ScatFact(w)\}$.
Thus, from a structural point of 
view, it is sufficient 
to consider only one 
representative from the equivalence classes induced by the equivalence relation 
where $w_1$ is equivalent to $w_2$ whenever 
$w_2$ is obtained by a composition of reversals and renamings from $w_1$. 
Considering w.l.o.g. the order $\ta<\tb$ on $\Sigma$, we 
choose the 
lexicographically smallest word as representative from each class. As such, we will mostly analyse the $k$-spectra of 
words starting with $\ta$. We shall 
make 
use of this fact 
extensively in Section~\ref{sec:card}.

\section{Cardinalities of $k$-Spectra of Weakly-$c$-Balanced Words}\label{sec:card}

\label{min-max}
In the current section, we consider the combinatorial properties of $k$-spectra of 
weakly-$c$-balanced finite words. In particular, we are interested in the cardinalities of the 
$k$-spectra and in the question: which cardinalities are (not) possible? Since 
the $k$-spectra of $\ta^n$ and $\tb^n$ are just $\ta^k$ and $\tb^k$ 
respectively for all $n\in\N_0$ and $k\in[n]_0$, we assume 
$|w|_{\ta},|w|_{\tb}>0$ for $w\in\Sigma^{\ast}$.
It is 
a straightforward observation that not every subset of $\Sigma^k$ is a 
$k$-spectrum of some word $w$. For example, for $k=2$, $\ta\ta$ and $\tb\tb$ can 
only be scattered factors of a word containing both $\ta$s and $\tb$s, and 
therefore having either $\ta\tb$ or $\tb\ta$ as a scattered factor as well. Thus, there is no word $w$ such that $\ScatFact_2(w)=\{\ta\ta,\tb\tb\}$.

In general, for any word containing only $\ta$s or only $\tb$s, there will be exactly one scattered factor of each length, while for words containing both $\ta$'s and $\tb$'s, the smallest $k$-spectra are realised for words of the form $w= \ta^n\tb$ (up to renaming and reversal), for which $\ScatFact_k(w) = \{\ta^k, \ta^{k-1}\tb\}$ for each $k \in [|w|]$. On the other hand, as Proposition~\ref{lemfull} shows, the maximal $k$-spectra are those containing all words of length $k$ -- and hence have size $2^k$, achieved by e.g. $ w= (\ta\tb)^n$ for $n \geq k$. Note that when weakly-$0$-balanced words are considered, the same maximum applies, since $(\ta\tb)^n$ is weakly-$0$-balanced, while the minimum does not, since $\ta^n\tb$ is not weakly-$0$-balanced. 

It is straightforward to enumerate all possible $k$-spectra, and describe the words realising them for $k \leq 2$, hence we shall generally consider only $k$-spectra in the sequel for which $k \geq 3$.
Our first result generalises the previous 
observation about minimal-size $k$-spectra.

\setcounter{theorem}{\value{definition}}
\begin{theorem}\label{lemsmallest}
For $k \in \mathbb{N}_{\geq 3}$, $c\in[k-1]_0$, $i \in [c]_0$, and a weakly-$c$-balanced word 
$w\in\Sigma^{2k-c}$, we have 
$|\ScatFact_{k-i}(w)|\geq k-c+1$, where equality holds if and only if 
$w\in\{\ta^k\tb^{k-c},\ta^{k-c}\tb^k,\tb^k\ta^{k-c},\tb^{k-c}\ta^k\}.$
Moreover, if $w\in\Sigma_{\ed}^{2k}\backslash\{\ta^k\tb^k\}$, then $|\ScatFact_{k}(w)|\geq 
k+3$.
\end{theorem}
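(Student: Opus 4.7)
The plan is to treat the three claims in order; the bulk of the work lies in the equality converse and its refinement for the ``moreover'' clause.

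\emph{Lower bound.} Assume WLOG $|w|_\ta=k$ and $|w|_\tb=k-c$ (the other case follows by the renaming symmetry from the preliminaries). For each $j\in[k-c]_0$ the inequalities $0\leq j\leq k-c$ and $0\leq k-i-j\leq k$ (the left one using $i\leq c$, so $k-i-j\geq c-i\geq 0$) allow us to select any $j$ of the $\tb$-occurrences and any $k-i-j$ of the $\ta$-occurrences of $w$; reading them in the order inherited from $w$ yields a length-$(k-i)$ scattered factor with exactly $j$ occurrences of $\tb$. These $k-c+1$ factors are pairwise distinct by their $\tb$-counts. For the $\Leftarrow$-direction of the equality, the point is that on $w=\ta^k\tb^{k-c}$ every such selection inherits the shape $\ta^p\tb^q$ (since all $\ta$'s precede all $\tb$'s), so the constraints $p+q=k-i$, $p\leq k$, $q\leq k-c$ (with $q\leq k-c$ tight) yield exactly the $k-c+1$ factors $\ta^{k-i-j}\tb^j$; the three remaining cases follow by reversal and renaming.

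For the $\Rightarrow$-direction, suppose $|w|_\ta=k$ and $w\notin\{\ta^k\tb^{k-c},\tb^{k-c}\ta^k\}$. Let $t_1<\cdots<t_{k-c}$ be the positions of $\tb$ in $w$ and set $\alpha_l$ to be the number of $\ta$'s in $w[1..t_l-1]$; the exclusion is equivalent to $\alpha_{k-c}\geq 1$ and $\alpha_1\leq k-1$. I would then zoom in on length-$(k-i)$ scattered factors containing exactly one $\tb$, i.e.\ of the form $\ta^r\tb\ta^{k-i-1-r}$: using the $\tb$ at $t_l$, the feasible $r$'s form the interval $I_l=[\max(0,\alpha_l-i-1),\min(\alpha_l,k-i-1)]$. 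A short case split — either all $\alpha_l$'s are equal (forcing $w=\ta^p\tb^{k-c}\ta^q$ with $p,q\geq 1$) or $\alpha_l<\alpha_{l+1}$ for some $l$ — shows that $|\bigcup_l I_l|\geq 2$, producing a scattered factor beyond the single one with $j=1$ already counted in the lower bound.

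For the ``moreover'' clause, a canonical strictly balanced $w\in\Sigma_{\ed}^{2k}\setminus\{\ta^k\tb^k\}$ starts with $\ta$ and has at least three blocks. Applying the preceding argument at $j=1$ yields one extra scattered factor of shape $\ta^r\tb\ta^{k-1-r}$. Applying the same argument to $\bar w$ at $j=1$ — $\bar w$ satisfies the hypothesis since $w\neq\ta^k\tb^k$ and $w\neq\tb^k\ta^k$ by canonicity — gives an extra factor of $\bar w$ in its $j=1$ fiber, which via the renaming bijection $\ScatFact_k(\bar w)=\{\bar u\mid u\in\ScatFact_k(w)\}$ corresponds to an extra factor of $w$ in its $j=k-1$ fiber. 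Since $k\geq 3$ the fibers $j=1$ and $j=k-1$ are disjoint, so the two extras are distinct and $|\ScatFact_k(w)|\geq(k+1)+2=k+3$. The main obstacle will be the converse direction and its symmetric refinement: verifying $|\bigcup_l I_l|\geq 2$ in every corner case (small block sizes, $i$ close to $c$) requires careful bookkeeping, and one must also justify that when $w$ has only three blocks (e.g.\ $w=\ta^{p_1}\tb^k\ta^{p_2}$ so that the $\tb$'s cluster into a single block) the symmetrically-produced extra at $j=k-1$ is genuinely new and not already among the $k+1$ factors exhibited by the lower bound.
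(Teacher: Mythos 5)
Your proposal is correct, and it takes a genuinely different route from the paper's. The paper proves the strictly balanced case by induction on $k$: for $w\neq\ta^k\tb^k$ it extracts an occurrence of $\ta\tb\ta$ (or $\tb\ta\tb$), writes $w=x\ta\tb\ta z_1\tb z_2$, and runs a long case analysis on scattered factors of the forms $u_1\ta\ta u_2u_3$, $u_1\tb\ta u_2u_3$, $u_1\ta\tb u_2u_3$ built from scattered factors of $xz_1z_2$, with a separate (and rather terse) reduction for $c>0$. You instead stratify $\ScatFact_{k-i}(w)$ by the number $j$ of $\tb$'s: non-emptiness of each of the $k-c+1$ fibers gives the lower bound, and the converse comes from showing the $j=1$ fiber has at least two elements unless $w$ is extremal, via the intervals $I_l$ attached to the $\tb$-positions; the ``moreover'' clause then follows by running the same argument on $\overline{w}$ and using that the fibers $j=1$ and $j=k-1$ are disjoint for $k\geq 3$. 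This is cleaner, avoids induction entirely, and handles $c>0$ uniformly. I checked the bookkeeping you defer: in your two cases, $I_l$ is a singleton only when $\alpha_l\in\{0,k\}$ (given $k-i-1\geq 1$), so either some $I_l$ has two elements or two consecutive singletons $\{0\}$ and $\{k-i-1\}$ both occur; and for the three-block words $\ta^{p_1}\tb^k\ta^{p_2}$ the extra factor at $j=k-1$ is automatically new because your count is one-per-fiber. One caveat worth recording: at the extreme $i=c=k-1$ your interval argument genuinely degenerates ($k-i-1=0$ forces every $I_l=\{0\}$), but there the theorem's ``only if'' is itself false --- for $k=3$, $c=i=2$, $w=\ta\tb\ta\ta$ one has $|\ScatFact_1(w)|=2=k-c+1$ although $w$ is not of the claimed form --- so this is a defect of the statement (which the paper's own proof also glosses over), not of your argument.
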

\refstepcounter{definition}

\begin{proof}
Consider firstly only weakly-$0$-balanced words, i.e. $c=0$ and w.l.o.g. only 
$w=\ta^k\tb^k$. The cases $k=1$ and $k=2$ are the induction basis.

The word $\ta^k\tb^k$ has obviously all $\ta^r\tb^s$ for $r,s\in[k]_0$ as 
scattered factors, thus 
$k+1$ many. This proves the $\Leftarrow$-direction.

Consider now a word $w\in\Sigma_{\ed}^{2k}\backslash\{\ta^k\tb^k,\tb^k\ta^k\}$.  
 Since 
$w$ is not $\ta^k\tb^k$, $w$ contains a factor $\ta\tb\ta$ or 
$\tb\ta\tb$. Assume w.l.o.g. that $w=x\ta\tb\ta y$ holds for 
$x,y\in\Sigma^{\ast}$ with 
$|x|+|y|=2k-3$. By $w\in\Sigma_{\ed}^{2k}$ follows that $|x|_{\tb}$ or 
$|y|_{\tb}$ is not zero. Choose 
w.l.o.g. $z_1,z_2\in\Sigma^{\ast}$ with $y=z_1\tb z_2$ which implies 
$w=x\ta\tb\ta z_1\tb z_2$. Consequently 
$|xz_1z_2|_a=|xz_1z_2|_b=k-2$ holds.\\
{\tt case 1:} $xz_1z_2=\ta^{k-2}\tb^{k-2}$\\
By induction $|\ScatFact_{k-2}(xz_1z_2)|=(k-2)+1=k-1$. Let $u$ be a scattered 
factor of 
$xz_1z_2$ of length $k-2$. Then there 
exist $u_1,u_2$, and $u_3$ such that $u_1$ is a scattered factor of $x$, $u_2$ 
of $z_1$, and 
$u_3$ of $z_3$ respectively. Consequently
\[
u_1\ta\ta u_2u_3,\quad u_1\ta\tb u_2u_3,\quad\mbox{and}\quad u_1\tb\ta u_2u_3
\]
are different elements of $\ScatFact_k(w)$. 
Each scattered factor of $xz_1z_2$ is of the form $\ta^r\tb^s$ for 
$r,s\in[k-2]_0$. We will now prove in which cases the aforementioned scattered 
factors are different. Consider $u=u_1u_2u_3=\ta^r\tb^s$ and 
$u'=u_1'u_2'u_3'=\ta^{r'}\tb^{s'}$ to be different scattered factors of this 
form, i.e. $r\neq r'$ and $s\neq s'$. Set 
\begin{align*}
\alpha_1 &= u_1\ta\ta u_2u_3, & \beta_1 &= u_1'\ta\ta u_2'u_3'\\
\alpha_2 &= u_1\tb\ta u_2u_3, & \beta_2 &= u_1'\tb\ta u_2'u_3'\\
\alpha_3 &= u_1\ta\tb u_2u_3, & \beta_3 & = u_1\ta\tb u_2u_3.
\end{align*}
If $u_1=\ta^{r_1}$, $u_2u_3=\ta^{r_2}\tb^s$ and $u_1'=\ta^{r_1'}$, 
$u_2'u_3'=\ta^{r_2'}\tb^{s'}$ with $r_1+r_2=r$ and $r_1'+r_2'=r'$,  we get
because of $r\neq r'$, $r_1\neq -1$, 
\begin{align*}
\alpha_1=\ta^{r+2}\tb^s&\neq \ta^{r'+2}\tb^s=\beta_1,\\
\alpha_1=\ta^{r+2}\tb^s&\neq \ta^{r_1'}\tb\ta^{r_2'+1}\tb^{s'}=\beta_2\\
\alpha_2=\ta^{r_1}\tb\ta^{r_2+1}\tb^s&\neq 
\ta^{r_1'}\tb\ta^{r_2'+1}\tb^{s'}=\beta_2.
\end{align*}
If $u_1=\ta^{r_1}$, $u_2u_3=\ta^{r_2}\tb^s$ and $u_1'=\ta^{r'}\tb^{s_1'}$, 
$u_2'u_3'=\tb^{s_2'}$ with $r_1+r_2=r$, $s_1'+s_2'=s'$, and $s_1'\neq 0$ 
(already in the previous case) we get
because of $s_1'\neq 0$, 
\begin{align*}
\alpha_1=\ta^{r+2}\tb^s&\neq \ta^{r'}\tb^{s_1'}\ta\ta\tb^{s_2'}=\beta_1,\\
\alpha_1=\ta^{r+2}\tb^s&\neq \ta^{r'}\tb^{s_1'}\tb\ta\tb^{s_2'}=\beta_2\\
\alpha_2=\ta^{r_1}\tb\ta^{r_2+1}\tb^s&\neq 
\ta^{r'}\tb^{s_1'}\tb\ta\tb^{s_2'}=\beta_2.
\end{align*}
If $u_1=\ta^{r}\tb^{s_1}$, $u_2u_3=\tb^{s_2}$ and $u_1'=\ta^{r'}\tb^{s_1'}$, 
$u_2'u_3'=\tb^{s_2'}$ with $r_1+r_2=r$, $s_1'+s_2'=s'$, and $s_1,s_1'\neq 0$ 
(already in the previous case) we get
because of $r'\neq r$ and $s_1,s_1'\neq 0$, 
\begin{align*}
\alpha_1=\ta^{r}\tb^{s_1}\ta\ta\tb^{s_2}&\neq 
\ta^{r'}\tb^{s_1'}\ta\ta\tb^{s_2'}=\beta_1,\\
\alpha_1=\ta^{r}\tb^{s_1}\ta\ta\tb^{s_2}&\neq 
\ta^{r'}\tb^{s_1'}\tb\ta\tb^{s_2'}=\beta_2\\
\alpha_2=\ta^{r}\tb^{s_1}\tb\ta\tb^{s_2}&\neq 
\ta^{r'}\tb^{s_1'}\tb\ta\tb^{s_2'}=\beta_2.
\end{align*}
Consequently $\alpha_1$ and $\alpha_2$ are all different and we get $2(k-1)$ 
many different scattered factors. Assume now additionally $|r-r'|=3$.
If $u_1=\ta^{r_1}$, $u_2u_3=\ta^{r_2}\tb^s$ and $u_1'=\ta^{r_1'}$, 
$u_2'u_3'=\ta^{r_2'}\tb^{s'}$ with $r_1+r_2=r$ and $r_1'+r_2'=r'$,  we get
because of $s_1'\neq 0$, $r'\neq r$, $r'\neq r+1$
\begin{align*}
\alpha_1=\ta^{r+2}\tb^s&\neq = \ta^{r_1'}\ta\tb\ta^{r_2'}\tb^{s'}=\beta_3,\\
\alpha_2=\ta^{r_1}\tb\ta^{r_2+1}\tb^s&\neq 
\ta^{r_1'}\ta\tb\ta^{r_2'}\tb^{s'}=\beta_3,\\
\alpha_3=\ta^{r_1}\ta\tb\ta^{r_2}\tb^s&\neq 
\ta^{r_1'}\ta\tb\ta^{r_2'}\tb^{s'}=\beta_3,\\
\end{align*}
If $u_1=\ta^{r_1}$, $u_2u_3=\ta^{r_2}\tb^s$ and $u_1'=\ta^{r'}\tb^{s_1'}$, 
$u_2'u_3'=\tb^{s_2'}$ with $r_1+r_2=r$, $s_1'+s_2'=s'$, and $s_1'\neq 0$ 
(already in the previous case) we get
because of $s_1'\neq 0$, $r'\neq r+2$,
\begin{align*}
\alpha_1=\ta^{r+2}\tb^s&\neq \ta^{r'}\tb^{s_1'}\ta\tb\tb^{s_2'}=\beta_3,\\
\alpha_2=\ta^{r_1}\tb\ta^{r_2+1}\tb^s&\neq 
\ta^{r'}\tb^{s_1'}\ta\tb\tb^{s_2'}=\beta_3\\
\alpha_3=\ta^{r_1}\ta\tb\ta^{r_2}\tb^s&\neq 
\ta^{r'}\tb^{s_1'}\ta\tb\tb^{s_2'}=\beta_3.
\end{align*}
If $u_1=\ta^{r}\tb^{s_1}$, $u_2u_3=\tb^{s_2}$ and $u_1'=\ta^{r'}\tb^{s_1'}$, 
$u_2'u_3'=\tb^{s_2'}$ with $r_1+r_2=r$, $s_1'+s_2'=s'$, and $s_1,s_1'\neq 0$ 
(already in the previous case) we get
because of $r'\neq r$ and $s_1,s_1'\neq 0$, $r'\neq r+2$,
\begin{align*}
\alpha_1=\ta^{r}\tb^{s_1}\ta\ta\tb^{s_2}&\neq 
\ta^{r'}\tb^{s_1'}\ta\tb\tb^{s_2'}=\beta_3,\\
\alpha_2=\ta^{r}\tb^{s_1}\tb\ta\tb^{s_2}&\neq 
\ta^{r'}\tb^{s_1'}\ta\tb\tb^{s_2'}=\beta_3\\
\alpha_3=\ta^{r}\tb^{s_1}\ta\tb\tb^{s_2}&\neq 
\ta^{r'}\tb^{s_1'}\ta\tb\tb^{s_2'}=\beta_3.
\end{align*}
Consequently we have another $\lfloor\frac{k-2}{3}\rfloor+1$ different 
scattered factors. This sums up to $|\ScatFact_k(w)|\geq 
\frac{7k-8}{3}>k+1$. An immediate result is that the $k$-spectrum has 
at least $k+3$ elements for $k\geq 5$. For $k=3$ and $k=4$ the results 
can be easily verified by testing.\\
{\tt case 2:} $xz_1z_2\neq \ta^{k-2}\tb^{k-2}$\\
In this case all words of the form $\ta^r\ta\tb\ta\ta^s$ for $r+s=k-3$, 
$r\in[|x|_\ta]_0$, and 
$s\in[|y|_\ta]_0$ are $|x|_{\ta}+1$ different scattered factors of length $k$ 
of $w$. 
Analogously all 
$\tb^{r'}\ta\tb\ta\tb^{s'}$ with $r'+s'=k-3$, $r'\in[|x|_\tb]_0$, 
$s'\in[|y|_\tb]_0$ are $|x|_{\tb}+1$ 
different scattered factors of length $k$ of $w$. All these factors are 
different and additionally 
$w$ has $\ta^k$ and $\tb^k$ as scattered factors. Hence $|\ScatFact_k(w)|\geq 
|x|_{\ta}+|x|_{\tb}+4=|x|+4$ holds. Since the 
length of $w$ is $2k$, the length of $xy$ is $2k-3$ and consequently $x$ and $y$ 
have different 
lengths. Assume w.l.o.g. $|x|>|y|$, i.e. $|x|\geq k-1$. This implies 
$|\ScatFact_k(w)|\geq k+3$ follows. This proves the claim for $c=0$.

Assume now $c>0$ and let $w=\ta^k\tb^{k-c}$. By the previous part we know 
$|\ScatFact_{k-c}(w)|=k-c+1$ if and only if $w=\ta^{k-c}\tb^{k-c}$. The claim 
about the 
$(k-c)$-spectrum follows immediately by 
$\ScatFact_{k-c}(w)=\ScatFact_{k-c}(\ta^k\tb^{k-c})$ since the prepended $\ta$s 
do not change the $(k-c)$-spectrum. For $i\in[c-1]_0$ notice 
that $x\in\ScatFact_{k-i}(\ta^k\tb^{k-c})$ implies that $\ta x$ (resp. 
$x\tb$, $x\ta$, $\tb x$) is a scattered factor of $\ta^k\tb^{k-c}$ of length 
$k-i+1$. Thus 
$|\ScatFact_{k-i+1}(w)|\geq k-c+1$ follows. 
On the other hand a scattered factor of $\ta^k\tb^{k-c}$ of length $k-i+1$ 
is exactly of this form, since it can neither start with $\tb$ 
($\ta^k\tb^{k-c}$ has only 
$(k-c)$ occurrences of $\tb$) nor 
contain $\tb\ta$ resp. $\ta\tb$ (this would be the implication of a scattered 
factor being of the 
form $\ta x'$ with $|x'|=k-i$, $x'\not\in\ScatFact_{k-i}(\ta^k\tb^{k-c})$). \qed
\end{proof}

\setcounter{remark}{\value{definition}}
\begin{remark}\label{smallestgap}
Theorem~\ref{lemsmallest} 
 answers immediately the question, whether a given set 
$S\subseteq\Sigma^{k}$, with $|S| < k+1$ or $|S| = k+2$, is a $k$-spectrum of a word $w\in\Sigma_{\ed}^{2k}$ in the negative. \end{remark}
\refstepcounter{definition}

Theorem~\ref{lemsmallest} shows  that the smallest cardinality of the $k$-spectrum of a word $w$
is reached when the letters in $w$ are {\em nicely ordered}, both for 
weakly-$0$-balanced words as well as for weakly-$c$-balanced words with $c>0$. The 
largest cardinality is, not surprisingly, reached for words where the alternation of $\ta$ and $\tb$ letters is, in a sense, maximal, e.g., for $w=(\ta\tb)^k$. To this end, one can show a general result.

\setcounter{theorem}{\value{definition}}
\begin{theorem}\label{arbitraryw}
For $w\in\Sigma^{\ast}$, the $k$-spectrum of $w$ is $\Sigma^{k}$ if and only 
if 
\[
\{\ta\tb,\tb\ta\}^k\cap\ScatFact_{2k}(w)\neq\emptyset.
\]
\end{theorem}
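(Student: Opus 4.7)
The $(\Leftarrow)$ direction is immediate: given $z = z_1 z_2 \cdots z_k \in \{\ta\tb, \tb\ta\}^k$ appearing as a scattered factor of $w$, any target word $u = u_1 u_2 \cdots u_k \in \Sigma^k$ can be obtained from $z$ by picking, from each block $z_i$, the letter equal to $u_i$. Since both $\ta$ and $\tb$ occur in every $z_i$, this choice is always available, so $u$ is a scattered factor of $z$ and therefore of $w$. This gives $\Sigma^k \subseteq \ScatFact_k(w)$, and the reverse inclusion is trivial.

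For $(\Rightarrow)$ I plan to induct on $k$. If $w$ is missing one of the two letters then $\ScatFact_k(w) \neq \Sigma^k$ for $k \geq 1$, so the hypothesis rules out that case, and the symmetries noted after the definition of $\ScatFact$ let me assume that $w$ starts with $\ta$. The base case $k = 1$ is then immediate, since $w$ contains both letters and therefore has $\ta\tb$ as a scattered factor of length $2$.

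For the inductive step, assume the result for $k-1$ and let $j$ be the position of the first $\tb$ in $w$, so that $w = \ta^{j-1}\tb w''$ for some $w'' \in \Sigma^{\ast}$. The key intermediate claim is that $\ScatFact_{k-1}(w'') = \Sigma^{k-1}$. Indeed, for any $u \in \Sigma^{k-1}$ the word $\tb u$ lies in $\Sigma^k = \ScatFact_k(w)$, and in any embedding of $\tb u$ into $w$ the leading $\tb$ must be aligned with some $\tb$-position $i \geq j$; the tail $u$ is then embedded into $w[i+1..]$, which is itself a suffix of $w''$, so $u \in \ScatFact_{k-1}(w'')$.

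Applying the induction hypothesis to $w''$ produces some $z'' \in \{\ta\tb, \tb\ta\}^{k-1}$ with $z'' \in \ScatFact_{2(k-1)}(w'')$. Then $\ta\tb z''$ has length $2k$, lies in $\{\ta\tb, \tb\ta\}^k$, and is a scattered factor of $w$: position $1$ carries the leading $\ta$, position $j$ carries the following $\tb$, and an embedding of $z''$ into $w''$ handles the rest. The step most requiring care is the intermediate claim $\ScatFact_{k-1}(w'') = \Sigma^{k-1}$, where one must check that no matter which $\tb$-position is used to align the leading $\tb$ of $\tb u$, the tail $u$ is still forced into $w''$; this is precisely what justifies choosing the \emph{first} $\tb$ of $w$ as the pivot for splitting the word.
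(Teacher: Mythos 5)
Your proof is correct and takes essentially the same route as the paper's: an induction on $k$ in which one alternating block is peeled off at a distinguished $\tb$-position, the hypothesis $\ScatFact_k(w)=\Sigma^k$ is used to force $\Sigma^{k-1}$ into the spectrum of the remaining part of $w$, and the induction hypothesis supplies the alternating scattered factor of length $2(k-1)$ to which $\ta\tb$ (resp.\ $\tb\ta$) is adjoined; the only difference is that you split at the first $\tb$ of $w$ while the paper splits at the last $\tb$ before the trailing $\ta$-run, which are mirror images under the reversal symmetry the paper records. Your direct block-selection argument for the $(\Leftarrow)$ direction replaces the paper's inductive one, but it is exactly the observation the paper itself makes in the discussion following Proposition~\ref{lemfull}, so nothing genuinely new or missing arises.
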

The previous theorem has an immediate consequence, which exactly characterises the weakly-$0$-balanced words of length $2k$ for which the maximal cardinality of $\ScatFact_k(w)$ is reached.
\refstepcounter{definition}

\begin{proof}
We will show this result by induction.
For $k=1$, the equivalence is:
\[
\ScatFact_1(w)=\Sigma\mbox{ iff }\{\ta\tb,\tb\ta\}\cap\ScatFact_2(w)\neq 
\emptyset.
\]
If both $\ta$ and $\tb$ are scattered factors of $w$, $\ta\tb$ or $\tb\ta$ has 
to be a factor and thus a scattered factor of $w$. On the other hand if $w$ has 
$\ta\tb$ or $\tb\ta$ as a scattered factor, it has $\ta$ and $\tb$ as scattered 
factors. 

Assume now that the equivalence holds for an arbitrary but fixed $k-1\in\N$. We will show it holds for $k$. 

For the $\Leftarrow$-direction consider 
$u\in\{\ta\tb,\tb\ta\}^k\cap\ScatFact_{2k}(u)$. Thus, 
$u\in\{\ta\tb,\tb\ta\}^{k-1}\{\ta\tb,\tb\ta\}$ and hence there exists 
$u'\in\{\ta\tb,\tb\ta\}^{k-1}$ with $u\in u'\{\ta\tb,\tb\ta\}$. By induction 
we have
$\ScatFact_{k-1}(u')=\Sigma^{k-1}$.  For any $x\in \Sigma^{k}$ exists 
$x'\in\Sigma^{k-1}$ 
with $x\in x'\{\ta,\tb\}$. This implies that there exist 
$a_0,\dots$, $a_{k-1}\in\Sigma^{\ast}$ with $u'=a_0x'[1]a_1\dots 
x'[k-1]a_{k-1}$ since $x'\in\ScatFact_{k-1}(u')$.
By 
\[
u\in a_0x'[1]a_1\dots x'[k-1]a_{k-1}\{\ta\tb,\tb\ta\} 
\]
it follows in both cases, namely $x=x'\ta$ or $x=x'\tb$, that 
$x\in\ScatFact_k(w)$.
This proves the inclusion $\Sigma^{k}\subseteq\ScatFact_k(w)$.
By $\ScatFact_k(w)\subseteq\Sigma^k$ the first direction is proven.

\medskip

For the $\Rightarrow$-direction assume $\ScatFact_{k}(w)=\Sigma^{k}$. Assume w.l.o.g. $w[|w|]=\ta$. Choose $x,y\in\Sigma^{\ast}$ 
with $w=xy$ and $x[|x|]=\tb$, and $y\in\ta^{\ast}$. As $\Sigma^{k-1}\tb\subset \Sigma^{k}$, it follows that  $\Sigma^{k-1}b\subseteq \ScatFact_{k}(x)$. Clearly, this means that 
 $\Sigma^{k-1}\subseteq \ScatFact_{k-1}(x[1..|x|-1])$. By the induction hypothesis, we get that $\{\ta\tb,\tb\ta\}^{k-1}\cap \ScatFact_{2(k-1)}(x[1..|x|-1]) \neq \emptyset$. Thus, 
 $\{\ta\tb,\tb\ta\}^{k-1}x[|x|]\ta \cap \ScatFact_{2k}(w[1..|x|+1]) \neq \emptyset$, because $w[1..|x|+1]=x[1..|x|]\tb$. Hence, $\{\ta\tb,\tb\ta\}^{k-1}\tb\ta \cap \ScatFact_{2k}(w) \neq \emptyset$. The conclusion follows.
\qed
\end{proof}

\setcounter{proposition}{\value{definition}}
\begin{proposition}\label{lem:full}\label{lemfull}
For $k\in\mathbb{N}_{\geq 3}$ and $w\in\Sigma_{\ed}^{2k}$ we have 
$w\in\{\ta\tb,\tb\ta\}^k$ if and only if 
$\ScatFact_k(w)=\Sigma^k$.
\end{proposition}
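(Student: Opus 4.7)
The plan is to derive Proposition~\ref{lemfull} as an essentially immediate corollary of Theorem~\ref{arbitraryw}. The only extra ingredient needed is the trivial observation that when $|w|=2k$, the set $\ScatFact_{2k}(w)$ consists of exactly one element, namely $w$ itself (one cannot delete any letter and still obtain a word of length $2k$). So the nonemptiness condition $\{\ta\tb,\tb\ta\}^k \cap \ScatFact_{2k}(w) \neq \emptyset$ collapses to the membership condition $w \in \{\ta\tb,\tb\ta\}^k$.

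For the forward direction I would argue: if $w \in \{\ta\tb,\tb\ta\}^k$, then $w$ itself is a length-$2k$ scattered factor of $w$ lying in $\{\ta\tb,\tb\ta\}^k$, so the intersection in Theorem~\ref{arbitraryw} is nonempty, and hence $\ScatFact_k(w) = \Sigma^k$. For the converse, assume $\ScatFact_k(w) = \Sigma^k$; Theorem~\ref{arbitraryw} gives some $u \in \{\ta\tb,\tb\ta\}^k$ that is a scattered factor of $w$ of length $2k = |w|$, and since the only length-$|w|$ scattered factor of $w$ is $w$ itself, this forces $w = u \in \{\ta\tb,\tb\ta\}^k$.

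Note that the assumption $w \in \Sigma_{\ed}^{2k}$ (strict balance) is used only to match the hypotheses of the statement; strict balance of any element of $\{\ta\tb,\tb\ta\}^k$ is automatic since each block contributes exactly one $\ta$ and one $\tb$. There is no genuine obstacle here: the proof is a one-line reduction to Theorem~\ref{arbitraryw}, and the only thing to watch is that $|w| = 2k$ exactly, so that $\ScatFact_{2k}(w) = \{w\}$. The hypothesis $k \geq 3$ is inherited from Theorem~\ref{arbitraryw} and plays no additional role in the reduction.
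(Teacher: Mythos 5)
Your proposal is correct and matches the paper's own proof: both directions are obtained by reducing to Theorem~\ref{arbitraryw} via the observation that the only scattered factor of $w$ of length $|w|=2k$ is $w$ itself, so the intersection condition collapses to $w\in\{\ta\tb,\tb\ta\}^k$. Your write-up is in fact slightly more careful than the paper's, which contains a small typo ($\ScatFact_k$ where $\ScatFact_{2k}$ is meant) in the converse direction.
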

\refstepcounter{definition}

\begin{proof}
If $w\in\{\ta\tb,\tb\ta\}^k$, then 
$\{\ta\tb,\tb\ta\}^k\cap\ScatFact_{2k}(w)\neq\emptyset$ and the claim follows 
by Theorem~\ref{arbitraryw}. On the other hand if 
$\ScatFact_k(w)=\Sigma^k$ then 
$\{\ta\tb,\tb\ta\}^k\cap\ScatFact_k(w)\neq\emptyset$ and since $|w|=2k$ we get 
$w\in\{\ta\tb,\tb\ta\}^k$. \qed
\end{proof}

To see why from $w\in\{\ta\tb,\tb\ta\}^k$ it follows that $\ScatFact_k(w)=\Sigma^k$, note that, by definition, a word  $w\in\{\ta\tb,\tb\ta\}^k$ is just a concatenation of $k$ blocks from 
$\{\ta\tb,\tb\ta\}$. To construct the scattered factors of $w$, we can simply select from each block either the $\ta$ or the $\tb$. The resulting output is a word of length $k$, where in each position we could choose freely the letter. Consequently, we can produce all words in $\Sigma^k$ in this way. The other implication follows by induction.

Generalising Proposition~\ref{lemfull} for weakly-$c$-balanced words requires a more sophisticated approach. A 
generalisation would be to consider $w\in \{\ta\tb,\tb\ta\}^{k-c}\ta^c$. By Theorem~\ref{arbitraryw} we have $\ScatFact_{k-c}(w)=\Sigma^{k-c}$. But the 
size of $\ScatFact_{k-i}(w)$ for $i\in[c]_0$ depends on the specific choice 
of $w$. To see why, consider the  words $w_1= \tb\ta\ta\tb\tb\ta$ and 
$w_2=(\tb\ta)^3$. Then by Proposition~\ref{lemfull},
$|\ScatFact_3(w_1)|=8=|\ScatFact_3(w_2)|$. However, when we append an $\ta$ to the end of both $w_1$ and $w_2	$, we see that in fact
$|\ScatFact_4( w_1 \ta)|=11\neq12=|\ScatFact_4( w_2\ta )|$. 
The main difference 
between weakly-$0$-balanced and weakly-$c$-balanced words for $c>0$, regarding the maximum 
cardinality of the scattered factors-sets, comes from the role played by the factors $\ta^2$ and $\tb^2$ occurring in~$w$.  

In the remaining 
part of this section 
we present a series of results for weakly-$c$-balanced words. Intuitively, the words with many alternations between $\ta$ and $\tb$ have more distinct scattered factors. So, we will focus on such words mainly. 
Our first result is a direct consequence from Theorem~\ref{arbitraryw}. The second result concerns words avoiding $\ta^2$ and $\tb^2$ gives a method to identify efficiently the $\ell$-spectra of words which are prefixes of 
$(\ta\tb)^\omega$, for all $\ell$. Finally, we are able to derive a way to efficiently enumerate (and count) the scattered factors of length $k$ of $(\ta\tb)^{k-c}\ta^{c}$. 

\setcounter{corollary}{\value{definition}}
\begin{corollary}\label{corCmax1}
For $k\in\N_{\geq 3}$, $c\in[k]_0$, and 
$w\in\Sigma^{2k-c}$ weakly-$c$-balanced, the 
cardinality of $\ScatFact_{k-c}(w)$ is exactly $2^{k-c}$ if and only if  
$\ScatFact_{2(k-c)}(w)\cap \{\ta\tb,\tb\ta\}^{k-c}\neq\emptyset$.
\end{corollary}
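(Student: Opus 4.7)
The plan is to reduce the statement to an immediate application of Theorem~\ref{arbitraryw}. First, I would record the trivial observation that any $\ell$-spectrum of a word satisfies $\ScatFact_{\ell}(v)\subseteq\Sigma^{\ell}$, and since $|\Sigma^{\ell}|=2^{\ell}$, the cardinality condition $|\ScatFact_{k-c}(w)|=2^{k-c}$ is equivalent to the set equality $\ScatFact_{k-c}(w)=\Sigma^{k-c}$.

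Once this reformulation is in place, I would invoke Theorem~\ref{arbitraryw} with its integer parameter instantiated as $k-c$: it asserts precisely that $\ScatFact_{k-c}(w)=\Sigma^{k-c}$ if and only if $\{\ta\tb,\tb\ta\}^{k-c}\cap\ScatFact_{2(k-c)}(w)\neq\emptyset$. Chaining the two equivalences gives the corollary. For this to make sense one should check the arithmetic that $|w|=2k-c\geq 2(k-c)$, which reduces to $c\geq 0$ and is granted by $c\in[k]_0$; the boundary case $c=k$ is vacuously true, since then both sides of the equivalence reduce to a statement about the empty word.

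Because Theorem~\ref{arbitraryw} has already been proved for arbitrary words, the $c$-balancedness of $w$ plays no active role in the deduction itself. It only fixes the length $|w|=2k-c$ and situates the result within the section's discussion of $c$-balanced words. Consequently, there is no genuine obstacle here: the main step is recognising that the corollary is obtained from the earlier theorem by a straightforward change of variable, combined with the observation that hitting the maximum cardinality of a $k$-spectrum over $\Sigma$ coincides with the spectrum being all of $\Sigma^{k}$.
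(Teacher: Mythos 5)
Your proposal is correct and matches the paper's own argument, which likewise derives the corollary directly from Theorem~\ref{arbitraryw} after identifying $|\ScatFact_{k-c}(w)|=2^{k-c}$ with $\ScatFact_{k-c}(w)=\Sigma^{k-c}$. The paper states this in one line; your version merely spells out the same reduction.
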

\refstepcounter{definition}

\begin{proof}
The claim follows directly by Theorem~\ref{arbitraryw}.\qed
\end{proof}

As announced, we further focus our investigation on the words $w=(\ta\tb)^{k-c}\ta^c$. 
By Theorem~\ref{arbitraryw} we have $|\ScatFact_i(w)|=\Sigma^i$ for all $i\in[k-c]_0$. For all $i$ with $k-c<i\leq 
k$, a more sophisticated counting argument is needed. Intuitively, a scattered factor of length $i$ of $(\ta\tb)^{k-c}\ta^c$ consists of a part that is a scattered factor (of arbitrary length) of $(\ta\tb)^{k-c}$ followed by a (possibly empty) suffix of $\ta$s.
Thus, a full description of the $\ell$-spectra of words that occur as prefixes of $(\ta\tb)^\omega$, for all appropriate $\ell$, is useful. To this end, we introduce the 
notion of a deleting sequence: for a word $w$ and a scattered factor $u$ of $w$ the deleting sequence contains (in a strictly increasing order) $w$'s positions 
that have to be deleted to obtain $u$.

\begin{definition}\label{del-seq}
For $w\in\Sigma^{\ast}$, $\sigma=(s_1,\dots,s_\ell)\in[|w|]^\ell$, with $\ell\leq |w|$ and 
$s_i<s_{i+1}$ for all $i\in[\ell-1]$, is a {\em deleting sequence}.
The scattered factor~$u_\sigma$ associated to a deleting sequence $\sigma$ is $u_\sigma=u_1\dots u_{\ell+1}$, where $u_1= w[1..s_1-1]$, $u_{\ell+1}=w[s_\ell+1..|w|]$, and $u_i=w[s_{i-1}+1..{s_{i}}-1]$ for $2\leq i\leq \ell$.
Two sequences $\sigma, \sigma'$ with $u_\sigma=u_{\sigma'}$ are called~{\em equivalent}.
\end{definition}

For the word $w=\ta\tb\tb\ta\ta$ and $\sigma=(1,3,4)$ the associated scattered 
factor is $u_\sigma=\tb\ta$. Since $\tb\ta$ can also be generated by $(1,3,5)$, 
$(1,2,4)$ and $(1,2,5)$, these sequences are equivalent.

In order to determine the $\ell$-spectrum of a word $w\in\Sigma^{n}$ for 
$\ell,n\in\N$, we can determine how many equivalence classes does the equivalence defined above have, for 
sequences of length $k=n-\ell$. The following three lemmas characterise 
the equivalence of deleting sequences.

\setcounter{lemma}{\value{definition}}
\begin{lemma}\label{reduce}
Let $w\in\Sigma^n$ be a prefix of 
$(\ta\tb)^\omega$. Let $\sigma=(s_1,\ldots,s_k)$ be a deleting sequence for $w$ 
such that there exists $j\geq 2$ with $s_{j-1}<s_j-1$ and $s_j+1=s_{j+1}$. Then 
 $\sigma$ is equivalent 
$\sigma'=(s_1,\ldots,s_{j-1},s_j-1,s_{j+1}-1,s_{j+2},\ldots s_k)$, i.e., $\sigma'$ is the 
sequence $\sigma$ where both $s_j$ and $s_{j+1}$ were decreased~by~$1$.
\end{lemma}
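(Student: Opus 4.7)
The plan is to show $u_\sigma = u_{\sigma'}$ by a local comparison of the positions of $w$ that each sequence retains. Outside the window of indices $\{s_{j-1}+1,\dots,s_{j+2}-1\}$ (or $\{s_{j-1}+1,\dots,|w|\}$ in the edge case $j+1=k$) the two sequences prescribe the same deletions, so the only thing to check is what letter is contributed to $u_\sigma$ and $u_{\sigma'}$ by indices in that window.

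First I would verify that $\sigma'$ is a legitimate deleting sequence. Strict monotonicity $s_{j-1}<s_j-1<s_{j+1}-1<s_{j+2}$ follows from the hypothesis $s_{j-1}<s_j-1$ together with $s_{j+1}-1=s_j<s_{j+1}\leq s_{j+2}$, and the last inequality is simply dropped if $j+1=k$. Positivity $s_j-1\geq 1$ is immediate from $s_{j-1}\geq 1$ combined with $s_{j-1}<s_j-1$.

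Next I would read off the surviving positions of $w$ near index $s_j$. Under $\sigma$ the indices $s_j$ and $s_j+1=s_{j+1}$ are deleted while $s_j-1$ is kept; under $\sigma'$ the indices $s_j-1$ and $s_j=s_{j+1}-1$ are deleted while $s_j+1$ is kept. All other indices of $w$ are either kept by both sequences or deleted by both. Hence $u_\sigma$ and $u_{\sigma'}$ arise by reading the same surviving indices of $w$ in the same order, with a single exception: one slot of $u_\sigma$ carries the letter $w[s_j-1]$, whereas the corresponding slot of $u_{\sigma'}$ carries $w[s_j+1]$.

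The decisive point is that $w$ is a prefix of $(\ta\tb)^\omega$, so letters at positions of the same parity coincide. Since $s_j-1$ and $s_j+1$ have the same parity, $w[s_j-1]=w[s_j+1]$, which forces $u_\sigma=u_{\sigma'}$. The only mildly delicate aspect of the argument is the index bookkeeping — confirming that the described local ``swap'' of the kept index really produces the same word position-by-position and that the boundary case $j+1=k$ causes no issue — but beyond this, the proof reduces to the single parity observation.
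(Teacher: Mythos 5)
Your proof is correct and follows essentially the same route as the paper's: both identify that the only difference between the surviving positions is that $\sigma$ keeps $w[s_j-1]$ where $\sigma'$ keeps $w[s_j+1]$, and both conclude via the alternating structure of $w$ that these letters coincide. Your version is slightly more explicit about verifying that $\sigma'$ is a valid deleting sequence and about the position-by-position bookkeeping, but the key idea is identical.
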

\refstepcounter{definition}

\begin{proof}
Since $s_{j-1}<s_j-1$, the factor $u_\sigma$ contains the letter $w[s_j-1]$. If 
$w[s_j]=\ta$ then $w[s_{j+1}]=w[s_j+1]=\tb$ and $w[s_j-1]=\tb$. Clearly, when 
deleting $w[s_j-1]$ and 
$w[s_{j}]$ according to the sequence $\sigma'$, the $\tb$ that was corresponding 
to $w[s_j-1]$ will be replaced by a letter $\tb$ corresponding to $w[s_{j+1}]$, 
which is not deleted. So, in the end, $u_{\sigma'}=u_\sigma$. The case 
$w[s_j]=\tb$ is analogous. \qed
\end{proof}

\setcounter{lemma}{\value{definition}}
\begin{lemma}\label{normal-form}
Let $w\in\Sigma^n$ be a prefix of 
$(\ta\tb)^\omega$. Let $\sigma=(s_1,\ldots,s_k)$ be a deleting sequence for $w$. 
Then there exists an integer $j\geq 0$ such that $\sigma$ is equivalent to the 
deleting sequence $(1,2,\ldots,j,s'_{j+1},\ldots,s'_k)$, where $s'_{j+1}>j+1$ 
and $s'_i>s'_{i-1}+1$, for all $j<i\leq k$. Moreover, $j\geq 1$ if and only if $\sigma$ contained two consecutive positions or $\sigma$ started with $1$.
\end{lemma}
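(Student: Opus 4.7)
The plan is to use strong induction on the quantity $T(\sigma)=\sum_{i=1}^{k}s_i$. The base case is $T(\sigma)=k(k+1)/2$, which forces $\sigma=(1,2,\ldots,k)$, already in the claimed form with $j=k$.

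For the inductive step, let $j^{\ast}\in[k]_0$ be the largest value with $s_i=i$ for all $i\leq j^{\ast}$; then $j^{\ast}<k$ and $s_{j^{\ast}+1}>j^{\ast}+1$. If in addition $s_i>s_{i-1}+1$ for every $i\in\{j^{\ast}+1,\ldots,k\}$, then $\sigma$ is already in the claimed form with $j=j^{\ast}$. Otherwise, let $i_0$ be the smallest index with $i_0\geq j^{\ast}+2$ and $s_{i_0}=s_{i_0-1}+1$. A short case analysis rules out $i_0=j^{\ast}+1$ and shows that in every situation except $j^{\ast}=0$ and $i_0=2$ one has $s_{i_0-1}>s_{i_0-2}+1$: by minimality of $i_0$ when $i_0>j^{\ast}+2$, and from $s_{j^{\ast}+1}>j^{\ast}+1=s_{j^{\ast}}+1$ when $i_0=j^{\ast}+2$ with $j^{\ast}\geq 1$. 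In these situations Lemma~\ref{reduce} applies at index $i_0-1\geq 2$ and yields an equivalent deleting sequence with strictly smaller $T$, so the induction hypothesis applies.

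The main obstacle is the boundary case $j^{\ast}=0$ and $i_0=2$, i.e.\ $s_1>1$ and $s_2=s_1+1$, which is \emph{not} covered by Lemma~\ref{reduce} (whose hypothesis requires $j\geq 2$). I handle it by establishing an \emph{initial-pair shift} directly: $\sigma$ is equivalent to $\sigma'=(s_1-1,s_1,s_3,\ldots,s_k)$. Indeed, deleting the pair $s_1,s_1+1$ from $w$ produces the string $w[1..s_1-1]\cdot w[s_1+2..n]$, while deleting $s_1-1,s_1$ produces $w[1..s_1-2]\cdot w[s_1+1..n]$; these agree at every position except $s_1-1$, where they display $w[s_1-1]$ and $w[s_1+1]$ respectively. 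Since $w$ is a prefix of $(\ta\tb)^{\omega}$, letters at positions of equal parity are identical, so $w[s_1-1]=w[s_1+1]$ and the two strings coincide. Applying the remaining deletions at positions $s_3,\ldots,s_k$ to either string yields the same scattered factor, so $\sigma\sim\sigma'$; since $T(\sigma')=T(\sigma)-2$, the induction hypothesis completes this case as well.

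For the ``moreover'', if $s_1>1$ and $\sigma$ contains no consecutive positions, then $\sigma$ already has the claimed form with $j=0$ and no reduction takes place. Conversely, if $s_1=1$ then the produced normal form must begin with $1$, so $j\geq 1$; and if $\sigma$ has a consecutive pair, every reduction above either preserves a consecutive pair in the new sequence or makes its first entry equal to $1$, so the leftmost such pair is pushed left step by step until it reaches position $1$ and is absorbed into the initial block $(1,2,\ldots,j)$, again forcing $j\geq 1$.
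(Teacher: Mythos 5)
Your proof is correct, and it follows the same overall strategy as the paper: repeatedly shift each consecutive pair $(t,t+1)$ leftward to $(t-1,t)$ via Lemma~\ref{reduce} until the sequence stabilises in the claimed normal form (the paper phrases this as an iteration terminating in $O(n^2)$ steps, you as a strong induction on $\sum_i s_i$; these are the same argument). The substantive difference is that you noticed and repaired a genuine boundary issue that the paper's proof glosses over: Lemma~\ref{reduce} only applies at an index $j\geq 2$, i.e.\ it needs a predecessor entry $s_{j-1}$ in front of the consecutive pair, so a sequence such as $(3,4,10)$ (leading consecutive pair, $s_1>1$) is stuck under the paper's transformation rule yet is not in normal form. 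Your \emph{initial-pair shift}, proved directly from the parity structure of a prefix of $(\ta\tb)^\omega$ (letters at positions of equal parity coincide, so $w[s_1-1]=w[s_1+1]$), is exactly the missing step, and your verification of it is sound. You also argue the ``moreover'' clause explicitly (reductions preserve the existence of a consecutive pair, and $s_1=1$ is never disturbed), which the paper leaves entirely implicit. So: same route, but your write-up is the more complete of the two.
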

\refstepcounter{definition}

\begin{proof}
Let $\sigma_0=\sigma$. For $i\geq 0$, we iteratively transform $\sigma_i$ into 
$\sigma_{i+1}$ as follows: if $\sigma_i$ contains on consecutive positions the 
numbers $g,t,t+1,h$, such that $g<t-1$ and $h>t+2$, we replace them by 
$g,t-1,t,h$ and obtain the sequnce $\sigma_{i+1}$. By Lemma \ref{reduce}, 
$\sigma_i$ is equivalent to $\sigma_{i+1}$. It is clear that in $O(n^2)$ steps 
we will reach a sequence $\sigma_\ell$ which cannot be transformed anymore. We 
take $\sigma'=\sigma_\ell$ and it is immediate that it will have the required 
form.\qed
\end{proof}

\setcounter{lemma}{\value{definition}}
\begin{lemma}\label{unique-normal-form}
Let $w\in\Sigma^n$ be a prefix of 
$(\ta\tb)^\omega$. Let $\sigma_1=(1,2,\ldots,j_1,s'_{j_1+1}$, 
$\ldots,s'_k)$, 
where $s'_{j_1+1}>j_1+1$ and $s'_i>s'_{i-1}+1$, for all $j_1<i\leq k$, and 
$\sigma_2=(1,2,\ldots,j_2,s''_{j_2+1},\ldots,s''_k)$, where $s''_{j_2+1}>j_2+1$ 
and $s''_i>s''_{i-1}+1$, for all $j_2<i\leq k$. If $\sigma_1\neq \sigma_2$ then 
$\sigma_1$ and $\sigma_2$ are not equivalent (i.e., $u_{\sigma_1}\neq 
u_{\sigma_2}$).
\end{lemma}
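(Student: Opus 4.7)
The plan is to argue by strong induction on $n = |w|$, focusing on which of $\sigma_1, \sigma_2$ contains the rightmost position $n$. The base cases $n \leq 1$ are vacuous, since for such short words there is at most one normal-form deleting sequence of each length.

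The key structural observation I would establish first is the following: if $\sigma = (1, \ldots, j, s'_{j+1}, \ldots, s'_k)$ is in normal form for $w$ with $s'_k = n$ and $\sigma \neq (1, \ldots, n)$, then $n - 1 \notin \sigma$, so $u_\sigma$ ends with the letter $w[n-1]$. Indeed, $\sigma \neq (1, \ldots, n)$ forces $k > j$ (otherwise $\sigma = (1, \ldots, k)$ with $k = n$). Then the predecessor of $n$ in $\sigma$ is strictly less than $n - 1$: if $k - 1 > j$, this is the condition $s'_k > s'_{k-1} + 1$, and if $k - 1 = j$, the condition $s'_{j+1} > j + 1$ together with $s'_{j+1} = n$ forces $j < n - 1$. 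Combined with the bound $j \leq n - 2$ on the initial-block elements, this yields $n - 1 \notin \sigma$.

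For the inductive step, I would split into three cases. If neither sequence contains $n$, both scattered factors end with $w[n]$; stripping this letter leaves distinct normal-form sequences for $w[1..n-1]$, which is also a prefix of $(\ta\tb)^\omega$, and the induction hypothesis applies. If both sequences contain $n$, it must be their last element, and removing it yields distinct normal-form sequences $\sigma_1', \sigma_2'$ for $w[1..n-1]$ whose associated scattered factors coincide with $u_{\sigma_1}$ and $u_{\sigma_2}$; the degenerate subcase $\sigma_1 = (1, \ldots, n)$ cannot occur, since then $|\sigma_2| = k = n$ forces $\sigma_2 = (1,\ldots,n) = \sigma_1$, contradicting distinctness. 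In the remaining case, assume without loss of generality that $n \in \sigma_1$ but $n \notin \sigma_2$: then $u_{\sigma_2}$ ends with $w[n]$, while by the key observation $u_{\sigma_1}$ ends with $w[n-1]$, and since $w$ is a prefix of $(\ta\tb)^\omega$ these two letters differ, giving $u_{\sigma_1} \neq u_{\sigma_2}$.

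I expect the only real technical subtlety to be the careful verification of the key structural observation, particularly handling the edge case $k = j + 1$, where the \emph{remaining} portion of $\sigma$ consists solely of the final element $n$ and the ``predecessor'' of $n$ lies in the initial block. Once that observation is in hand, the induction itself is routine bookkeeping and no heavier combinatorial machinery is required.
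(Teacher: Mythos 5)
Your proof is correct, but it takes a genuinely different route from the paper's. The paper argues directly: it splits into the cases $j_1=j_2$ and $j_1\neq j_2$, locates the \emph{first} index at which the two sequences differ (using a parity argument on $j_1,j_2$ and a convention $s''_{k+1}=0$ to handle the second case), and shows that immediately after the common prefix the two scattered factors exhibit the letters $w[s'_\ell]$ and $w[s'_\ell+1]$, which differ by alternation. You instead induct on $|w|$ and peel off the \emph{last} position $n$, reducing everything to the single structural fact that a normal-form sequence containing $n$ (other than $(1,\ldots,n)$) must omit $n-1$, so its factor ends in $w[n-1]$ rather than $w[n]$. Both arguments ultimately rest on the same alternation property $w[i]\neq w[i+1]$. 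Your induction is arguably cleaner — it avoids the paper's case split on $j_1$ versus $j_2$ and its parity bookkeeping — at the cost of being less explicit about \emph{where} the two factors first disagree, which the paper's argument pinpoints exactly. One small detail to add: in your third case you invoke the key observation for $\sigma_1$, which requires $\sigma_1\neq(1,\ldots,n)$; this holds because otherwise $k=n$ and $\sigma_2$ would be a strictly increasing length-$n$ sequence avoiding $n$, which is impossible. You note the analogous degeneracy in the "both contain $n$" case but not here; it is immediate, but should be said.
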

\refstepcounter{definition}

\begin{proof}
We first consider the case $j_1=j_2$. Let $\ell$ to be minimum such that 
$s'_\ell\neq s''_\ell$. We can assume without losing generality that  $s'_\ell< 
s''_\ell$. Then $u_{\sigma_1}$ and $u_{\sigma_2}$ share the same prefix of 
length $t=(s'_\ell-1)-(\ell-1)$. This prefix ends with $w[s'_\ell-1]$ and is 
followed by $w[s'_\ell+1]$ in $u_{\sigma_1}$ and, respectively, by $w[s'_\ell]$ 
in $u_{\sigma_2}$. But $w[s'_\ell+1]\neq w[s'_\ell]$, so $u_{\sigma_1}\neq 
u_{\sigma_2}$.

Further, we consider the case when $j_1<j_2$ (the case $j_2<j_1$ is symmetric); 
assume, as a convention, that $s''_{k+1}=0$ and let $d=j_2-j_1$. Clearly, $j_1$ 
and $j_2$ must have the same parity, or $u_{\sigma_1}$ and $u_{\sigma_2}$ would 
start with different letters, so they would not be equal. Let $\ell$ to be 
minimum integer such that $s'_\ell-j_1\neq s''_{\ell+d}-j_2$; because 
$s''_{k+1}=0$ by convention, we have $\ell\leq k$. If both $\ell$ and $\ell+d$ 
are at most $k$, then we get similarly to the case $j_1=j_2$ that 
$u_{\sigma_1}\neq u_{\sigma_2}$. In the case when $\ell\leq k<\ell+d$, then, by 
length reasons, all positions $j>s_\ell$ (so, including $s_\ell+1$) in $w$ 
should belong to  $\sigma_1$, a contradiction. This concludes our proof. \qed
\end{proof}

Lemmas \ref{reduce}, \ref{normal-form}, and \ref{unique-normal-form} show that the representatives of the equivalence classes w.r.t. the equivalence relation between deleting sequences, introduced in Definition \ref{del-seq}, are the sequences $(1,2,\ldots,j,s'_{j+1},\ldots,s'_k)$, where $s'_{j+1}>j+1$ and $s'_i>s'_{i-1}+1$, for all $j<i\leq k$. For a fixed $j\geq 1$, the number of such sequences is ${{(n-j-1) -(k-j)+1}\choose{k-j}} = {{n-k}\choose{k-j}}$. For $j=0$, we have ${{(n-1)-k+1}\choose {k}}={{n-k}\choose {k}}$ nonequivalent sequences (note that none starts with $1$, as those were counted for $j=1$ already). In total, we have, for a word $w$ of length $n$, which is a prefix of $(\ta\tb)^\omega$, exactly $ \sum_{j\in [k]_0}{{n-k}\choose{k-j}}$ nonequivalent deleting sequences of length $k$, so $ \sum_{j\in [k]_0}{{n-k}\choose{k-j}}$ different scattered factors of length $n-k$. In the above formula, we assume that ${a\choose b}=0$ when $a<b$.

Moreover, the distinct scattered factors of length $\ell=n-k$ of $w$ can be obtained efficiently as follows. For $j$ from $0$ to $\ell$, delete the first $j$ letters of $w$. For all choices of $\ell-j$ positions in $w[j+1..n]$, such that each two of these positions are not consecutive, delete the letters on the respective positions. The resulted word is a member of $\ScatFact_\ell(w)$, and we never obtain the same word twice by this procedure. 
The next theorem follows from the above.

\setcounter{theorem}{\value{definition}}
\begin{theorem}\label{(ab)^k}
Let $w$ be a word of length $n$ which is a prefix of $(\ta\tb)^\omega$. Then $|\ScatFact_{\ell}(w)|=\sum_{j\in [n-\ell]_0}{{\ell}\choose{n-\ell-j}}$.
\end{theorem}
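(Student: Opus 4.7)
The plan is to interpret a length-$\ell$ scattered factor of $w$ as the result of applying a deleting sequence of length $k := n-\ell$ to $w$, and then to count equivalence classes of such sequences. By Lemma~\ref{normal-form} every length-$k$ deleting sequence is equivalent to a sequence in the normal form $(1,2,\ldots,j,s'_{j+1},\ldots,s'_k)$, where $s'_{j+1}>j+1$ if $j\geq 1$, $s'_1>1$ if $j=0$, and $s'_i>s'_{i-1}+1$ for all $j<i\leq k$; by Lemma~\ref{unique-normal-form} two distinct normal forms lie in distinct equivalence classes. Hence $|\ScatFact_\ell(w)|$ equals the total number of normal-form sequences, which I would count by partitioning according to the initial-run length $j$.

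For each fixed $j\in[k]_0$ I would count the normal forms by enumerating the tail $(s'_{j+1},\ldots,s'_k)$: these are $k-j$ pairwise non-consecutive elements of $\{j+2,\ldots,n\}$. The standard shift bijection $s'_{j+i}\mapsto s'_{j+i}-(i-1)$ turns them into an arbitrary $(k-j)$-subset of an $(n-k)$-element set, giving exactly $\binom{n-k}{k-j}$ possibilities. Summing over $j$ produces $\sum_{j\in[k]_0}\binom{n-k}{k-j}$ scattered factors, and substituting $k=n-\ell$ rewrites this as $\sum_{j\in[n-\ell]_0}\binom{\ell}{n-\ell-j}$, matching the theorem.

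The only point that needs care, rather than being a real obstacle, is ensuring the partition by $j$ is genuine, i.e. that no sequence is counted under two different values of $j$. This is immediate: the value of $j$ is intrinsic to the normal form as the length of its maximal initial run $1,2,\ldots,j$, and the additional constraint $s'_{j+1}>j+1$ (respectively $s'_1>1$ when $j=0$) forbids the run from being extended further. Thus the counts for different $j$ are disjoint and the sum gives the exact cardinality.
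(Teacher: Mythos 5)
Your proof is correct and follows essentially the same route as the paper: the paper likewise reduces the count to enumerating the normal-form deleting sequences from Lemmas~\ref{normal-form} and~\ref{unique-normal-form}, partitions by the initial-run length $j$, and obtains $\binom{n-k}{k-j}$ non-consecutive tails for each $j$ (with the same remark that sequences starting with $1$ are accounted for under $j\geq 1$, so the cases are disjoint). The only cosmetic difference is that you make the shift bijection explicit where the paper just states the binomial count directly.
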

\refstepcounter{definition}

A straightforward consequence of the above theorem is that, if $\ell \leq 
n-\ell$ then $|\ScatFact_{\ell}(w)|=2^\ell$. 
With Theorem~\ref{(ab)^k}, we can now completely characterise the cardinality of the 
$\ell$-spectra of the weakly-$c$-balanced word $(\ta\tb)^{k-c}\ta^c$ for $\ell\leq k$.

\setcounter{theorem}{\value{definition}}
\begin{theorem}\label{maxcbalanced}
Let $w=(\ta\tb)^{k-c}\ta^c$ for $k\in\N$, $c\in[k]_0$. Then, for $i\leq k-c$ we have $|\ScatFact_{i}(w)|=2^i$. For $k\geq i>k-c$ we have $|\ScatFact_{i}(w)| = 1+2^{k-c} + \sum_{j\in [(i+c)-k-1]_0}|\ScatFact_{i-j-1}((\ta\tb)^{k-c-1}\ta)|$. 
\end{theorem}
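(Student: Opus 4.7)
The plan is to handle the two regimes of $i$ separately. For $i \leq k-c$, I would invoke Theorem~\ref{arbitraryw} with the prefix $(\ta\tb)^{k-c}$ of $w$: this prefix trivially lies in $\{\ta\tb,\tb\ta\}^{k-c}\cap\ScatFact_{2(k-c)}((\ta\tb)^{k-c})$, so $\ScatFact_{k-c}((\ta\tb)^{k-c})=\Sigma^{k-c}$. By transitivity of scattered factors we then get $\ScatFact_i((\ta\tb)^{k-c})=\Sigma^i$ for every $i\leq k-c$, and since $(\ta\tb)^{k-c}$ is a prefix of $w$, we conclude $|\ScatFact_i(w)|=2^i$.

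For the range $k-c<i\leq k$ I would proceed by a canonical decomposition of each $u\in\ScatFact_i(w)$. Writing $s$ for the length of the trailing $\ta$-block of $u$, put $j=\min(c,s)$ and $u=u'\ta^j$; intuitively, the $j$ final $\ta$s of $u$ are matched to the $\ta^c$-suffix of $w$, while $u'$ must be extracted from the $(\ta\tb)^{k-c}$-prefix. Three subcases arise: (i) if $j=c$, then $u'$ can be any element of $\ScatFact_{i-c}((\ta\tb)^{k-c})$, and since $i-c\leq k-c$, the consequence of Theorem~\ref{(ab)^k} gives $|\ScatFact_{i-c}((\ta\tb)^{k-c})|=2^{i-c}$; (ii) if $j<c$ and $u'\neq\varepsilon$, then $u'$ ends in $\tb$, and writing $u'=u''\tb$ and matching the final $\tb$ to the last $\tb$ of $(\ta\tb)^{k-c}$ at position $2(k-c)$ of $w$ establishes a bijection with $u''\in\ScatFact_{i-j-1}((\ta\tb)^{k-c-1}\ta)$; (iii) if $j<c$ and $u'=\varepsilon$, then $u=\ta^i$ with $i<c$, contributing a single factor.

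Summing the three subcases yields $|\ScatFact_i(w)| = 2^{i-c} + \sum_{j=0}^{\min(c,i)-1}|\ScatFact_{i-j-1}((\ta\tb)^{k-c-1}\ta)|$, with the $\ta^i$ factor absorbed into $2^{i-c}$ when $i\geq c$ and into a boundary $+1$ when $i<c$. To reshape this into the closed form stated, I would split the sum at $j=(i+c)-k-1$: for $j>(i+c)-k-1$ we have $i-j-1\leq k-c-1$, at which length the consequence of Theorem~\ref{(ab)^k} applied to the $(\ta\tb)^{\omega}$-prefix $(\ta\tb)^{k-c-1}\ta$ gives $|\ScatFact_{i-j-1}((\ta\tb)^{k-c-1}\ta)|=2^{i-j-1}$. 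The resulting truncated geometric progression combines with $2^{i-c}$ (and, in the $i<c$ boundary, absorbs the extra $+1$) to produce the $2^{k-c}$ summand and leave exactly the partial sum $\sum_{j\in[(i+c)-k-1]_0}|\ScatFact_{i-j-1}((\ta\tb)^{k-c-1}\ta)|$ from the statement.

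The main obstacle is the ``only if'' direction of the bijections in subcases (i) and (ii): given any embedding of $u''\tb\ta^j$ (resp.\ of $u'\ta^c$) into $w$, one must be able to re-embed so that the final $\tb$ of $u'$ lands at position $2(k-c)$ (resp.\ the final $c$ $\ta$s land in the $\ta^c$-suffix). This rests on the observation that the $c$ $\ta$s of $\ta^c$ always suffice to absorb any trailing $\ta^j$ with $j\leq c$, so the last $\tb$ of $u'$ may be pushed as far right as possible without loss of generality. The subsequent algebraic rearrangement, together with the careful treatment of the $i<c$ boundary, is routine once the counting in each subcase is settled.
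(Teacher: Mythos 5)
Your decomposition is essentially the paper's own: the paper likewise splits $\ScatFact_i(w)$ into $\ta^i$ plus the words $u\tb\ta^j$ classified by the length $j$ of the trailing $\ta$-run, hosts $u$ in $(\ta\tb)^{k-j-1}\ta$ (for $j\geq c$) or in $(\ta\tb)^{k-c-1}\ta$ (for $j<c$), and evaluates the large-$j$ terms as powers of two via Theorem~\ref{(ab)^k}. Your variant, which caps $j$ at $c$ and absorbs all $j\geq c$ into a single case hosted by $(\ta\tb)^{k-c}$, is a clean equivalent, and your bijection arguments (greedily pushing the trailing $\ta$s into the $\ta^c$-suffix and the last $\tb$ of $u'$ to position $2(k-c)$) are sound.

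The genuine problem is the final step, where you assert that the arithmetic ``produces the $2^{k-c}$ summand'' of the stated closed form. Carried out honestly, your count for $i\geq c$ is $2^{i-c}+\sum_{j=0}^{c-1}|\ScatFact_{i-j-1}((\ta\tb)^{k-c-1}\ta)|$; splitting at $j=(i+c)-k$, the geometric part is $\sum_{m=i-c}^{k-c-1}2^m=2^{k-c}-2^{i-c}$, so the total is $2^{k-c}+\sum_{j\in[(i+c)-k-1]_0}|\ScatFact_{i-j-1}((\ta\tb)^{k-c-1}\ta)|$ --- one \emph{less} than the stated $1+2^{k-c}+\cdots$. This is not a flaw in your counting but an off-by-one in the statement itself: the paper's own proof tallies $1+\sum_{j=i+c-k}^{i-1}2^{i-j-1}+\sum_{j\in[i+c-k-1]_0}(\cdots)$ and then mis-evaluates the geometric sum as $2^{k-c}$ when it equals $2^{k-c}-1$. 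Concretely, for $k=3$, $c=1$, $i=3$, the word $w=\ta\tb\ta\tb\ta$ has $|\ScatFact_3(w)|=7$ (all of $\Sigma^3$ except $\tb^3$), which matches your $2^{k-c}+|\ScatFact_2(\ta\tb\ta)|=4+3$, whereas the theorem's formula gives $1+4+3=8$. You should have noticed that your derivation lands one short of the target and flagged the discrepancy explicitly; as written, the proposal papers over an inconsistency that is in fact an error in the statement rather than in your argument.
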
\refstepcounter{definition}

\begin{proof}
We will need to show the proof for $k\geq i>k-c$, as the other part follows immediately from Theorem \ref{arbitraryw}. 

We give a method to count the scattered factors of $w=(\ta\tb)^{k-c}\ta^c$. To 
begin with, we have the scattered factor $\ta^i$. All the other scattered 
factors must contain a letter $\tb$. Thus, we count separately the scattered 
factors of the form $u\tb\ta^j$, for each $j\in [i-1]_0$. This is equivalent to 
counting in how many ways we can choose $u$. For each such $u$ we will just have 
to append $\tb \ta^j$ at the end to get the desired scattered factors of length. 
Thus, $|u|=i-j-1$. If $j\geq c$ then $u$ should occur as a scattered factor of 
$(\ta\tb)^{k-j-1}\ta$ (in order to be able to append $\tb\ta^j$ at its end and 
still stay as a scattered factor of $w$), while if $j<c$ then $u$ should occur 
as a scattered factor of $(\ta\tb)^{k-c-1}\ta$. In the first case, the length of 
the scattered factor $u$ we want to generate is less than half of the length of 
the word $(ab)^ta$ from which we generate it. So, there are $2^{i-j-1}$ choices 
for $u$. In the second case, if $j\geq (i+c)-k$, again, the length of the 
scattered factor $u$ we want to generate is less than half of the length of the 
word $(ab)^{k-c-1}a$ from which we generate it. So, there are $2^{i-j-1}$ 
choices for $u$ again. Finally, if $j< (i+c)-k$, then there $i-j-1>k-c-1$, and 
we need Theorem \ref{arbitraryw} to generate $u$. There are 
$|\ScatFact_{i-j-1}((\ta\tb)^{k-c-1}\ta)|$ ways to choose $u$ in this case. 
Summing all these up, we get the result from the statement:
\[1 + \sum^{i-1}_{j=i+c-k}2^{i-j-1} + \sum_{j\in[i+c-k-1]_0}\ScatFact_{i-j-1}((\ta\tb)^{k-c-1}\ta) =\]
\[1 + 2^{k-c} +   \sum_{j\in[i+c-k-1]_0}\ScatFact_{i-j-1}((\ta\tb)^{k-c-1}\ta). \]
This concludes our proof.\qed
\end{proof}

As in the case of the scattered factors of prefixes of $(\ta\tb)^\omega$, we have a precise and efficient way to generate the scattered factors of $w=(\ta\tb)^{k-c}\ta^c$. For scattered factors of length $i\leq k-c$ of $w$, we just generate all possible words of length $i$. For greater $i$, on top of $\ta^i$, we generate separately the scattered factors of the form $u\tb\ta^j$, for each $j\in [i-1]_0$. It is clear that, in such a word, $|u|=i-j-1$, and if $j\geq c$ then $u$ must be a scattered factor of $(\ta\tb)^{k-j-1}\ta$, while if $j<c$ then $u$ must be a scattered factor of $(\ta\tb)^{k-c-1}\ta$. If $j\geq (i+c)-k$ then, by Theorem \ref{(ab)^k}, $u$ can take all $2^{i-j-1}$ possible values. For smaller values of $j$, we need to generate $u$ of length $i-j-1$ as a scattered factor of $(\ta\tb)^{k-c-1}\ta$, by the method described after Proposition~\ref{lemfull}. 

Nevertheless, Theorems \ref{(ab)^k} and \ref{maxcbalanced} are useful to see that in order to determine the cardinality of the sets of scattered factors of words consisting of alternating $\ta$s and $\tb$s or, respectively, of $(ab)^{k-c}a^c$, it is not needed to generate these sets effectively.

\section{Cardinalities of $k$-Spectra of Weakly-$0$-Balanced 
Words}\label{sec:card}
In the last section a characterisation for 
the smallest and the largest $k$-spectra of words of a given length are presented (Theorem~\ref{lemsmallest} and Proposition~\ref{lemfull}). In this 
section the part in between will be investigated for weakly-$0$-balanced words (i.e.\,words 
of length $2k$ with $k$ occurrences of each letter). 
As before, we shall assume that $k\in \N_{\geq 3}$. 
In the particular case that $k=3$, we have already proven that the $k$-spectrum with minimal cardinality 
has 
$4$ elements and 
that the maximal cardinality is $8$. Moreover as mentioned in 
Remark~\ref{smallestgap} a 
$k$-spectrum of cardinality $5$ does not exist for weakly-$0$-balanced words of length $2k$. The question remains
if $k$-spectra of cardinalities $6$ and $7$ exist, and if so, for which words.

Before showing that a $k$-spectrum of cardinality $2^k-1$ for 
weakly-$0$-balanced 
words of length $2k$ also exists for all 
$k\in\N_{\geq 3}$, we prove that only scattered factors of the form 
$\tb^{i+1}\ta^{k-i-1}$ for $i\in[k-2]_0$ (up to renaming, reversal) can be 
``taken out'' from the full set of possible scattered factors independently, without additionally requiring the removal of additional scattered factors as well.
In particular, if a word of length $k$ of another form is absent from the set 
of 
scattered factors 
of $w$, then  $|\ScatFact_k(w)|<2^{k}-1$ follows.

\setcounter{lemma}{\value{definition}}
\begin{lemma}\label{singleelement}
If for $w\in\Sigma_{\ed}^{2k}$ there exists $u\notin\ScatFact_k(w)$ with 
$u\notin\{\tb^{i}\ta^{k-i}\mid i\in[k-1]\} \cup \{ \ta^{i}\tb^{k-i}\mid i\in[k-1]\}$, then 
$|\ScatFact_k(w)|<2^k-1$.
\end{lemma}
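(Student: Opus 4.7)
The plan is to deduce the lemma from an equivalence characterising when $\ScatFact_k(w)$ equals all of $\Sigma^k$, and then apply Proposition~\ref{lemfull}.

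First I would establish the auxiliary claim: for any $w\in\Sigma_{\ed}^{2k}$ one has $\ScatFact_k(w)=\Sigma^k$ if and only if both $\ta^i\tb^{k-i}\in\ScatFact_k(w)$ and $\tb^i\ta^{k-i}\in\ScatFact_k(w)$ for every $i\in[k-1]$. The forward direction is trivial. For the converse, the key ingredient is a greedy characterisation of when these two-block words are scattered factors: running the greedy-from-left embedding of $\ta^i\tb^{k-i}$ into $w$ places the leading $\ta^i$ at the first $i$ occurrences of $\ta$ in $w$, ending at some position $r_i$; the remaining $k-i$ letters $\tb$ then fit iff $w[r_i+1..2k]$ contains at least $k-i$ occurrences of $\tb$. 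A short count using $|w|_\ta=|w|_\tb=k$ (the prefix $w[1..r_i]$ has exactly $i$ occurrences of $\ta$ and thus $r_i-i$ occurrences of $\tb$) rewrites this as $r_i\leq 2i$, equivalently $|w[1..2i]|_\ta\geq i$. Symmetrically $\tb^i\ta^{k-i}\in\ScatFact_k(w)$ iff $|w[1..2i]|_\tb\geq i$. Assuming both hold for every $i\in[k-1]$, then together with $|w[1..2i]|=2i$ this forces $|w[1..2i]|_\ta=|w[1..2i]|_\tb=i$; that is, every even-length prefix of $w$ is strictly balanced. Splitting $w$ into consecutive length-two blocks then gives $w\in\{\ta\tb,\tb\ta\}^k$, and Proposition~\ref{lemfull} yields $\ScatFact_k(w)=\Sigma^k$.

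Given this claim, the lemma follows immediately by contrapositive. If $u\notin\ScatFact_k(w)$ is not of the form $\ta^i\tb^{k-i}$ or $\tb^i\ta^{k-i}$ for any $i\in[k-1]$, then $\ScatFact_k(w)\neq\Sigma^k$; the claim then produces some $i\in[k-1]$ and $v\in\{\ta^i\tb^{k-i},\tb^i\ta^{k-i}\}$ with $v\notin\ScatFact_k(w)$. Since $u$ is not of the special form we have $u\neq v$, so there are at least two distinct missing words, giving $|\ScatFact_k(w)|\leq 2^k-2<2^k-1$.

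The main step requiring care is the greedy characterisation of membership of the two-block words $\ta^i\tb^{k-i}$ and $\tb^i\ta^{k-i}$ in $\ScatFact_k(w)$; this amounts to a short counting argument on the prefixes of $w$ that exploits strict balance. Everything else is either a routine reformulation or an immediate application of Proposition~\ref{lemfull}.
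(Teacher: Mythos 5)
Your proof is correct, but it takes a genuinely different route from the paper. The paper argues locally: given a missing $u$ containing a pattern $\tb^r\ta^s\tb^t$ (or the dual $\ta^r\tb^s\ta^t$) with $r,s,t\geq 1$, it perturbs $u$ by shifting one letter between adjacent blocks and shows by a case analysis that one of the neighbouring words $u_1\tb^{r-1}\ta^s\tb^{t+1}u_2$ or $u_1\tb^{r+1}\ta^s\tb^{t-1}u_2$ must also be absent, yielding a second missing element. You instead prove a global characterisation: for $w\in\Sigma_{\ed}^{2k}$, the spectrum $\ScatFact_k(w)$ is all of $\Sigma^k$ if and only if every two-block word $\ta^i\tb^{k-i}$, $\tb^i\ta^{k-i}$ with $i\in[k-1]$ is present, via the greedy-matching criterion $\ta^i\tb^{k-i}\in\ScatFact_k(w)\Leftrightarrow|w[1..2i]|_\ta\geq i$ (and its dual), which under strict balance forces every even-length prefix of $w$ to be strictly balanced and hence $w\in\{\ta\tb,\tb\ta\}^k$, at which point Proposition~\ref{lemfull} applies; the lemma then follows by contrapositive, since a non-full spectrum must be missing some two-block word distinct from $u$. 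Your counting step (the $i$-th occurrence of $\ta$ lies at position $r_i\leq 2i$ iff enough $\tb$'s remain) is sound, and the conclusion that the length-two blocks of $w$ alternate is immediate from consecutive balanced prefixes. What your approach buys is a cleaner, fully rigorous argument (the paper's final case is dispatched only with ``analogously'') together with a reusable strengthening of Proposition~\ref{lemfull}: fullness of the $k$-spectrum of a strictly balanced word of length $2k$ is already witnessed by the $2k-2$ two-block words. What the paper's local argument buys is finer information, namely an explicit second missing word adjacent to $u$, which could in principle support sharper counting arguments; your proof only guarantees that the second missing word is of two-block form.
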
\refstepcounter{definition}

\begin{proof}
Let be $i\in[k-2]_0$. Consider firstly $u=\tb^r\ta^s$ for $r+s=k$ and 
$r\not\in[i]\cup\{k-i,\dots, 
k\}$ and $\Sigma^k\backslash\{u\}\supset\ScatFact_k(w)$ for a word 
$w\in\Sigma_{\ed}^{2k}$.
If $\tb^{r+1}\ta^{s-1}$ is also not a scattered factor of $w$, the claim is 
proven (in this case 
two elements of $\Sigma^k$ are missing in $\ScatFact_k(w)$). Assume 
$\tb^{r+1}\ta^{s-1}\in\ScatFact(w)$. This implies that (possibly intertwined) 
$(s-1)$ occurrences 
of $\ta$ follow $(r+1)$ occurrences of $\tb$. Since $u$ is not a scattered 
factor of $w$, after 
these $(s-1)$ $\ta$s only $\tb$s may occur. If $\tb^{r-1}\ta^{s}\tb$ is not a 
scattered factor, the
claim is again proven and so suppose that it is one. This implies that the 
$(r-1)$ $\tb$s are 
preceded by $\ta$s and not by $\tb$s. This implies that $\tb^{r+1}\ta^{s-1}$ is 
not a scattered 
factor and that contradicts the assumption. Consider now $u=u_1\tb^r\ta^s\tb^t 
u_2$ with $|u|=k$ 
not to be a scattered factor of $w$ for $r,s,t\in\N$. Following the same 
arguments as before, the 
claim is proven if $u_1\tb^{r-1}\ta^s\tb^{t+1}u_2$ is not a scattered factor 
and 
hence it is 
assumed to be one. This implies that exactly $|u_1|_\tb$ $\tb$s occur before 
$\tb^{r-1}$. This 
implies that $u_1\tb^{r+1}\ta^s\tb^{t-1}u_2$ is not a scattered factor of $w$ 
of 
length $k$. 
Analogously it can be proven that scattered factors containing the switch from 
$\ta$ to $\tb$ and 
back to $\ta$ cannot lead to the cardinality $2^k-1$.\qed
\end{proof}

\setcounter{proposition}{\value{definition}}
\begin{proposition}\label{lem2^k-1}
For $k\in\N_{\geq 3}$ and $w \in \Sigma^{2k}_{\ed}$, the set $\ScatFact_k(w)$ has
$2^k-1$ elements if and only if 
$w\in\{(\ta\tb)^{i}\ta^2\tb^2(\ta\tb)^{k-i-2}\mid i\in[k-2]_0\}$ (up 
to renaming and reversal). In particular
$\ScatFact_k(w)=\Sigma^k\backslash\{\tb^{i+1}\ta^{k-i-1}\}$ holds for 
$w=(\ta\tb)^{i}\ta^2\tb^2(\ta\tb)^{k-i-2}$ with $i\in[k-2]_0$.
\end{proposition}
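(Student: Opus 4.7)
The plan combines position arithmetic with Lemma~\ref{singleelement} and the cardinality formula $|\ScatFact_k((\ta\tb)^{k-1})| = 2^k - k - 1$, obtained by applying Theorem~\ref{(ab)^k} to the prefix $(\ta\tb)^{k-1}$ of $(\ta\tb)^\omega$.

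For the $\Leftarrow$ direction, fix $i \in [k-2]_0$ and let $w_i = (\ta\tb)^i\ta^2\tb^2(\ta\tb)^{k-i-2}$. I first verify that $\tb^{i+1}\ta^{k-i-1} \notin \ScatFact_k(w_i)$: the $\tb$s of $w_i$ occupy positions $2, 4, \ldots, 2i, 2i+3, 2i+4, 2i+6, \ldots, 2k$, so any selection of $i+1$ of them must include one at position $\geq 2i+3$, past which only $k-i-2$ $\ta$s remain, one short of the $k-i-1$ required. This yields $|\ScatFact_k(w_i)| \leq 2^k - 1$. For the matching lower bound, observe that deleting the letters of $w_i$ at positions $2i+2$ and $2i+4$ exhibits $(\ta\tb)^{k-1}$ as a scattered factor of $w_i$, so $\ScatFact_k((\ta\tb)^{k-1}) \subseteq \ScatFact_k(w_i)$. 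By Theorem~\ref{(ab)^k}, $|\ScatFact_k((\ta\tb)^{k-1})| = 2^k - k - 1$, and a short greedy check identifies the $k+1$ missing length-$k$ words as precisely $\ta^k$, $\tb^k$, and $\tb^j\ta^{k-j}$ for $j \in [k-1]$. Routine position arithmetic then verifies that $w_i$ realises $\ta^k$, $\tb^k$, and each $\tb^j\ta^{k-j}$ with $j \in [k-1]\setminus\{i+1\}$ as scattered factors (by greedily selecting the first $j$ $\tb$s and the last $k-j$ $\ta$s of $w_i$, which satisfy the inequality $\beta_j < \alpha_{j+1}$ under the explicit position formulas). Summing, $|\ScatFact_k(w_i)| \geq (2^k - k - 1) + k = 2^k - 1$, yielding equality and identifying $\tb^{i+1}\ta^{k-i-1}$ as the unique missing word.

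For the $\Rightarrow$ direction, assume $|\ScatFact_k(w)| = 2^k - 1$. Lemma~\ref{singleelement} supplies (a) the unique missing word is extremal and (b) every non-extremal length-$k$ word is a scattered factor of $w$. Up to the renaming/reversal symmetry, assume the missing word is $\tb^{i+1}\ta^{k-i-1}$ for some $i \in [k-2]_0$. Writing $\alpha_j, \beta_j$ for the positions of the $j$-th $\ta$ and $\tb$ in $w$, (a) translates into (i) $\alpha_j < \beta_{j+1}$ for every $j \in [k-1]$, (ii) $\beta_j < \alpha_{j+1}$ for $j \in [k-1]\setminus\{i+1\}$, and (iii) $\beta_{i+1} > \alpha_{i+2}$. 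An induction on $j \leq i$ using (i) and (ii) forces $w[2j-1..2j] = \ta\tb$, so $w[1..2i] = (\ta\tb)^i$; the middle block $w[2i+1..2i+4]$ is then pinned down to $\ta^2\tb^2$ by combining (iii) with (ii) at $j = i+2$, successively determining $w[2i+1] = w[2i+2] = \ta$ and $w[2i+3] = w[2i+4] = \tb$. For the tail $w[2i+5..2k]$ the extremal inequalities alone do not suffice; one invokes (b) by showing that if $w[2i+5] = \tb$ then the non-extremal word $\tb^{i+1}\ta\tb^{k-i-2}$ fails to be a scattered factor of $w$ (by essentially the same counting as for $\tb^{i+1}\ta^{k-i-1}$), contradicting (b). Iterating forces $w[2i+5..2k] = (\ta\tb)^{k-i-2}$; hence $w = w_i$ and undoing the symmetry finishes the proof.

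The main obstacle is the tail analysis in the $\Rightarrow$ direction, which requires invoking the non-extremal consequence (b) of Lemma~\ref{singleelement} rather than only the extremal inequalities (i)--(iii); the middle-block step is a careful but mechanical case analysis, and the rest is routine position bookkeeping.
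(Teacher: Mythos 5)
Your $\Leftarrow$ direction is correct and takes a genuinely different, and in fact tighter, route than the paper: the paper builds $3\cdot 2^{k-2}$ factors from the block decomposition $(\ta\tb)^i\cdot\ta^2\tb^2\cdot(\ta\tb)^{k-i-2}$ and then argues somewhat informally that almost everything else is also obtainable, whereas you embed $(\ta\tb)^{k-1}$ into $w_i$, import the exact count $2^k-k-1$ from Theorem~\ref{(ab)^k}, identify the $k+1$ words missing from $\ScatFact_k((\ta\tb)^{k-1})$ as exactly $\ta^k,\tb^k,\tb^j\ta^{k-j}$ ($j\in[k-1]$), and check which of these $w_i$ recovers. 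All of these steps check out, and this yields the ``in particular'' clause cleanly.

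The $\Rightarrow$ direction, however, has a genuine gap. You claim that the extremal inequalities (i) $\alpha_j<\beta_{j+1}$ and (ii) $\beta_j<\alpha_{j+1}$ (for $j\neq i+1$) force $w[1..2i]=(\ta\tb)^i$ by induction. They do not, already at $j=1$: the conditions at $j=1$ only yield $w[1..2]\in\{\ta\tb,\tb\ta\}$, since a word beginning $\tb\ta$ satisfies $\beta_1=1<\alpha_2$ and $\alpha_1=2<\beta_2$ just as well. Concretely, take $k=4$, $i=1$, and $w=\tb\ta\ta\ta\tb\tb\ta\tb$, so $\alpha=(2,3,4,7)$ and $\beta=(1,5,6,8)$: all of (i), (ii) for $j\in\{1,3\}$, and (iii) $\beta_2>\alpha_3$ hold, yet $w$ is not of the required form (and indeed $|\ScatFact_4(w)|<15$, because the \emph{non-extremal} word $\ta\tb\ta\ta$ is also missing). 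So the extremal inequalities alone cannot pin down the prefix, for exactly the same reason you correctly identified for the tail: one must invoke consequence (b) of Lemma~\ref{singleelement} --- the presence of all non-extremal words --- in the prefix analysis as well (the paper's case analysis does this, e.g.\ via factors such as $\ta^{i+2}\tb\ta\tb^{k-i}$). Moreover, you cannot additionally normalise $w$ to start with $\ta$ at this point, since the symmetry budget was already spent fixing the missing word to the form $\tb^{i+1}\ta^{k-i-1}$. The middle-block step is fine given the prefix (though note that for $i=k-2$ the condition (ii) at $j=i+2=k$ does not exist and that case needs the letter-count argument instead), but as written the proof of the converse direction is incomplete.
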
\refstepcounter{definition}

\begin{proof}
Let be $i\in[k-2]_0$.  First "$\Leftarrow$" will be proven 
and for that consider $w=(\ta\tb)^i\ta^2\tb^2(\ta\tb)^{k-i-2}$. By 
Lemma~\ref{lemfull} follows
\[
\ScatFact_i((\ta\tb)^i)=\Sigma^i\mbox{ and }
\ScatFact_{k-i-2}((\ta\tb)^{k-i-2})=\Sigma^{k-i-2}. 
\]
With 
$\ScatFact_2(\ta^2\tb^2)=\{\ta\ta,\ta\tb,\tb\tb\}$ the $k$-spec\-trum of $w$ 
has 
at least $3\cdot 
2^{i}\cdot 2^{k-i-2}=3\cdot 2^{k-2}=2^k-2^{k-2}$ elements. Notice that by this 
construction, 
scattered factors with  a $\tb\ta$ at the middle position cannot be 
reached. For this reason we have to have a look at $w$'s remaining scattered 
factors not 
being gained by the above construction. This means that not only $i$ letters 
are 
allowed to be 
taken of the first part and not only $k-i-2$ letters from the last part. 

Having a deeper look into 
$(\ta\tb)^i$ one can notice that all {\em binary numbers} 
(encoded by $\ta,\tb$) of length $i$ are scattered factors of  
$(\ta\tb)^{i-1}\ta$. Appending to these scattered factors a $\tb$ implies
that nearly all {\em binary numbers}  are in the 
$i+1$-spectrum of 
$\ta\tb^i$. Appending now an $\ta$ from the middle part and then each of the 
words from the last 
part leads to nearly all remaining scattered factors of the $k$-spectrum of 
$w$. 
The only missing 
word is $\tb^{i+i}$, since the last $\tb$ cannot be reached within the first 
part. This implies 
that the word $\tb^{i+1}\ta^{k-i-1}$ is not in the $k$-spectrum of $w$ since 
with the \nth{(i+1)}
$\tb$ the middle part is reached and the last part contains only $k-i-2$ 
$\ta$s. 
This concludes
$|\ScatFact_k(w)|=2^k-1$.

On the other hand if $|\ScatFact_k(w)|=2^k-1$ an element of the form 
$\tb^{i+1}\ta^{k-i-1}$ for an 
$i\in[k-2]_0$ is missing in the $k$-spectrum of $w$. Moreover this is exactly 
the only element 
missing. Fix an $i\in[k-2]_0$ and set $u=\tb^{i+1}\ta^{k-i-1}$. The proof will 
be very technically 
and exclude step by step all other possibilities than $w$ being 
$(\ta\tb)^i\ta^2\tb^2(\ta\tb)^{k-i-2}$. Firstly consider $i=k-2$. This implies 
$u=\tb^{k-1}\ta$.
In this case $w$ has to end in $\tb^2$ but not in $\tb^3$ since otherwise 
$\tb^{k-2}\ta^2$ would 
not be a scattered factor. If $w$ were of the form $w_1\tb\ta\tb^2$, 
$|w_1|_a=k-1$ and 
$|w_1|_b=k-3$ would hold which would imply that $\tb^{k-2}\ta^2$ is not a 
scattered factor. If $w$ 
ended in $\ta^3\tb^2$, $\ta^{k-2}\tb\ta$ would be excluded. Hence, $w$ ends in 
$\ta^2\tb^2$. 
Suppose at last that $w=(\ta\tb)^{\ell}\ta^2\tb^2w_2$ holds for $\ell<k-2$ and 
$w_2\in\Sigma^{\ast}$. Then $w_2$ has each $(k-\ell-2)$ $\ta$ and $\tb$. Thus 
$\tb^{\ell+1}\ta^{k-\ell-1}$ is not a scattered factor of length $k$. This 
proofs that for $i=k-2$
$w=(\ta\tb)^{k-2}\ta^2\tb^2$ is implied by $\tb^{k-2}\ta^{1}$ being the only 
excluded scattered 
factor from $\Sigma^k$. Hence assume $i\in[k-3]_0$.\\
{\tt Supposition:} $w$ ends in $\tb^{\ell}$ for $\ell\geq 2$\\
If $i<k-2$ holds, then $\tb^{k-1}\ta\not\in\ScatFact_k(w)$ follows and since 
$i+1<k-1$ holds, this
element is different from $u$.  \\
In the next step it will be shown that exactly $k-i-2$ repetitions of $\ta\tb$ 
are a suffix of 
$w$.\\
{\tt Supposition:} $w=w_1\tb^2(\ta\tb)^{\ell}$\\
If $\ell>k-i-2$ held, $\tb^{i+1}\ta^{k-i-1}$ would not be a scattered factor of 
$w$. If 
$\ell<k-i-2$ held, $\tb^{k-\ell-1}\ta^{\ell+1}$ would not be a scattered factor 
since $w_1$ has 
$(k-1)$ $\ta$ and $(k-\ell-2)$ $\tb$.\\
{\tt Supposition:} $w=w_1\ta^2(\tb\ta)^{\ell}b$\\
In this case $|w_1|_\ta=k-2-\ell$ and $|w_1|_\tb=k-\ell-1$ holds. This implies 
that 
$\ta^{k-2-\ell}\tb^{\ell+1}\ta$ is not in the $k$-spectrum of $w$.\\
Consequently there exists a $w_1$ such that $w=w_1\tb^2(\ta\tb)^{k-i-2}$ holds. 
In the next it 
will 
be shown that $\tb^2$ has to be preceded by $\ta^2$.\\
{\tt Supposition:} $w=w_1\tb^3(\ta\tb)^{k-i-2}$\\
Here $w_1$ has $(i+2)$ $\ta$ and $(i-1)$ $\tb$ and hence 
$\tb^i\ta^{k-i-2}\tb^2$ 
is not a 
scattered 
factor of length $k$ of $w$.\\
{\tt Supposition:} $w=w_1\tb\ta\tb^2(\ta\tb)^{k-i-2}$\\
This implies $\ta^{i+2}\tb\ta\tb^{k-i}\not\in\ScatFact_k(w)$ since $w_1$ has 
$i+1$ occurrences of 
$\ta$ and $i-1$ occurrences of $\tb$.\\
This proofs that $\ta^2\tb^2(\ta\tb)^{k-i-2}$ is a suffix of $w$. The case that 
this is preceded 
by 
another $\ta$ is excluded since then $\ta^i\tb\ta^{k-i-1}$ would not be in the 
$k$-spectrum of 
$k$. 
In the last step it will be shown that the first occurrence of $\ta^2$ is at 
the 
point $2\ell$.\\
{\tt Supposition:} $w=(\ta\tb)^{\ell}\ta^2w_2$ for $\ell\neq i$\\
If $\ell$ is smaller than $i$, $|w_2|_\ta=k-\ell-2$ and $|w_2|_\tb=k-\ell$ hold 
and 
$\tb^{\ell+1}\ta^{k-\ell-1}\not\in\ScatFact_k(w)$ follows. If $\ell$ is greater 
than $i$, in 
contradiction to the main assumption $\tb^{i+1}\ta^{k-i-1}$ is a scattered 
factor, because 
$\tb^{i+1}$ is a scattered factor of $(\ta\tb)^{\ell}$ and 
$k-\ell+\ell-(i+1)=k-i-1$ $\ta$ are 
left 
in the rest of $w$. \\
Combining $w=(\ta\tb)^i\ta^2w_2$ and $w=w_1\ta^2\tb^2(\ta\tb)^{k-i-2}$ the 
claim 
that $w$ is of the 
form $(\ta\tb)^i\ta^2\tb^2(\ta\tb)^{k-i-2}$ is proven.\qed
\end{proof}

By Proposition~\ref{lem2^k-1} we get that $7$ is a possible cardinality of the set of scattered factors of length $3$ of  
weakly-$0$-balanced words of length $6$ and, moreover, that exactly the words 
$\ta^2\tb^2\ta\tb$ and $\ta\tb\ta^2\tb^2$ (and symmetric words obtained by 
reversal and renaming) have seven different scattered factors.
The following theorem demonstrates that there always exists a weakly-$0$-balanced 
word $w$ of length $2k$ such that $|\ScatFact_k(w)| = 2k$. Thus, for the case $k=3$ also the 
question if six is a possible cardinality of $\ScatFact_3(w)$ can be answered positively.

\setcounter{theorem}{\value{definition}}
\begin{theorem}\label{lem2k}
The $k$-spectrum of a word $w\in\Sigma_{\ed}^{2k}$ has exactly $2k$ elements if 
and only if
$w\in\{\ta^{k-1}\tb\ta\tb^{k-1},\ta^{k-1}\tb^k\ta\}$ holds (up to renaming and 
reversal). Moreover, 
there does not exist a 
weakly-$0$-balanced word $w\in\Sigma_{\ed}^{2k}$ with a $k$-spectrum of 
cardinality $2k-i$ for $i\in[k-2]$.
\end{theorem}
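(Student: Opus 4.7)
The plan is to prove both claims of the theorem together. By Theorem~\ref{lemsmallest} every strictly balanced $w \in \Sigma_{\ed}^{2k}$ with $w \neq \ta^k\tb^k$ (up to renaming and reversal) satisfies $|\ScatFact_k(w)| \geq k + 3$, so it suffices to verify that (i) each of the two candidate words realises cardinality exactly $2k$, and (ii) every strictly balanced word with both letters that is not one of the two candidates and not $\ta^k\tb^k$ satisfies $|\ScatFact_k(w)| \geq 2k + 1$.

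For (i), I partition the length-$k$ scattered factors of each candidate by their alternation shape, that is, the sequence of maximal $\ta$- and $\tb$-runs the factor contains. For $w = \ta^{k-1}\tb^k\ta$ these factors split into two disjoint groups of size $k$: the factors $\ta^a\tb^{k-a}$ for $a \in \{0,\dots,k-1\}$, which end in $\tb$ and hence lie in the prefix $\ta^{k-1}\tb^k$, and the factors $\ta^a\tb^{k-1-a}\ta$ for $a \in \{0,\dots,k-1\}$, which end in the terminal $\ta$; the total is $2k$. For $w = \ta^{k-1}\tb\ta\tb^{k-1}$ the solitary $\tb$ at position $k$ and solitary $\ta$ at position $k+1$ force every length-$k$ scattered factor to equal one of $\ta^k$, $\tb^k$, $\ta^a\tb^{k-a}$ for $a \in [k-1]$, $\ta^{k-2}\tb\ta$, $\tb\ta\tb^{k-2}$, or $\ta^a\tb\ta\tb^{k-2-a}$ for $a \in [k-3]$; summing these yields $1 + 1 + (k-1) + 1 + 1 + (k-3) = 2k$.

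For (ii), WLOG $w$ starts with $\ta$; let $s$ be the number of maximal blocks of $w$, so $s \geq 3$. If $s = 3$ then $w = \ta^i\tb^k\ta^{k-i}$ with $i \in [k-1]$; every scattered factor of length $k$ has shape $\ta^a\tb^b\ta^c$ with $a + b + c = k$, $a \leq i$, $c \leq k-i$, $b \leq k$, and a direct enumeration (partitioned by which of $a,b,c$ vanish) gives $|\ScatFact_k(w)| = k + 1 + i(k-i)$. This equals $2k$ iff $i \in \{1, k-1\}$, both cases coinciding with $\ta^{k-1}\tb^k\ta$ up to reversal, and is at least $3k - 3 > 2k$ when $i \in \{2,\dots,k-2\}$ (this range being empty for $k = 3$). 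If $s \geq 5$ then $w$ contains at least three blocks of one letter and at least two of the other, making scattered factors of shape $\ta^*\tb^*\ta^*\tb^*\ta^*$ (or its $\tb$-first analogue) available alongside all the shapes already counted at $s \leq 4$; combined with the $s = 4$ analysis below, this strictly pushes $|\ScatFact_k(w)|$ above $2k$.

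The main obstacle is the case $s = 4$, where $w = \ta^i\tb^j\ta^l\tb^m$ with $i + l = j + m = k$ and $i,j,l,m \geq 1$. I partition length-$k$ scattered factors into the seven shape classes $\ta^k$, $\tb^k$, $\ta^a\tb^b$, $\tb^b\ta^c$, $\ta^a\tb^b\ta^c$, $\tb^b\ta^c\tb^d$, $\ta^a\tb^b\ta^c\tb^d$ (with every exponent shown required to be positive), and derive closed-form counts for each in terms of $(i,j,l,m)$: for example $\ta^a\tb^b$ contributes $k - 1$ factors when $j \leq i$ and $i + (k - j)$ otherwise, $\tb^b\ta^c$ contributes $j - i + 1$ factors when $i \leq j$ and zero otherwise, and $\ta^a\tb^b\ta^c$ contributes $|\{(a,b,c) \in [i] \times [j] \times [l] : a + b + c = k\}|$, with analogous expressions for $\tb^b\ta^c\tb^d$ and $\ta^a\tb^b\ta^c\tb^d$. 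Evaluating these counts at $(i,j,l,m) = (k-1,1,1,k-1)$ gives total $2k$; I then check that every other admissible tuple -- up to the reverse-plus-renaming symmetry $(i,j,l,m) \mapsto (m,l,j,i)$, which fixes this candidate -- produces a strict increase in at least one of the shape counts, most naturally in $\ta^a\tb^b\ta^c\tb^d$ once $j$ or $l$ exceeds $1$, or via activation of $\tb^b\ta^c$ once $j > i$. This yields $|\ScatFact_k(w)| \geq 2k + 1$ for every such $w$, completing both the ``only if'' direction and the gap claim.
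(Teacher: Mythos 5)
Your part (i) and your three-block case are correct: the two candidate words do have exactly $2k$ scattered factors of length $k$, and the count $(i+1)(k-i+1)=k+1+i(k-i)$ for $\ta^i\tb^k\ta^{k-i}$ correctly isolates $i\in\{1,k-1\}$ and gives at least $3k-3>2k$ otherwise (for $k\geq 4$; the range is empty for $k=3$). Your organisation by the number $s$ of blocks is also a genuinely different decomposition from the paper's, which instead writes $w=\ta^{\ell}\tb x$ with $\ell\in[k-2]_{\geq 2}$ and reduces to the two extremal strictly balanced scattered factors $\ta^{k-\ell}\tb^{k-\ell}$ and $\tb^{k-\ell}\ta^{k-\ell}$ of $x$.

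However, the two cases carrying all the remaining weight are not actually proved. First, for $s\geq 5$ you assert that the extra shapes become available ``alongside all the shapes already counted at $s\leq 4$'' and that this, ``combined with the $s=4$ analysis'', forces more than $2k$ factors. But a five-block word is not of the form $\ta^i\tb^j\ta^l\tb^m$, so the $s=4$ shape counts do not apply to it, and there is no general containment between the $k$-spectrum of a five-block word and that of a four-block word of the same length $2k$; an explicit lower bound for $s\geq 5$ is required (compare the third item of Lemma~\ref{lemhelp3}, which the paper proves separately for precisely this kind of reason). Second, and more seriously, in the $s=4$ case you evaluate the shape counts only at $(i,j,l,m)=(k-1,1,1,k-1)$ and then claim that every other admissible tuple ``produces a strict increase in at least one of the shape counts''. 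Even granting that, it does not yield $|\ScatFact_k(w)|\geq 2k+1$: the individual counts are not monotone in $(i,j,l,m)$ (your own formula for the shape $\tb^b\ta^c$ drops from $j-i+1$ to $0$ as $i$ passes $j$), so a strict increase in one class can be offset by a decrease in another. You must either write the total as an explicit function of $(i,j,l,m)$ and minimise it over all admissible tuples, or directly exhibit $2k+1$ pairwise distinct length-$k$ scattered factors for every non-candidate four-block word. As it stands, both the ``only if'' direction and the claim that no cardinality $2k-i$, $i\in[k-2]$, occurs rest on these two unproved steps.
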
\refstepcounter{definition}

\begin{proof}
Consider first $w=\ta^{k-1}\tb\ta\tb^{k-1}$. Since the $k$-spectrum of 
$\ta^k\tb^k$ is a subset of the $k$-spectrum of 
$w$, the $k$-spectrum of $w$ has at least 
$k+1$ elements.
Additionally $w$
has the scattered factors of the form $\ta^i\tb\ta\tb^{k-2-i}$, which sum up to 
$k-1$. Hence
$|\ScatFact_k(w)|=k+1+k-1=2k$ holds. Moreover
$\ta^{k-1}\tb^k\ta$ has
all elements of $\ta^k\tb^k$'s $k$-spectrum as scattered factors. Here the word 
has in 
addition 
all words of the form $\ta^i\tb^{k-1-i}\ta$ as scattered factors which sum up to 
$k-1$ as well. 
This proves that both words have a scattered factor set of cardinality $2k$.

The other direction will be proven by contraposition following the two main 
cases 
\[
\ta^{k-1}\tb\ta\tb^{k-1}\quad\mbox{and}\quad\ta^{k-1}\tb^k\ta.
\]

Assume first $w=\ta^{\ell}\tb x$ for $\ell\in[k-2]_{\geq 2}$. Notice that it 
does not have to be 
considered that the word starts with one $\ta$, since this is 
symmetric to the reversal of the case $\ta^{k-1}\tb^k\ta$. This implies 
$|x|_a=k-\ell$ and 
$|x|_b=k-1$. Notice here $k-\ell<k-1$. Thus, there exists  a scattered factor 
$x'$ of $x$ of 
length 
$2(k-\ell)$ with $|x'|_a=|x'|_b=k-\ell$. By Lemma~\ref{lemsmallest} follows
\[
|\ScatFact_{k-\ell}(y)|=k-\ell+1\Leftrightarrow 
y\in\{\ta^{k-\ell}\tb^{k-\ell},\tb^{k-\ell}\ta^{k-\ell}\}
\]
and $|\ScatFact_{k-\ell}(y)|>k-\ell+1$ otherwise.
This implies that the $(k-\ell)$-spectrum of $x'$ is minimal with respect 
to cardinality if $x'$ is either 
$\ta^{k-\ell}\tb^{k-\ell}$ or $\tb^{k-\ell}\ta^{k-\ell}$. For giving a lower 
bound of the 
cardinality of $w$'s scattered factor set of length $k$, it is sufficient to 
only take these both 
options into consideration. This implies that it is not necessary to examine the 
cases where
$x$ contains other scattered factors with both $k-\ell$ $\ta$ and $\tb$.

\noindent{\tt case 1:} $x'=\ta^{k-\ell}\tb^{k-\ell}$\\
Thus $x$ contains $\ell-1$ $\tb$ which are not in $x'$.\\
{\tt case a:} $x=\tb^{\ell-1}\ta^{k-\ell}\tb^{k-\ell}$\\
In this case $w=\ta^{\ell}\tb^{\ell}\ta^{k-\ell}\tb^{k-\ell}$ holds and 
that the $k$-spectrum of $\ta^k\tb^k$ is a subset of $\ScatFact_k(w)$ follows. 
\\
{\tt case i:} $\ell<k-\ell$\\
For all $s\in[\ell]$ the words $\ta^{\ell-s}\tb^s\ta^{k-\ell},\dots, 
\ta^{\ell}\tb^s\ta^{k-\ell-s}$
are well-defined and sum up to $s+1$. Moreover for every $s_2\in[k-\ell]$ exists 
$r_1\in\N_0$
and exist $r_2,s_2\in\N$ such that the words 
$\ta^{r_1}\tb^{s_1}\ta^{r_2}\tb^{s_2}$
with $s_1+r_1+s_2+r_2=k$ are all distinct and distinct to the aforementioned. 
Thus, in this case
\[
k+1+\sum_{s=1}^{\ell}(s+1)+k-\ell=2k+1-\ell+\frac{\ell(\ell+1)}{2}+\ell\geq 2k+4
\]
is a lower bound for $\ScatFact_k(w)$.\\
{\tt case ii:} $\ell>k-\ell$\\
Consider here for $r\in[k-\ell]$ the words 
$\tb^{\ell-r}\ta^r\tb^{k-\ell},\dots,\tb^{\ell}\ta^r\tb^{k-\ell-r}$. For fixed 
$r$ these are $r+1$.
Moreover in this case for all $r_1\in[\ell]$ exist $s_1,r_2\in\N$ and $s_2\in\N$ 
such that the 
words
$\ta^{r_1}\tb^{s_1}\ta^{r_2}\tb^{s_2}$ with $s_1+r_1+s_2+r_2=l$ are all distinct 
and distinct
to the aforementioned.  In total this sums up to
\[
k+1+\sum_{r=1}^{k-\ell}(r+1)+\ell=k+1+\frac{(k-\ell)(k-\ell+1)}{2}+(k-\ell)+\ell
\geq 2k+4
\]
different scattered factors.\\
{\tt case b:} $x=\ta^{k-\ell}\tb^{k-1}$\\
Thus, $w=\ta^{\ell}\tb\ta^{k-\ell}\tb^{k-1}$ holds. Here it holds as well that 
the $k$-spectrum of 
$\ta^k\tb^k$ is a subset of $\ScatFact_k(w)$. Moreover all words of the
form $\tb\ta^r\tb^s$ for $r+s=k-1$ and $r\in[k-\ell]$ are different scattered 
factors, i.e. 
$k-\ell$ many. Additionally the words $\ta^r\tb\ta\tb^s$ for $r+s=k-2$ and 
$r,s>0$ are different
scattered factors and distinct to the aforementioned. This sums up to 
$k+1+k-1+k-2=3k-2$ for the
cardinality of $\ScatFact_k(w)$. This proves the claim for $k\geq 3$.

\noindent{\tt case 2:} $x'=\tb^{k-\ell}\ta^{k-\ell}$\\
Consequently 
$x\in\{\tb^{k-1}\ta^{k-\ell},\tb^{k-\ell}\ta^{k-\ell}\tb^{\ell-1}\}$ holds.\\
{\tt case a:} $x=\tb^{k-1}\ta^{k-\ell}$\\
Hence $w=\ta^{\ell}\tb^{k}\ta^{k-\ell}$. Here only $\ell+1$ different scattered 
factors are of the 
form $\ta^r\tb^s$ exist and $k-\ell$ of the form $\tb^s\ta^r$ with $r+s=k$ 
(notice that the latter 
ones are only $k-\ell$ since among all of them one is in common with the first 
ones). Finally 
consider the words of the form $\ta^{r_1}\tb^s\ta^{r_2}$ with $r_1+r_2+s=k$ and 
$r_1,r_2,s>0$. This 
sums up to $\ell+1+k-\ell+k$. By $\ta^k\in\ScatFact_k(w)$, $|\ScatFact_k(w)|\geq 
2k+2$ follows.\\
{\tt case b:} $x=\tb^{k-\ell}\ta^{k-\ell}\tb^{ \ell-1}$\\
In this case $w=\ta^{\ell}\tb^{k-\ell+1}\ta^{k-\ell}\tb^{\ell-1}$ holds. Here 
the cardinality
of the $k$-spectrum of $w$ is determined analogously to case 1a.\qed
\end{proof}

By Proposition~\ref{lem2^k-1} and Theorem~\ref{lem2k} the possible cardinalities of $\ScatFact_3(w)$ for weakly-$0$-balanced words $w$ of length $6$ are completely 
characterized. Theorem~\ref{lem2k} determines the first gap in the set of cardinalities of $|\ScatFact_k(w)|$ for $w\in\Sigma_{\ed}^{2k}$: there does not exist 
a word $w\in\Sigma_{\ed}^{2k}$ with $|\ScatFact_k(w)|=k+i+1$ for $i\in[k-2]$ and 
$k\geq 3$, since all
words that are not of the form $\ta^k\tb^k$, $\tb^k\ta^k, 
\ta^{k-1}\tb\ta\tb^{k-1}$, or 
$\ta^{k-1}\tb^k\ta$ have a scattered factor set of cardinality at least 
$2k+1$. As the size of this first gap is linear in $k$, it is clear that the larger $k$ is, the 
more unlikely it is to find a $k$-spectrum of a small cardinality. 

In the following we will prove that the cardinalities $2k+1$ up to $3k-4$ are 
not reachable, i.e. $3k-3$ is the thirst smallest cardinality after $k+1$ and 
$2k$ (witnessed by, e.g.\,$\ta^{k-2}\tb^k\ta^2$).

\setcounter{lemma}{\value{definition}}
\begin{lemma}\label{gensquares}
For $i\in\big[\lfloor\frac{k}{2}\rfloor\big]$ and $j\in[k-1]$
\begin{itemize}
\item $|\ScatFact_k(\ta^{k-i}\tb^k\ta^i)|=k(i+1)-i^2+1$ for $k\geq 4$,
\item $|\ScatFact_k(\ta^{k-1}\tb^2\ta\tb^{k-2})|=3k-2$,
\item $|\ScatFact_k(\ta^{k-2}\tb^j\ta\tb^{k-j}\ta)|=
k(2j+2)-6j+2$ for $k\geq 5$, and 
\item $|\ScatFact_k(\ta^{k-2}\tb^j\ta^2\tb^{k-j})|=
k(2j+1)-4j+2$.
\end{itemize}
\end{lemma}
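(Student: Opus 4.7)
My plan is to tackle each of the four families separately but with a unified strategy: classify every length-$k$ scattered factor by its \emph{block shape}, that is, the sequence of maximal runs of $\ta$s and $\tb$s. Since $w$ has a small fixed number of blocks in each case, the shape of any scattered factor is constrained by the block structure of $w$, and within each shape I count the admissible tuples of block sizes while avoiding double-counting between shapes whose ``degenerate'' specialisations coincide.

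For $w = \ta^{k-i}\tb^k\ta^i$, the block structure has three runs, so every scattered factor of length $k$ has the form $\ta^{r_1}\tb^{s}\ta^{r_2}$ with $r_1+s+r_2 = k$, $0\le r_1\le k-i$, $s\le k$, $0\le r_2\le i$. I would split into three disjoint cases: $s=0$, yielding the single word $\ta^k$; $s\ge 1$ and $r_2=0$, yielding the $k-i+1$ words $\ta^{r_1}\tb^{k-r_1}$; and $s\ge 1,\ r_2\ge 1$, where the triple $(r_1,s,r_2)$ is uniquely recovered from the word and a straightforward enumeration gives $(i-1)(k-i+1)+(k-i)$ factors. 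Summing and simplifying yields $k(i+1)-i^2+1$. This case also provides a useful base, since $\ta^{k-1}\tb^k\ta$ reappears as the $i=1$ instance and matches Theorem~\ref{lem2k}.

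For the three remaining words, the block structures involve four or five maximal runs, so scattered factors can take shapes with up to four or five blocks. I would enumerate the possible shapes—e.g. $\ta^+\tb^+\ta^+\tb^+$, $\ta^+\tb^+\ta^+$, $\tb^+\ta^+\tb^+$, $\ta^+\tb^+$, $\tb^+\ta^+$, $\ta^+$, $\tb^+$, together with those ending in an extra $\ta$-block for the third family—and, for each shape, write the count of admissible block-size tuples as a sum of products bounded by the matching block lengths of $w$. The parameter $j$ enters these bounds linearly, so each per-shape count is a low-degree polynomial in $j$ and the totals telescope into the claimed expressions $3k-2$, $k(2j+2)-6j+2$, and $k(2j+1)-4j+2$.

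The principal obstacle is bookkeeping: different shapes can produce the same word when one of their interior blocks has size zero (a shape with an empty middle block collapses to a shorter shape). To keep the count sharp, I would insist that every interior block-size in a given shape be strictly positive, and count shorter shapes separately. A secondary subtlety in the third and fourth families is the range of $j$: the extreme values $j=1$ and $j=k-1$ must be checked directly to confirm the closed form, and the hypothesis $k\ge 5$ in the third item is needed precisely to rule out further collapses when $k-j$ or $j$ is very small; I would verify these boundary cases by inspection.
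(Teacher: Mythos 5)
Your treatment of the first item is complete and correct, and it is essentially the paper's own argument: every length-$k$ scattered factor of $\ta^{k-i}\tb^k\ta^i$ has the form $\ta^{r}\tb^{s}\ta^{t}$ with $r\in[k-i]_0$, $t\in[i]_0$ and $s$ determined by $r+s+t=k$, and the $(k-i+1)(i+1)$ choices of $(r,t)$ give distinct words. For the remaining three items, however, you have written a plan rather than a proof. The entire mathematical content of those items is the execution of the per-shape counts: one must actually list the shapes compatible with the four or five blocks of $w$, determine for each shape the exact set of admissible exponent tuples (which depends delicately on the short block sizes $2$, $1$, $j$, $k-j$ of $w$), check injectivity of the tuple-to-word map, and sum. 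You assert that ``the totals telescope into the claimed expressions'' without carrying out any of this, whereas the paper spends a substantial case analysis on each item; for instance, for $\ta^{k-1}\tb^2\ta\tb^{k-2}$ the shape $\ta^{+}\tb^{+}\ta^{+}\tb^{+}$ forces the second $\ta$-block of the factor to be the single middle $\ta$ of $w$ and the first $\tb$-block to have size at most $2$, yielding $2(k-4)+1$ words, and none of this is derivable from the plan alone. As it stands, items 2--4 are unproved.

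A further point that your boundary-case remark glosses over: the closed forms in items 3 and 4 are not symmetric under $j\mapsto k-j$ and in fact fail for large $j$. For item 4 with $k=4$ and $j=3$ the formula gives $k(2j+1)-4j+2=18$, which exceeds $|\Sigma^4|=16$ and is therefore impossible. The paper's own proof silently restricts to $j\le\lfloor k/2\rfloor$ ``by symmetry,'' so the statement is only meaningful on that range. Your proposed check of the extreme value $j=k-1$ ``by inspection'' would thus refute the closed form rather than confirm it; a completed write-up must either restrict the range of $j$ or replace $j$ by $\min\{j,k-j\}$ in the formulas, and your plan, had it been executed, would have had to confront this explicitly.
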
\refstepcounter{definition}

\begin{proof}
For the first claim, let be $i\in \big[\lfloor\frac{k}{2}\rfloor\big]_{\geq 2}$.
The $k$-spectrum of $\ta^{k-i}\tb^k\ta^i$ contains exactly all words of the form 
$\ta^r\tb^s\ta^t$ 
with $r+s+t=k$, $t\in[i]_0$, $r\in[k-i]_0$, and $s\in[k]_0$. If $t$ and $r$ are 
fixed, $s$ is 
uniquely determined. Since all these scattered factors are different, the 
$k$-spectrum has
$(i+1)(k-i+1)=k(i+1)-i^2-1$ elements. Thus the first claim is proven.

\bigskip

For the second claim, notice that the scattered factors of $\ta^{k-1}\tb^2\ta\tb^{k-2}$ are of four different forms: 
$\tb^r\ta\tb^t$, 
$\ta^r\tb^s\ta$, $\ta^r\tb^s$, and $\ta^r\tb^{s_1}\ta\tb^{s_2}$. Notice that all 
these scattered
factors are different if in the second one $s$ is chosen greater than or equal 
to $1$ and in the 
last one $r,s_1,s_2\geq 1$ holds. The first and second one lead to two scattered 
factors, since 
for every $s\in[2]$ there are enough $\ta$ at the beginning for padding from the 
left. 
The third form leads to $k+1$ different
scattered as shown in Theorem~\ref{lemsmallest}. The last one is a little bit more 
complicated. 
Notice firstly that $r$ is at most $k-3$ since $s_1,s_2>0$ holds. In this case 
there exists only 
one possibility for chosing $s_1$ and $s_2$, namely as $1$. If $r$ is $k-4$ 
there exist two 
possibilities, namely $s_1=1$ and $s_2=2$ or vice versa. For $r\in[k-5]$ there 
exist always $2$ 
possibilities for the $\tb$s between the $\ta$s. This leads to $2(k-5)$ 
possibilities. Allover
it sums up to $2+2+k+1+1+2+2(k-5)=8+3k-10=3k-2$.

\bigskip

As in the proof of the second part, for the remaining parts  the scattered 
factors can be 
categorized in the form 
$\ta^r\tb^s$, $\tb^{r_1}\ta^s\tb^{r_2}$, $\ta^{r_1}\tb^s\ta^{r_2}$, and 
$\ta^{r_1}\tb^{s_1}\ta^{r_2}\tb^{s_2}$, where with appropriate chosen exponents 
no factors is 
counted twice. Also as before, $i$ can be chosen in 
$\left[\lfloor\frac{k}{2}\rfloor\right]$, 
since 
otherwise the proof is analogous for $k-i$. The first form contributes $k+1$ 
elements. The second 
and third form contribute $2i$ each, since $s$ resp. $r_2$ range in $[2]$. For 
the last form a 
distinction is necessary. If $r=k-3$ holds, $\ta^{k-3}\tb\ta\tb$ is the only 
scattered factor. If 
$r$ is smaller than $k-3$, $2i$ possibilities for each $r\in[k-3]$ lead to 
scattered factors. 
Allover this sums up to $k+1+2i+2i+1+2i(k-4)=k(2i+1)-4i+2$. By this the first 
claim is proven.

For the second claim again scattered factors of different forms will be 
distinguished. Since also 
here the minimal $k$-spectrum is a subset of the $k$-spectrum of $w$, these 
$k+1$ elements
counts for the cardinality. There exists $i$ many scattered factors of the form 
$\ta^r\tb^s\ta^2$
and $k-2$ of the form $\ta^r\tb^s\ta$, since with the last $\ta$ all occurrences 
of $\tb$ are 
before it. Assuming w.l.o.g. again that $i$ is at most $\frac{k}{2}$ only 
$\tb^{k-1}\ta$ is a 
scattered factor of the form $\tb^s\ta^r$. The scattered factors of the form 
$\tb^{r_1}\ta\tb^{r_2}\ta$ contribute $i$ many. The remaining two forms need 
again a case analysis.
There exists exactly one scattered factor of the form 
$\ta^{r}\tb^{s_1}\ta\tb^{s_2}$ for $r=k-3$
and exactly one scattered factor of the form $\ta^{r_1}\tb^{s_1}\ta\tb^{s_2}\ta$ 
for $r_1=k-4$. If
$r$ resp. $r_1$ are smaller there exists $i$ different scattered factors for 
each choice of 
$r\in[k-4]$ resp. $r_1\in[k-5]$. This sums up to 
$k+1+k-2+i+i+1+i+1+i(k-5)+1+k(i-4)= 2k+2+3i+ ik-5i+ ik-4i=k(2+2i)-6i+2$. \qed
\end{proof}

Notice that for $i\in\left[\lfloor\frac{k}{2}\rfloor\right]$ the sequence 
$(k(2i+1)-4i+2)_i$ is
increasing and its minimum is $3k-2$ while for 
$i\in\left[\lfloor\frac{k}{2}\rfloor\right]$ the 
sequence $(k(2i+2)-6i+2)_i$ is
increasing and its minimum is $4k-4$. The following lemma only gives lower 
bounds for specific forms of words, 
since, on the one hand, it proves to be sufficient for the Theorem~\ref{2gap} which describes the second gap, and, on 
the other hand, the proofs show that the formulas describing the exact number of scattered 
factors of a specific form are getting more and more complicated. It has to be 
shown that also words starting with $i$ letters $\ta$, for $i\in[k-3]$, 
have a $k$-spectrum of greater
(as lower is already excluded) cardinality.  By
Lemma~\ref{gensquares} only words with another transition from $\ta$'s 
to $\tb$'s need to be considered, ($w=\ta^{r_1}\tb^{s_1}w_1\ta^{r_1}\tb^{s_2}$).
W.l.o.g. we can assume $s_1$ to be maximal, such that $w_1$ starts with an 
$\ta$, 
and similarly, by 
maximality of $r_2$, ends with a $\tb$, thus only 
words of 
the form 
$\ta^{r_1}\tb^{s_1}\dots \ta^{r_n}\tb^{s_n}$ have to be considered, and by 
Proposition~\ref{lem:full}, it is sufficient to investigate $n<k$.

\setcounter{lemma}{\value{definition}}
\begin{lemma}\label{lemhelp3}
\begin{itemize}
\item $|\ScatFact_k(\ta^{k-2}\tb^i\ta\tb^j\ta\tb^{k-i-j})|\geq 
 3k-3$ for $i,j\in[k-2]$, $i+j\leq k-1$, 
 \item 
$|\ScatFact_k(\ta^{k-2}\tb^{s_1}\ta^{r_1}\tb^{s_2}\ta^{r_2}\tb^{s_3})|\geq 
3k-4$ for $s_1+s_2+s_3=k$, $r_1+r_2=2$,$s_1>0$, $r_1,r_2,s_2,s_3\geq 0$,
\item $|\ScatFact_k(  \ta^{r_1}\tb^{s_1}\dots\ta^{r_n}\tb^{s_n} )|\geq 3k-3$
for $r_1\leq k-3$, 
$\sum_{i\in[n]}r_i=\sum_{i\in[n]}s_i=k$, and $r_i,s_i\geq 1$.
\end{itemize}
\end{lemma}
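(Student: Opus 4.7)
Our plan, for each of the three bounds, is to exhibit enough distinct length-$k$ scattered factors of $w$ by decomposing them according to a small collection of shape families and counting the realisable members of each in $w$. Four such shape families recur and are pairwise disjoint:
\[\mathcal{A}=\{\ta^r\tb^{k-r}\},\ \mathcal{B}=\{\ta^r\tb^{k-1-r}\ta\},\ \mathcal{C}=\{\ta^r\tb^{k-2-r}\ta^2\},\ \mathcal{D}=\{\ta^r\tb^s\ta\tb^t:s,t\geq 1\}.\]
Realisability of a given shape in $w$ reduces to counting how many $\ta$s and $\tb$s of $w$ occur in the required order, which can be read off from the block decomposition of $w$.

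For item 1 with $w=\ta^{k-2}\tb^i\ta\tb^j\ta\tb^{k-i-j}$: since $w$ begins with $\ta^{k-2}$ and has $k$ $\tb$s in its tail, all $k+1$ members of $\mathcal{A}$ are realisable. The two middle $\ta$s yield $i+j$ members of $\mathcal{B}$ (those with $r\geq k-1-i-j$, via the second middle $\ta$) and $i$ members of $\mathcal{C}$ (those with $r\geq k-2-i$, by taking both middle $\ta$s and skipping $\tb^j$). For $\mathcal{D}$, a factor is realisable via the first middle $\ta$ iff $s\leq i$ and $t\leq k-i$, and via the second iff $s\leq i+j$ and $t\leq k-i-j$; inclusion--exclusion yields at least $ij+(i+j)(k-i-j)-2$ realisable members. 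Summing,
\[|\ScatFact_k(w)|\geq (k+1)+(i+j)+i+ij+(i+j)(k-i-j)-2,\]
which for $i,j\geq 1$ with $i+j\leq k-1$ is minimised at $(i,j)=(1,1)$, giving $3k-1$, hence $\geq 3k-3$.

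For item 2 we split by $(r_1,r_2)\in\{(0,2),(1,1),(2,0)\}$ and by whether $s_2$ or $s_3$ vanish. The non-degenerate sub-case ($r_1=r_2=1$ with all $s_i\geq 1$) falls under item 1, giving $\geq 3k-3$; the remaining sub-cases reduce to words of form $\ta^{k-2}\tb^p\ta^2\tb^q$ or $\ta^{k-2}\tb^p\ta\tb^q\ta$, whose exact $k$-spectra sizes are given by Lemma~\ref{gensquares} as $k(2p+1)-4p+2\geq 3k-2$ and $k(2p+2)-6p+2\geq 4k-4$ respectively, both exceeding $3k-4$. For item 3, the constraint $r_1\leq k-3$ places at least three further $\ta$s in the tail of $w$; via the same four families (plus, where needed, auxiliary shapes such as $\{\tb^p\ta^r\tb^q\}$ and $\{\tb^p\ta^r\}$) we realise at least $3k-3$ distinct factors.

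The main obstacle is the item-3 analysis: since $w$ may have any block structure with $r_1\leq k-3$, the family $\mathcal{A}$ may fail to contribute all $k+1$ monotone factors (a factor $\ta^r\tb^{k-r}$ fails when fewer than $k-r$ $\tb$s follow the $r$-th $\ta$ of $w$). A careful case split on the block structure $(r_1,s_1,\ldots,r_n,s_n)$ is required to verify that each deficit in $\mathcal{A}$ is outweighed by gains in $\mathcal{B},\mathcal{C},\mathcal{D}$, and the auxiliary families, realised using the three or more trailing $\ta$s.
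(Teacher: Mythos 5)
Your treatment of the first two bullets is sound. For the first bullet, your decomposition into the disjoint families $\mathcal{A},\mathcal{B},\mathcal{C},\mathcal{D}$ with an exact inclusion--exclusion count for $\mathcal{D}$ (the two middle $\ta$s give the two rectangles $\{s\leq i,\,t\leq k-i\}$ and $\{s\leq i+j,\,t\leq k-i-j\}$, overlapping in $\{s\leq i,\,t\leq k-i-j\}$) is actually cleaner and easier to verify than the paper's own computation, which counts the same shapes but ends in an opaque polynomial with fractional coefficients; your minimisation to $3k-1$ at $(i,j)=(1,1)$ checks out. For the second bullet your case split is essentially the paper's: the degenerate configurations collapse to $\ta^{k-2}\tb^p\ta^2\tb^q$ or $\ta^{k-2}\tb^p\ta\tb^q\ta$ and are dispatched by Lemma~\ref{gensquares}, while the non-degenerate one is an instance of the first bullet.

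The third bullet, however, is a genuine gap, and it is the hardest part of the lemma. You assert the bound and then concede that ``a careful case split on the block structure $(r_1,s_1,\ldots,r_n,s_n)$ is required'' --- but that case split \emph{is} the proof, and the route you propose (tallying the deficit of $\mathcal{A}$ against gains in the other families, separately for each block structure) is open-ended: $n$ can be as large as $k$, and the deficit in $\mathcal{A}$ can be as large as $k-3$ (e.g.\ for $\ta\tb^{k-1}\ta^{k-1}\tb$ only four words $\ta^r\tb^{k-r}$ survive), so there is no finite list of cases to check and no argument that the compensation always suffices. The paper avoids any enumeration of block structures by an exchange argument: for each $i$, either $\ta^i\tb^{k-i}$ or $\tb^i\ta^{k-i}$ is a scattered factor (depending on whether enough $\tb$s follow the \nth{i} $\ta$), giving $k+1$ words; and for each pair $(i,j)$, either $\ta^i\tb^j\ta^{k-i-j}$ is a scattered factor or a designated distinct substitute with an extra alternation is, giving roughly $(k-2)^2$ further words and hence $k^2-k+3\geq 3k-3$ for $k\geq 5$. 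Some version of this ``shape or substitute'' idea, which converts the failure of a monotone embedding into the presence of a different witness, is what your proposal is missing; without it the third bullet remains unproven.
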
\refstepcounter{definition}

\begin{proof}
For the first claim, choose $i,j\in[k-2]$. Then all words of the form $\ta^r\tb^s$ for $r,s\in[k]_0$ 
are scattered 
factors of $w_{ij}$ and by Lemma~\ref{lemsmallest} follows that $w_{ij}$ has 
$k+1$ scattered 
factors of this form. Scattered factors of the form $\ta^{r_1}\tb^s\ta^{r_2}$ 
can occur in three 
variants. In the first variant only the second block of $\ta$ is involved after 
the first block of 
$\tb$, namely the second single $\ta$ is not involved. Since $i\in[k-2]$ holds, 
for each $s\in[i]$ 
exists $r_1,r_2$ ($r_2=1$) such that $\ta^{r_1}\tb^s\ta^{r_2}$ is a scattered 
factor of $w_{ij}$, 
i.e. $w_{ij}$ has additionally $i$ scattered factors. The second variant uses 
the $\ta$ of each 
the 
second and the third $\ta$-block. This only scattered factors of the form 
$\ta^{r_1}\tb^s\ta^{r_2}$
are of interest, the second $\tb$-block is not involved. If $i+j=k-1$ holds only 
$i-1$ scattered 
factors of this form occurs, otherwise again $i$ new elements are in the 
$k$-spectrum. If only the 
$\ta$ from the third block is involved then $j$ (resp. $j-1$) new elements are 
in the spectrum.
This sums up to at least $2i+j-2$ elements of the form 
$\ta^{r_1}\tb^s\ta^{r_2}$. A similar 
distinction leads to the number of scattered factors of the form 
$\ta^{r_1}\tb^{s_1}\ta^{r_2}\tb^{s_2}$. Assume first $r_2=1$ and for this only 
the $\ta$ from the 
second $\ta$-block. This implies that either only $\tb$ from the second block or 
from the second 
and third block can be taken for the last $\tb$-block in the scattered factor. 
Moreover 
$r_1,s_1,s_2$ are at most $k-3$. For each choice of $r_1$ in $[k-3]$ there are 
$\min\{j,k-2-i\}$
possibilities, which leads to 
\[
i\left((k-j-2)j+\sum_{\ell=1}^{k-j-2}k-2-\ell\right)
=6i+1\frac{1}{2}k^2i-kji-3\frac{1}{2}ki+1\frac{1}{2}j^2i+1\frac{1}{2}ji.
\]
If $\tb$ from the second and third block are allowed, all of the second block 
have to occur for 
obtaining different scattered factors to the previous ones. Thus, 
\begin{multline*}
i\left((k-j-i-2)j+\sum_{\ell=1}^{k-j-i-2}k-2-\ell\right)\\
=kij+\frac{1}{2}k^2-1\frac{1}{2}k-ik-jk-ij^2-i^2j-ij+1\frac{1}{2}i+1\frac{1}{2}
j+\frac{1}{2}
i^2+\frac { 1 }{2}j^2.
\end{multline*}
If both, the second and the third $\ta$-block, are involved 
$ik-1\frac{1}{2}i^2-ij-\frac{1}{2}i$
additional scattered factors are in the $k$-spectrum. This all sums up to
\begin{align*}
k+1+9i-2+1\frac{1}{2}k^2i-3\frac{1}{2}ki+\frac{1}{2}j^2i-\frac{1}{2}ji+\frac{1}{
2}j+\frac{1}{2}
i^2+\frac{1}{2}j^2.
\end{align*}
Since either $i^2\geq ij$ or $j^2\geq ij$ and $i,j\in[k-3]$ hold, this is 
greater than or equal to
\[
1\frac{1}{2}k^2-2\frac{1}{2}k+9\frac{1}{2}\geq 3k-3.
\]
Notice that additionally there exist scattered factors of other forms, which 
enlarge the concrete 
$k$-spectrum. 

\bigskip

For the second claim, consider first the case, when $s_2=0$, $r_1=0$, or $r_2=0$. This leads to words 
of the form
matching Lemma~\ref{gensquares} and consequently the $k$-spectrum has 
$k(2i+1)-4i+2\geq 
3k-2>3k-4$ 
elements. Consider now the case that $s_3=0$ holds and all other exponents are 
at least $1$. By 
Lemma~\ref{gensquares} follows again that each such word has at least 
$k(2i+2)-6i+2\geq 
4k-4>3k-4$ 
elements.
Finally by Lemma~\ref{lemhelp3} follows that the remaining words of the given 
form have at least 
$3k-3$ scattered factors.

\bigskip

Finally notice that $\ta^k$ is a scattered factor and $\ta^{k-i}\tb^i$ for $s_n$ also. 
Notice here, that
the proof leads to $s_{n-1}$ scattered factors, if in the claim $s_n=0$ would be 
allowed. Consider 
now the scattered factors of the form $\ta^i\tb^j$ for $i,j\in[k]$. Let $m$ be 
the number of the 
block in which the \nth{i} $\ta$ occurs. If $s_m+\dots+s_n\geq k-i$ holds, 
$\ta^i\tb^{k-i}$ is
a scattered factor of $w$. Consider the opposite. This implies that from the 
\nth{m} till the 
\nth{n} block less then $k-i$ $\tb$ occur. Thus in the blocks $1$ to $i$ there 
occur more than $i$ 
$\tb$. Since the \nth{i} $\ta$ is in the \nth{m} block, from this point till the 
end there are 
$k-i$ $\ta$. Hence $\tb^{i}\ta^{k-i}$ is a scattered factor of $w$. So in each 
case at least one 
scattered factor occurs, i.e. at least $k+1$ scattered factors of this form are 
in the 
$k$-spectrum.
Notice here, that the argument holds still if $s_m=0$ is allowed. With a similar 
argumentation the 
number of occurrences of the form $\ta^{i}\tb^{j}\ta^{k-i-j}$ will be shown. If 
for a specific 
$i,j$-combination $\ta^i\tb^j\ta^{k-i-j}$ is not a scattered factor, then choose 
$m_1, m_2$ such 
that the \nth{i} $\ta$ is in block $m_1$ and the \nth{j} $\tb$ after that is in 
block $m_2$. Thus
in the blocks $m_2+1$ to $n$ are less than $k-i-j$ $\ta$. Let $r_{m_1}'$ be the 
$\ta$ in the 
\nth{$m_1$} block which don't belong to $\ta^i$. Then $r_{m_1}'+\dots+r_{m_2}$ 
contains more than 
$k-j$ $\ta$ since $k-j-i$ $\ta$ occur in the \nth{$m_1$'} to the \nth{n} block. 
Thus 
$\ta^{r_{m_1}'}\tb^{s_{m_1}}\dots \ta^{r_{m_2}}\tb^{s_{m_2}'}$ is a scattered 
factor of length at 
least $k+1$ where $s_{m_2}'$ describes the part of the \nth{$m_2$} block until 
the \nth{j} $\tb$. 
If $1<m_1,m_2<n$ holds, $\tb\ta^{k-j-1}\tb^{j-2}$ is a scattered factor of $w$. 
If $m_1=m_2=1$ 
holds, $\ta^{k-j-3}\tb\ta\tb$ is a scattered factor. If both are equal to $n$, 
$\tb\ta^{k-j-1}\tb^{j-2}$ is a scattered factor. In both cases the last $\tb$ 
exist even if 
$s_m=0$ 
holds, since the scattered factor ends in the examined block $m_2$. If $m_1<m_2$ 
holds, there 
exists a factor of length $>k$ which can be narrowed to a factor starting in 
$\ta$, ending in 
$\tb$, and having at least one {\em switch} from $\tb$ back to $\ta$ and back to 
$\tb$. This 
concludes to at least $(k-2)^2$ scattered factors of the form 
$\ta^i\tb^j\ta^{k-i-j}$ (or a 
different one in exchange). By $k^2-k+3\geq 3k-3$ for $k\geq 5$ follows the 
claim.\qed
\end{proof}

By Lemmas~\ref{gensquares} and \ref{lemhelp3} we are able to prove the 
following theorem, which 
shows the second gap in the set of cardinalities of $\ScatFact_k$ for words in $\Sigma_{\ed}^{2k}$.

\setcounter{theorem}{\value{definition}}
\begin{theorem}\label{2gap}
For $k\geq 5$ there does not exist a word $w\in\Sigma_{\ed}^{2k}$ with 
$k$-spectrum of cardinality 
$2k+i$ for $i\in[k-4]$. In other words, i.e. between $2k+1$ and $3k-4$ is a 
cardinality-gap.
\end{theorem}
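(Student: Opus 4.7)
The plan is to show that every $w\in\Sigma^{2k}_{\ed}$ with $|\ScatFact_k(w)|>2k$ actually satisfies $|\ScatFact_k(w)|\geq 3k-3$. Combined with Theorem~\ref{lem2k}, which pins down the three representatives with $|\ScatFact_k(w)|\in\{k+1,2k\}$ (namely $\ta^k\tb^k$, $\ta^{k-1}\tb\ta\tb^{k-1}$, and $\ta^{k-1}\tb^k\ta$), and with Lemma~\ref{gensquares}, whose first formula for $i=2$ gives $|\ScatFact_k(\ta^{k-2}\tb^k\ta^2)|=3k-3$ as a witness that the value just above the gap is realised, this establishes the claimed gap $\{2k+1,\dots,3k-4\}$.

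To obtain the lower bound, I would use the symmetries of reversal and renaming to assume that $w$ starts with $\ta$, write $w=\ta^{r_1}\tb^{s_1}\ta^{r_2}\tb^{s_2}\cdots$ with $r_1\geq 1$, and perform a case analysis on $r_1$. If $r_1\leq k-3$ and $w$ ends in $\tb$ (so all blocks have length at least one), the third claim of Lemma~\ref{lemhelp3} directly yields $|\ScatFact_k(w)|\geq 3k-3$; the sub-case where $w$ ends in $\ta$ is handled by the same counting, since the $(k-2)^2$ factors of the form $\ta^i\tb^j\ta^{k-i-j}$ identified in that proof do not require a trailing $\tb$-block. If $r_1=k-2$, the suffix contains exactly two further $\ta$'s, so either $w=\ta^{k-2}\tb^{s_1}\ta^2\tb^{s_2}$, which is handled by Lemma~\ref{gensquares} (first formula with $i=2$ when $s_2=0$, giving exactly $3k-3$, and fourth formula when $s_2\geq 1$, giving at least $3k-2$), or $w=\ta^{k-2}\tb^{s_1}\ta\tb^{s_2}\ta\tb^{s_3}$ with $s_1,s_2\geq 1$, which is handled by Lemma~\ref{gensquares} (third formula when $s_3=0$, value $\geq 4k-4$) or by the first claim of Lemma~\ref{lemhelp3} when $s_3\geq 1$ (value $\geq 3k-3$). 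Finally, if $r_1=k-1$ then $w=\ta^{k-1}\tb^{s_1}\ta\tb^{s_2}$; Theorem~\ref{lem2k} excludes $(s_1,s_2)\in\{(1,k-1),(k,0)\}$, the second formula of Lemma~\ref{gensquares} yields $3k-2$ for $s_1=2$, and for $s_1\geq 3$ applying reversal followed by renaming returns a word whose initial $\ta$-block has length $s_2\leq k-3$, which feeds back into the first case.

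The main obstacle will be verifying carefully that the sub-configurations in the $r_1=k-2$ case that only receive the weaker bound $3k-4$ from Lemma~\ref{lemhelp3}'s second claim never attain that value for our specific $w$: in each such instance either some intermediate block length must vanish, reducing the shape to one already controlled by Lemma~\ref{gensquares} with a strictly better bound, or an additional scattered factor not counted in the proof of Lemma~\ref{lemhelp3} can be exhibited, pushing the count back up to $3k-3$. Once this bookkeeping is done, the collection of lower bounds obtained across all cases uniformly exceeds $3k-4$, which rules out every cardinality in $\{2k+1,\dots,3k-4\}$ and thereby establishes the gap.
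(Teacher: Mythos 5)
Your proposal follows essentially the same route as the paper: exclude the three words with $|\ScatFact_k(w)|\le 2k$ via Theorem~\ref{lem2k}, exhibit $\ta^{k-2}\tb^k\ta^2$ as the witness for $3k-3$ via Lemma~\ref{gensquares}, and then run a case analysis on the length of the leading $\ta$-block, delegating the bounds to Lemmas~\ref{gensquares} and~\ref{lemhelp3}. Your write-up is in fact somewhat more careful than the paper's (e.g.\ the explicit reversal-plus-renaming reduction for $r_1=k-1$, $s_1\ge 3$, and the attention to words ending in $\ta$), but the decomposition and the key lemmas invoked are identical.
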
\refstepcounter{definition}

\begin{proof}
Theorems~\ref{lemsmallest} and \ref{lem2k} show that exactly the words 
$\ta^k\tb^k$, 
$\ta^{k-1},\tb\ta\tb^{k-1}$, and $\ta^{k-1}\tb^k$ $\ta$ have $k$-spectra of 
cardinality less than 
or 
equal to $2k$. By Lemma~\ref{gensquares} and \ref{lemhelp3} follows that 
$\ta^{k-2}\tb^k\ta^2$ has a 
$k$-spectrum of 
cardinality $3k-3$. Assume a 
$w\in\Sigma_{\ed}^{2k}\backslash\{\ta^k\tb^k,\ta^{k-1}\tb\ta$ 
$\tb^{k-1},\ta^{k-1}\tb^k\ta,\ta^{k-2}
\tb^k\ta^2\}$. Since renaming and reversal do not influence the cardinality, it 
can be assumed 
w.l.o.g. that $w$ starts with $\ta$. By assumption $w$ does not start with 
$\ta^k$. If $w$ starts
with $\ta^{k-1}$, $w=\ta^{k-1}\tb^i\ta\tb^{k-i}$ follows  with $i\in[k-1]_{\geq 
2}$ and by 
Lemma~\ref{gensquares} the $k$-spectrum has $(i+1)k-4i+6\geq 3k-2>3k-4$ 
elements. 
By 
Lemma~\ref{lemhelp3} the claim follows for words starting with $(k-2)$ $\ta$. 
and it is shown that words starting with at least two and at most 
$k-3$ $\ta$ lead 
to 
$k$-spectra of cardinality greater than $3k-3$.\qed
\end{proof}

Going further, we analyse the larger possible cardinalities of $\ScatFact_k$, trying to see what values are achievable (even if only asymptotically, in some cases). 

\setcounter{corollary}{\value{definition}}
\begin{corollary}\label{corsquares}
All square numbers, greater or equal to four, occur as the cardinality of the $k$-spectrum of a word $w\in \Sigma^{2k}_{\ed}$; 
in particular 
$|\ScatFact_k(\ta^{\frac{k}{2}}\tb^k\ta^{\frac{k}{2}})|=\left(\frac{k}{2}+1
\right)^2$ holds for $k$ 
even.
\end{corollary}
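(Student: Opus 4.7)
The plan is to obtain this as a direct consequence of the first bullet of Lemma~\ref{gensquares}, which asserts $|\ScatFact_k(\ta^{k-i}\tb^k\ta^i)| = k(i+1) - i^2 + 1$ for $k \geq 4$ and $i \in [\lfloor k/2 \rfloor]$. Specialising to even $k$ at $i = k/2$ (which lies in the permitted range, since $\lfloor k/2\rfloor = k/2$ in that case) and simplifying yields
$$k\bigl(\tfrac{k}{2} + 1\bigr) - \tfrac{k^2}{4} + 1 \;=\; \tfrac{k^2}{4} + k + 1 \;=\; \bigl(\tfrac{k}{2}+1\bigr)^2,$$
which establishes the explicit formula stated for $w = \ta^{k/2}\tb^k\ta^{k/2}$.

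For the general claim that every square number $\geq 4$ is realised, given $n \geq 3$ I would set $k = 2(n-1)$, so that $k$ is even and $k \geq 4$, and take $i = k/2 = n-1$. By the computation above,
$$|\ScatFact_k(\ta^{n-1}\tb^{2(n-1)}\ta^{n-1})| \;=\; n^2,$$
producing a strictly balanced word of length $2k$ whose $k$-spectrum has cardinality $n^2$. This covers every square $n^2 \geq 9$.

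The edge case $n^2 = 4$ falls outside the applicability of Lemma~\ref{gensquares}, because it would correspond to $k = 2$, which is too small. I would dispatch it separately by invoking Theorem~\ref{lemsmallest}: for $k = 3$ and $w = \ta^3\tb^3 \in \Sigma_{\ed}^{6}$, the minimum value $k + 1 = 4$ is attained. No substantive obstacle arises in this proof; the only points requiring care are the verification that $i = k/2$ lies in the admissible range of Lemma~\ref{gensquares} and the separate handling of the boundary value $n = 2$.
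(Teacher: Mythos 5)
Your proposal is correct and follows essentially the same route as the paper: apply the first bullet of Lemma~\ref{gensquares} at $i=\frac{k}{2}$ and simplify to $\left(\frac{k}{2}+1\right)^2$ (your arithmetic with the $+1$ term is the right one; the paper's displayed computation contains a sign typo but arrives at the same result). Your separate treatment of the square $4$ via Theorem~\ref{lemsmallest} with $k=3$ and $w=\ta^3\tb^3$ is a welcome extra step that the paper's proof leaves implicit, since the lemma's hypothesis $k\geq 4$ only yields the squares $9,16,25,\dots$ directly.
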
\refstepcounter{definition}

\begin{proof}
Apply Lemma~\ref{gensquares} to $i=\frac{k}{2}$. This implies that the 
cardinality of
the $k$-spectrum of $\ta^{\frac{k}{2}}\tb^k\ta^{\frac{k}{2}}$ is
\begin{align*}
k\left(\frac{k}{2}+1\right) - \frac{k^2}{4} -1 = 
\frac{1}{4}k^2+k-1=\left(\frac{k}{2}+1\right)^2.\qed
\end{align*}
\end{proof}

Inspired by the previous Corollary, we can show the following result concerning the asymptotic behaviour of the cardinality of $\ScatFact_k$ for words of length~$2k$.
\setcounter{proposition}{\value{definition}}
\begin{proposition}\label{nk}
Let $i>1$ be a fixed (constant) integer. Let $d= \lfloor\frac{k}{i}\rfloor$ and $r=k-di$, and 
$d'= \lfloor\frac{k}{i-1}\rfloor$ and $r'=k-d'(i-1)$ . Then the following hold:
\begin{itemize}
\item the word $\ta^r\tb^{r}(\ta^{d} \tb^d)^i$ has $\Theta(k^{2i-1})$ scattered 
factors of length $n$;
\item the word $\ta^r\tb^{r'}(\ta^{d} \tb^{d'})^{i-1}\ta^d$ has $\Theta(k^{2i-2})$ 
scattered factors of length $n$.
\end{itemize}
\end{proposition}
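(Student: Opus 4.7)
The plan is to parameterise scattered factors of length $k$ by recording, for each maximal letter-block of the source word, how many of its letters are kept. For $w_1$, writing the block decomposition as $A_0 B_0 A_1 B_1 \cdots A_i B_i$ with $|A_0|=|B_0|=r$ and $|A_j|=|B_j|=d$ for $j\geq 1$, every scattered factor of length $k$ equals $\ta^{a_0}\tb^{b_0}\ta^{a_1}\tb^{b_1}\cdots\ta^{a_i}\tb^{b_i}$ for some tuple $(a_0, b_0, \ldots, a_i, b_i)$ with $0\leq a_j\leq |A_j|$, $0\leq b_j\leq |B_j|$, and total sum $k$. The same parameterisation applies to $w_2 = A_0 B_0 A_1 B_1 \cdots A_{i-1} B_{i-1} A_i$, giving tuples of length $2i+1$. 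The key numerical fact is that $r = k-di\in[0,i-1]$ and $r' = k-d'(i-1)\in[0,i-2]$ are bounded by the constant $i$, whereas $d,d' = \Theta(k)$; hence the two boundary coordinates $a_0,b_0$ only take $O(1)$ values while the middle coordinates range up to $\Theta(k)$.

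For the upper bound, the number of distinct scattered factors is at most the number of parameterising tuples. Fixing $(a_0, b_0)$ to one of its $O(1)$ admissible values, the remaining $2i$ (respectively $2i-1$) non-negative variables sum to $k-O(1)$, and a standard stars-and-bars calculation gives $\Theta(k^{2i-1})$ tuples (respectively $\Theta(k^{2i-2})$). The per-coordinate caps $\leq d$ only reduce the count, so the bounds $O(k^{2i-1})$ and $O(k^{2i-2})$ follow.

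For the lower bound, I restrict to \emph{non-degenerate} tuples in which $a_0=b_0=0$ and every remaining coordinate is at least $1$. For such a tuple, the maximal-block decomposition of the resulting scattered factor $u$ is exactly the sequence of positive coordinates, so the tuple can be uniquely recovered from $u$. Hence distinct non-degenerate tuples yield distinct scattered factors. Under the substitution $a_j\mapsto a_j-1$, $b_j\mapsto b_j-1$, counting non-degenerate tuples reduces to counting non-negative compositions of $k-2i$ (respectively $k-(2i-1)$) into $2i$ (respectively $2i-1$) parts, each bounded by $d-1=\Theta(k)$. Since both the sum and the per-coordinate cap are $\Theta(k)$, the cap is not asymptotically binding, and the composition count is $\Theta(k^{2i-1})$ (respectively $\Theta(k^{2i-2})$), matching the upper bound.

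The main obstacle is establishing the injectivity used in the lower bound. Without the non-degeneracy restriction, different tuples can produce the same scattered factor: for instance, if $a_j=0$ for some $1\leq j\leq i$, the two adjacent $\tb$-runs $\tb^{b_{j-1}}$ and $\tb^{b_j}$ collapse into a single maximal block, so any reallocation of the combined length between $b_{j-1}$ and $b_j$ yields the same word. Forcing all middle coordinates to be strictly positive ensures that each coordinate corresponds to a genuine maximal block of the scattered factor, which resolves this ambiguity. A residual check is that imposing strict positivity does not depress the composition count below the target order; this follows by restricting the shifted variables to lie in the middle third $[0, \lfloor d/3\rfloor]$ of their allowed range, where the sum-$k$ constraint is attainable for large $k$ and the count is clearly $\Theta(k^{2i-1})$, respectively $\Theta(k^{2i-2})$.
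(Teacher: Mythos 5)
Your overall strategy---parameterising scattered factors by the number of letters kept from each maximal block, bounding above by counting tuples, and bounding below via an injective subfamily of non-degenerate tuples---is essentially the paper's own proof, and your upper bound and your injectivity argument are both sound. The one step that fails is the final count in the lower bound. After your shift, you need $2i$ non-negative integers, each at most $d-1$ with $d=\lfloor k/i\rfloor$, summing to $k-2i$. You propose to restrict each shifted variable to $[0,\lfloor d/3\rfloor]$ and assert that the sum constraint is then attainable for large $k$. It is not: $2i$ integers each at most $\lfloor d/3\rfloor\leq \frac{k}{3i}$ sum to at most $\frac{2k}{3}$, which is strictly less than $k-2i$ once $k$ is large, so the restricted family you are counting is empty and yields no lower bound at all. (As a side remark, $[0,\lfloor d/3\rfloor]$ is the bottom third of the allowed range, not the middle third.)

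The repair is routine and is exactly what the paper does. The natural value per coordinate is about $\frac{k-2i}{2i}\approx\frac{k}{2i}$, which sits comfortably inside $[0,d-1]\approx[0,k/i]$; so choose the first $2i-1$ shifted variables independently from a window of width $\Theta(k)$ centred at $\lfloor\frac{k}{2i}\rfloor$ (half-width $\lfloor\frac{k}{8i^2}\rfloor$ suffices) and let the last one be determined by the sum. The accumulated deviation of the free coordinates from their centres is at most $\frac{k}{4i}$, so the determined coordinate also lies in $[0,d-1]$, every such tuple is admissible and non-degenerate, and the count is $\left(\Theta(k)\right)^{2i-1}=\Theta(k^{2i-1})$ (respectively $\Theta(k^{2i-2})$ for the second word). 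With that substitution your argument is complete and coincides with the paper's, which picks each free coordinate from the interval $\bigl[\frac{k(i-1)}{i(2i-1)},\frac{k}{2i-1}\bigr]$ of width $\frac{k}{i(2i-1)}$ for the same purpose.
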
\refstepcounter{definition}

\begin{proof}
Let us first show the upper bounds. The following algorithm can be used to find 
the scattered factors of length $k$ of  $\ta^r\tb^{r}(\ta^{d} \tb^d)^i$. Choose 
$2$ numbers $q_1$ and $q_2$ from $[i]_0$, and $2i-1$ integers 
$r_1,\ldots,r_{2i-1}$ from $[d]_0$. Let $r_{2i}=k- (q_1+q_2+\sum_{j\in 
[2i-1]}r_j)$. If $r_{2i}\geq 0$ then the word 
\[w'=a^{q_1}b^{q_2}(a^{r_1}b^{r_2})(a^{r_3}b^{r_4})\cdots 
(a^{r_{2i-1}}b^{r_{2i}})\] is a scattered factor of $\ta^r\tb^{r}(\ta^{d} 
\tb^d)^i$, and all scattered factors of length $k$ of this word have this form. 
From the construction of $w'$, because $d\leq \frac{k}{i}$, it follows that 
there are at most $O(i^2k^{2i-1})$ possible ways to obtain it. As $i$ is seen 
as 
a constant, this means that $\ta^r\tb^{r}(\ta^{d} \tb^d)^i$ has $O(n^{2i-1})$ 
scattered factors of length~$k$. 

In the same way one can show that  $\ta^r\tb^{r'}(\ta^{d} \tb^{d'})^{i-1}\ta^d$ 
has $O(n^{2i-2})$ scattered factors of length $n$.  

Let us now show the lower bounds. We first consider the word 
$\ta^r\tb^{r}(\ta^{d} \tb^d)^i$. As $i$ is constant, let us assume that 
$k>\frac{i(2i-1)}{i-1}$. Clearly, $\frac{k(i-1)}{i(2i-1)}< \frac{k}{2i-1} \leq 
\frac{k}{i}-1 \leq d\leq \frac{k}{i}$ and $d+r\geq \frac{k}{i}$. We generate 
scattered factors of the word $\ta^r\tb^{r}(\ta^{d} \tb^d)^i$ as follows. We 
firstly choose $2i-1$ integers $r_1,\ldots, r_{2i-1}$ between 
$\frac{k(i-1)}{i(2i-1)}$ and $\frac{k}{2i-1}$. Under our assumptions, the word 
\[w''=b^{r_1}(a^{r_2} b^{r_3})\cdots (a^{r_{2i-2}} b^{r_{2i-1}})\] is a 
scattered factor of the suffix $\tb^d(\ta^{d} \tb^d)^{i-1}$ of  
$\ta^r\tb^{r}(\ta^{d} \tb^d)^i$. Let $r_0=k-\sum_{j\in[2i-1]}r_j$. We have 
$r_0\leq \frac{k}{i}\leq d+r$, so $a^{r_0}w''$ is a scattered factor of 
$\ta^r(\ta^{d} \tb^d)^i$, so also of $\ta^r\tb^{r}(\ta^{d} \tb^d)^i$. Moreover, 
each choice of a tuple $(r_1,\ldots,r_{2i-1})$ leads to a different scattered 
factor of $\ta^r\tb^{r}(\ta^{d} \tb^d)^i$. The total number of tuples we choose 
is 
\[\left(\frac{k}{2i-1} 
-\frac{k(i-1)}{i(2i-1)}\right)^{2i-1}=\left(\frac{k}{i(2i-1)}\right)^{2i-1}.\] 

So the total number of scattered factors of length $k$ of $\ta^r\tb^{r}(\ta^{d} 
\tb^d)^i$ is at least $\left(\frac{k}{i(2i-1)}\right)^{2i-1}$. As the total 
number of scattered factors of length $k$ of $\ta^r\tb^{r}(\ta^{d} \tb^d)^i$ is 
also $O(k^{2i-1})$, we get that $\ta^r\tb^{r}(\ta^{d} \tb^d)^i$ has 
$\Theta(k^{2i-1})$ scattered factors of length $k$.

The proof that  $\ta^r\tb^{r'}(\ta^{d} \tb^{d'})^{i-1}\ta^d$ has 
$\Theta(n^{2i-2})$ scattered factors of length $k$ follows in a very similar 
manner.
\qed \end{proof}

\setcounter{remark}{\value{definition}}
\begin{remark}
Let $i$ be an integer, and consider $k$ another integer divisible by $i$. 
Consider the word $w_k=(a^{\frac{k}{i}} b^{\frac{k}{i}})^i$. The exact number of 
scattered factors of length $k$ of $w_k$ equals to the number 
$C\left(k,2i,\frac{k}{i}\right)$ of weak $2i$-compositions of $k$, whose terms 
are bounded by $\frac{k}{i}$, i.e., the number of ways in which $k$ can be 
written as a sum $\sum_{j\in[2i]}r_j$ where $r_j\in \left[\frac{k}{i}\right]_0$. From Proposition~\ref{nk} we also get that this number is $\Theta(n^{2k-1})$, but we also have: 
\[C\left(k,2i,\frac{k}{i}\right)=\sum_{0\leq j<M}(-1)^j{{2i}\choose{j}} 
{{k+2i-j(\frac{k}{i}+1)-1}\choose{2i-1}},\]
for $M=\frac{i(k+2i-1)}{k+i}.$
It is known that there exists a constant $E>0$ such that 
\[C\left(k,2i,\frac{k}{i}\right)\leq E\cdot \sum_{0\leq 
j<M}(-1)^j{{2i}\choose{j}} 
\left(k+2i-j\left(\frac{k}{i}+1\right)-1\right)^{2i-1}.\]
The coefficient of $k^{2i-1}$ in the right hand side of this inequality has to 
be positive. Consequently $\sum_{0\leq j<M}(-1)^j{{2i}\choose{j}} 
(i-j)^{2i-1}>0$.
 This seems to be an interesting combinatorial inequality in itself. 
 
One can also show as in Proposition~\ref{nk}  that the number of scattered factors of length $k$ of $w_k$, which have, 
at their turn, $(ab)^{i}$ as a scattered factor, is $\Theta(k^{2i-1})$. This 
number also equals the number $C'\left(k,2i,\frac{k}{i}\right)$ of $2i$-compositions 
of $k$ whose terms are strictly positive integers upper bounded by 
$\frac{k}{i}$, i.e., the number of ways in which $k$ can be written as a sum 
$\sum_{j\in[2i]}r_j$ where $r_j\in \left[\frac{k}{i}\right]$. Just as above, 
from this we get $\sum_{0\leq j<i}(-1)^j{{2i}\choose{j}} 
(i-j)^{2i-1}>0$.
Again, this inequality seems interesting to us.
\end{remark}\refstepcounter{definition}

We will end this analysis with the conjecture that, in contrast to the first gap, 
which always starts 
immediately after the first obtainable cardinality, the last gap ends earlier 
the larger $k$ is. 
More precisely, if $w=\ta^2\tb^2(\ta\tb)^{k-3-i}\tb\ta(\ta\tb)^i$ for 
$k\in\N_{\geq 4}$, $i\in[k-2]_0$  then $|\ScatFact_k(w)|=2^k-2-i$.

At the end of this section, we will briefly introduce $\theta$-palindromes in 
this specific 
setting. Let $\theta:\Sigma^{\ast}\rightarrow\Sigma^{\ast}$ be an antimorphic 
involution, i.e. 
$\theta(uv)=\theta(v)\theta(u)$ and $\theta^2$ is the identity on 
$\Sigma^{\ast}$. By $\Sigma=\{\ta,\tb\}$ only the identity and renaming are 
such mappings.  The fixed points of $\theta$ are called $\theta$-palindromes 
($\ta\tb^3.\theta(\tb)^3\theta(\ta)$) and exactly the words where 
$w^R=\overline{w}$ holds. They 
were studied in different fields well (see e.g., 
\cite{journals/iandc/FazekasMMS14}, 
\cite{journals/nc/KariM10}). A word $w\in\Sigma_{\ed}^{2k}$ is a 
$\theta$-palindrome 
iff either $w\in\{\ta w'\tb,\tb w' \ta\}$ for some $\theta$-palindrome 
$w'\in\Sigma_{\ed}^{2(k-1)}$ 
or additionally $w=\ta^{\frac{k}{2}}\tb^k\ta^{\frac{k}{2}}$ in the case that $k$ 
is even. Two 
cardinality results for $\theta$-palindromes are presented in 
Lemma~\ref{gensquares} and 
Corollary~\ref{corsquares}. We believe that persuing the $k$-spectra of 
$\theta$-palindromes may 
lead to a deeper insight of which cardinalities can be reached, but due to space 
restrictions we 
will only mention one conjecture here, which may already show that cardinalities 
are somehow 
propagating for $\theta$-palindromes. Notice that this conjecture implies that 
indeed similar to the second gap here 
$4k-4$ is always 
reached but that in contrast to the second gap, the third gap is not of the 
form 
$4k-4-i$ for 
$i\in[k-4]$.

\setcounter{conjecture}{\value{definition}}
\begin{conjecture}
The $k$-spectrum of $w=\ta\tb^{k-1}\ta^{k-1}\tb$ has $4(k-1)$ elements and 
moreover if $w'=w^R$ with a $k$-spectrum of cardinality $\ell\in\N_{\geq 12}$
then the scattered factor set of $\ta w\tb$ has cardinality $2\frac{1}{4}\ell 
-5$.
\end{conjecture}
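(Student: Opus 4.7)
The plan is to prove both parts of the conjecture by a direct block-structure classification, following the style of the enumerations in Lemma~\ref{gensquares} and Proposition~\ref{lem2^k-1}. Both words involved, $w=\ta\tb^{k-1}\ta^{k-1}\tb$ and $\ta w \tb=\ta^2\tb^{k-1}\ta^{k-1}\tb^2$, consist of exactly four maximal blocks, so every scattered factor (of the appropriate length) has a block pattern obtained as a subsequence of $w$'s block sequence, with adjacent same-letter blocks merging when the separating block is skipped. This leaves at most seven admissible shapes to enumerate: $\ta^{\bullet}$, $\tb^{\bullet}$, $\ta^r\tb^s$, $\tb^r\ta^s$, $\ta^r\tb^s\ta^t$, $\tb^r\ta^s\tb^t$, and $\ta^r\tb^s\ta^t\tb^u$ (where the displayed exponents are all at least $1$).

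For the first claim I would count each shape separately for $w$. Because the first and last blocks of $w$ have length one, whenever an outer block contributes to a shape its exponent is forced to equal $1$; the remaining exponents then range over compositions of the residual mass with parts bounded by $k-1$. This yields, in order, the counts $1,1,2,k-1,k-2,k-2,k-3$, which sum to $4(k-1)$.

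For the second claim, I would apply the same enumeration to $\ta w\tb=\ta^2\tb^{k-1}\ta^{k-1}\tb^2$ but at length $k+1$. The only substantive change is that an outer-block exponent may now take either value in $\{1,2\}$; the two-, three- and four-block shapes must therefore be split further according to these outer exponents. The resulting counts are $1,1,4,k-2,2k-3,2k-3,4k-12$, which sum to $9k-14$; since $\ell=4(k-1)$, this agrees with $\tfrac{9}{4}\ell-5$ exactly.

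The main obstacle is the bookkeeping in the four-block shape for the second claim, where the pair $(r,u)\in\{1,2\}^2$ produces four subcases whose composition counts depend linearly on $k$ and may degenerate at the small end of the admissible range. A sanity check at the boundary $\ell=12$ (i.e.\ $k=4$), where the subcase $r=u=2$ contributes no factor, confirms that the formula still returns $9k-14=22$. Distinctness across and within shapes is immediate---different shapes give different block structures, and different exponent tuples within a shape give different words---so the per-shape counts may be added without risk of double-counting.
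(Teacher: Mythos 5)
This statement is left as an open \emph{conjecture} in the paper---no proof is given there---so there is nothing to compare your argument against; what you have written is in effect a proposed resolution rather than an alternative to an existing proof. Your approach is sound and your numbers check out. Since $w=\ta\tb^{k-1}\ta^{k-1}\tb$ and $\ta w\tb=\ta^2\tb^{k-1}\ta^{k-1}\tb^2$ each have exactly four maximal blocks, every scattered factor has one of the seven alternating block shapes you list, distinct shapes and distinct exponent tuples give distinct words, and each per-shape count reduces to counting bounded compositions. I verified the tallies: for $w$ they are $1,1,2,k-1,k-2,k-2,k-3$, summing to $4(k-1)$; for the $(k+1)$-spectrum of $\ta w\tb$ they are $1,1,4,k-2,2k-3,2k-3,4k-12$, summing to $9k-14=\tfrac{9}{4}\cdot 4(k-1)-5$ as claimed (and since $|\ScatFact_k(w^R)|=|\ScatFact_k(w)|$, the hypothesis on $w'$ just fixes $\ell=4(k-1)$). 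Two points should be made explicit in a full write-up. First, the count $4$ for the shape $\ta^r\tb^s$ in $\ta w\tb$ comes from $r\in\{1,2\}\cup\{k-1,k\}$, and these two pairs are disjoint only for $k\geq 4$; this is exactly where the hypothesis $\ell\geq 12$ enters (at $k=3$ the individual shape counts change, even though the total $9k-14$ happens to survive). Second, you should state that ``the scattered factor set of $\ta w\tb$'' is read as its $(k+1)$-spectrum, the only reading consistent with the surrounding section. With the four-block case of $\ta w\tb$ written out explicitly as the split $(k-2)+(k-3)+(k-3)+(k-4)$ over $(r,u)\in\{1,2\}^2$, this is a complete and correct argument, carried out in the same block-enumeration style the paper uses in Lemma~\ref{gensquares}.
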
\refstepcounter{definition}

\section{Reconstructing Weakly-$0$-Balanced Words from their $k$-Spectra}
\label{reconstruct}
In the final section we consider the slightly different problem of reconstructing a word from its scattered factors, or more specifically in this case, $k$-spectra. More generally, we are interested in how much information about a (weakly-$0$-balanced) word $w$ is 
contained in its scattered factors, and more precisely, which scattered factors 
are not necessary or useful for reconstructing the word $w$, or distinguishing it from others. Since $w$ is a scattered factor of itself, it is trivial that the 
scattered factor of length $|w|$ is sufficient to uniquely reconstruct $w$. On 
the other hand, all words over $\{\ta,\tb\}^*$ containing both letters will have the 
same $1$-spectrum. Thus we see that the length of the scattered factors of a 
word $w$ plays a role in how much information about $w$ they contain. This 
relationship is described more precisely by the following result of Dress and 
Erd\"{o}s~\cite{dress04} along with the fact that (cf. e.g. 
Proposition~\ref{lem:full}) a word of length $2k$ is not uniquely determined by 
its scattered factors of length $k$.

\setcounter{proposition}{\value{definition}}
\begin{proposition}[Dress and Erd\"{o}s \cite{dress04}]\label{prop:erdos}
If $\ScatFact_{k+1}(w)= \ScatFact_{k+1}(w^\prime)$ holds for $w, w^\prime \in 
\Sigma^{\leq 2k}$ 
then $w = w^\prime$ follows.
\end{proposition}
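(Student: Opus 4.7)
The plan is to proceed by induction on $n=\max(|w|,|w'|)$. The base case $n\leq k+1$ is handled directly: if $n=k+1$, then $w$ is itself the unique element of $\ScatFact_{k+1}(w)$, and so $w'$, being of length at most $k+1$ with the same nonempty spectrum, must equal $w$; the cases $n\leq k$ are either vacuous or reduce to checking that the ambient hypotheses of the result effectively exclude them.

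For the inductive step with $k+2 \leq n \leq 2k$, my first goal is to show that $w[1]=w'[1]$. The crucial structural leverage comes from a density bound: because each scattered factor in question has length $k+1$ while $|w|,|w'|\leq 2k$, any deletion sequence realising it omits at most $n-(k+1)\leq k-1$ positions from the source word. So if $w[1]=\ta$ and one supposes $w'[1]=\tb$ for contradiction, I would construct a scattered factor of $w$ of length $k+1$ that starts with $\ta$ and uses position~$1$ of $w$ together with a greedily chosen continuation from $w[2..|w|]$; the density bound forces the initial $\ta$-run of this factor to be longer than any $\ta$-starting length-$(k+1)$ scattered factor that $w'$ (which must first skip its initial $\tb$) can provide within its budget, giving the contradiction. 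A symmetric argument applied to the reversal (or directly to the last letter) gives $w[|w|]=w'[|w'|]$.

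Once $w[1]=w'[1]=a$ is known, the plan is to pass to the trimmed words $\tilde w = w[2..|w|]$ and $\tilde w' = w'[2..|w'|]$, which lie in $\Sigma^{\leq 2k-1}\subseteq \Sigma^{\leq 2k}$, and invoke the inductive hypothesis. For this, one must establish that $\ScatFact_{k+1}(\tilde w)=\ScatFact_{k+1}(\tilde w')$ (or obtain the corresponding equality at an appropriate length for which induction applies). The main obstacle is exactly this reduction: the spectrum equality for $w$ and $w'$ does not immediately transfer to the trimmed words, since a factor in $\ScatFact_{k+1}(w)$ may or may not use position~$1$, and one needs to argue that those which do not use position~$1$ coincide on both sides, while those which do use it can be characterised internally from the spectrum using the determined first letter $a$ together with the density bound. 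Disentangling these two families cleanly, so that the induction can be driven either on length or jointly on length and $k$, is the technical core of the argument; once it is in place the proof closes.
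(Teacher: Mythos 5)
Your plan is genuinely different from the paper's argument, but it has a gap that is not merely technical. The step you yourself flag as ``the technical core'' --- passing from $\ScatFact_{k+1}(w)=\ScatFact_{k+1}(w')$ and $w[1]=w'[1]$ to the corresponding spectrum equality for the trimmed words --- is the entire difficulty, and for the statement as you are attacking it (arbitrary $w,w'\in\Sigma^{\leq 2k}$) it cannot be carried out, because that statement is false at this level of generality: for $k\geq 2$ the words $\ta^{2k}$ and $\ta^{2k-1}$ both have $(k+1)$-spectrum $\{\ta^{k+1}\}$ yet are distinct (and for $|w|,|w'|\leq k$ both spectra are empty, which your base-case remark glosses over rather than resolves). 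Any correct proof must therefore use a restriction that your induction never invokes; the paper only proves the proposition for $w,w'\in\Sigma_{\ed}^{2k}$, i.e.\ strictly balanced words of length exactly $2k$, and its argument leans precisely on the fact that such a word has exactly $k$ occurrences of each letter. Your trimming step is also structurally incompatible with that restriction: removing the first letter leaves the class of strictly balanced words, so the inductive hypothesis you would need is again the general (false) statement. Finally, the first-letter argument via a ``density bound'' and a ``greedily chosen continuation'' is a heuristic, not a proof; as written there is no verifiable inequality forcing the claimed contradiction.

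For contrast, the paper's proof is a direct, non-inductive reconstruction: for each $i+j=k$ it queries whether $\ta^i\tb\ta^j$ and $\tb^i\ta\tb^j$ lie in the $(k+1)$-spectrum. Because a strictly balanced word of length $2k$ contains exactly $k$ letters $\ta$, the factor $\ta^i\tb\ta^j$ accounts for \emph{all} of them, so membership tells you exactly whether a $\tb$ separates the $i$th and $(i+1)$th occurrences of $\ta$; together with the symmetric queries this recovers the full block structure of $w$, hence $w$ itself. If you want to salvage an inductive approach, you would first have to restate the proposition with hypotheses under which it is actually true (e.g.\ $|w|=|w'|=n$ and $k+1>n/2$, as in Dress--Erd\H{o}s) and then prove the trimming lemma under those hypotheses --- which is where all the work lies.
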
\refstepcounter{definition}

\begin{proof}[for $w$, $w'$ being
weakly-$0$-balanced]
We give a procedure for uniquely reconstructing $w$ from $\ScatFact_k(w)$. For 
all 
$i,j \in\mathbb{N}_0$ such that $i+j = k$, ask whether $\ta^i\tb\ta^j \in 
\ScatFact_k(w)$. Since 
there are exactly $i+j$ occurrences of $\ta$ in $w$, all are accounted for in 
the (potential) 
scattered factor $\ta^i\tb\ta^j$, and thus the answer is `yes' if and only if 
there are one or more 
$\tb$s between the \nth{$i$} and \nth{$(i+1)$} occurrences of $\ta$ in $w$. Hence after 
these queries, we 
know 
exactly which $\ta$s are consecutive (i.e. do not have a $\tb$ between them) in 
$w$. Similarly we 
ask for all $i,j \in\mathbb{N}_0$ such that $i+j = k$, ask whether 
$\tb^i\ta\tb^j \in 
\ScatFact_k(w)$. By symmetry, this tells us exactly which $\tb$s are 
consecutive. This is 
sufficient 
information to specify $w$ completely.\qed
\end{proof}

In the proof of Proposition~\ref{prop:erdos}, 
a pivotal role is played by scattered factors which 
contain many $\ta$s and a few $\tb$s or vice-versa. The 
question arises as to whether this is due to the fact that these 
scattered factors contain inherently more information about the structure of the 
whole word than e.g., weakly-$0$-balanced ones. In the general case, the answer is, 
sometimes at least, yes: we cannot distinguish between e.g. two words in 
$\{\ta\}^*$ by their weakly-$0$-balanced scattered factors, as the only such factor is $\varepsilon$. The same problem arises for all words which have a sufficiently uneven ratio of $\ta$s to $\tb$s. 

However, if in addition we consider only weakly-$0$-balanced words, then the 
situation changes. We conjecture that in fact, for these words $w$, 
the weakly-$0$-balanced scattered factors are just as informative about the $w$ as 
the unbalanced ones. More formally, we believe the following adaptation of 
Proposition~\ref{prop:erdos} holds:

\setcounter{conjecture}{\value{definition}}
\begin{conjecture}\label{con:bounds}
Let $k \in \mathbb{N}$. Let $k^\prime = k+1$ for odd $k$, and $k^\prime = k+2$ 
for even~$k$.~Let 
$w, w^\prime \in \Sigma_{\ed}^{2k}$ such that $\ScatFact_{k^\prime}(w) \cap 
\Sigma^{k'}_{\ed}= 
\ScatFact_{k^\prime}(w^\prime) \cap \Sigma^{k'}_{\ed}$. Then $w = w^\prime$.
\end{conjecture}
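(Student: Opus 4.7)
The plan is to adapt the Dress--Erd\H{o}s argument of Proposition~\ref{prop:erdos} so that it uses only balanced scattered factors. Their reconstruction relies on the unbalanced queries $\ta^i\tb\ta^j$ and $\tb^i\ta\tb^j$ with $i+j=k$, which detect, for every pair of consecutive letters in $w$, whether the opposite letter lies in between. The task is to reproduce this information with queries that are themselves strictly balanced and of length $k'$.

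I would first focus on the restricted setting highlighted in the abstract, where $w$ contains at most two $\tb$-blocks. Such a $w$ has the canonical form $w = \ta^{a_1}\tb^{b_1}\ta^{a_2}\tb^{b_2}\ta^{a_3}$ with $a_1+a_2+a_3 = b_1+b_2 = k$, the one-block subcase being obtained by letting $b_2=0$. Reconstruction from $S := \ScatFact_{k'}(w)\cap \Sigma_{\ed}^{k'}$ then reduces to recovering the tuple $(a_1,a_2,a_3,b_1,b_2)$.

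Writing $m := k'/2$, the main step is to analyse the balanced queries $\ta^{i'}\tb^{m}\ta^{j'}$ with $i'+j'=m$, together with the dual queries $\tb^{i'}\ta^{m}\tb^{j'}$. A routine case split on how the $m$ chosen $\tb$s distribute between $B_1$ and $B_2$ yields that $\ta^{i'}\tb^{m}\ta^{j'}$ lies in $\ScatFact(w)$ precisely when at least one of three conditions holds: ($m\leq b_1$, $i'\leq a_1$, $j'\leq a_2+a_3$); ($m\leq b_2$, $i'\leq a_1+a_2$, $j'\leq a_3$); or ($b_1,b_2\geq 1$, $i'\leq a_1$, $j'\leq a_3$). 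The choice of $k'$ forces $m > k/2$, so these three alternatives do not collapse, and inspecting the coordinate maxima and minima of $T := \{(i',j') : \ta^{i'}\tb^{m}\ta^{j'}\in S\}$ reveals the thresholds $a_1$, $a_1+a_2$, $a_3$, $a_2+a_3$; combined with $a_1+a_2+a_3=k$ this pins down the $a_i$'s, and the dual queries pin down $b_1,b_2$. A preliminary step is to decide from $S$ alone whether $w$ has one or two $\tb$-blocks: a one-block $w$ cannot generate any scattered factor whose own spelling contains two $\tb$-blocks, so the presence in $S$ of such a factor forces the two-block case. Some care is needed at the boundary where some $a_i$ or $b_i$ vanishes and the two-block case degenerates into the one-block one, but this is a finite check.

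The main obstacle, and the reason the statement remains a conjecture, is the general case without any restriction on the number of $\tb$-blocks. When $w$ has many alternations, a single balanced query of length $k'$ can be realised by many different block decompositions, and the direct recovery above breaks down, since the extremal pairs in the analogue of $T$ no longer isolate individual block sizes. I expect a full proof to require a symmetry-breaking argument: assuming $w\neq w'$ are strictly balanced of length $2k$ with matching balanced $k'$-spectra, locate the leftmost position where they differ and construct from the local structure around that position a balanced scattered factor of length $k'$ that separates them. Making this construction work uniformly across all configurations, rather than leaving awkward alternation patterns unresolved, is the chief technical difficulty.
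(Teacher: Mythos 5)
The statement you are addressing is a conjecture: the paper does not prove it, and only establishes the special case of at most two $\tb$-blocks (Proposition~\ref{propjoel}). You correctly scope your attempt to that same special case and correctly identify the general case as the open part, so the overall framing matches the paper. However, your sketch of the special case breaks at its central step. Your three-way case split for when $\ta^{i'}\tb^{m}\ta^{j'}$ is a scattered factor of $\ta^{a_1}\tb^{b_1}\ta^{a_2}\tb^{b_2}\ta^{a_3}$ is correct, but the conclusion that the coordinate extrema of $T$ (together with the dual set) reveal the four thresholds $a_1$, $a_1+a_2$, $a_3$, $a_2+a_3$ is false. Whenever $b_2\geq m$, conditions (2) and (3) collapse (condition (3) is subsumed by (2)), and the maximal $i'$ only reveals $\min(a_1+a_2,m)$; the split of $a_1+a_2$ into $a_1$ and $a_2$ is invisible to these queries. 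Concretely, for $k=7$, $k'=8$, $m=4$, take $w_1=\ta^{5}\tb^{3}\ta\tb^{4}\ta$ and $w_2=\ta^{4}\tb^{3}\ta^{2}\tb^{4}\ta$: one checks that $\{(i',j'): \ta^{i'}\tb^{4}\ta^{j'}\in\ScatFact(w_1)\}=\{(4,0),(3,1)\}=\{(i',j'): \ta^{i'}\tb^{4}\ta^{j'}\in\ScatFact(w_2)\}$ and that the dual sets both equal $\{(0,4)\}$, yet $w_1\neq w_2$. So the query family you designate as ``the main step'' cannot complete the reconstruction.

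The missing ingredient is exactly what the paper's proof of Proposition~\ref{propjoel} leans on: balanced scattered factors whose own spelling has \emph{two} $\tb$-blocks, i.e.\ factors of the forms $\ta^{m_1}\tb^{+}\ta^{+}\tb^{+}\ta^{m_2}$, $\ta^{*}\tb^{+}\ta^{m}\tb^{+}\ta^{*}$ and $\ta^{*}\tb^{m}\ta^{+}\tb^{+}\ta^{*}$. In the example above, $\ta^{2}\tb\ta^{2}\tb^{3}$ lies in $\ScatFact_{8}(w_2)\cap\Sigma_{\ed}^{8}$ but not in $\ScatFact_{8}(w_1)\cap\Sigma_{\ed}^{8}$, because separating the two words requires probing the middle block $\ta^{a_2}$ with a $\tb$ on each side. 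You use such factors only in your preliminary one-block-versus-two-block test; they must carry the main load of the reconstruction, via a case analysis on whether $a_2=0$, $a_2\geq\frac{k+1}{2}$, or $0<a_2<\frac{k+1}{2}$, as in the paper's argument. Your closing assessment of why the unrestricted conjecture is hard (many block decompositions realising the same balanced factor) is reasonable and consistent with the paper's own discussion, but it does not repair the gap in the two-block case.
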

\refstepcounter{definition}

While we do not resolve the conjecture, we give an example of a subclass of 
words for which it holds true, namely when there are at most two blocks of 
$\tb$s (and therefore by symmetry if there are at most two blocks of 
$\ta$s).

\setcounter{proposition}{\value{definition}}
\begin{proposition}\label{propjoel}
Let $k\in \mathbb{N}$. If $k$ is odd, then each word $w \in 
\ta^*\tb^*\ta^*\tb^*\ta^* \cap 
\Sigma_{\ed}^{2k}$ is uniquely determined by the set $\ScatFact_{k+1}(w) \cap 
\Sigma_{\ed}^{k+1}$. 
Similarly, if $k$ is even, then each word $w \in \ta^*\tb^*\ta^*\tb^*\ta^* \cap 
\Sigma_{\ed}^{2k}$ 
is uniquely determined by the set $\ScatFact_{k+2}(w) \cap \Sigma_{\ed}^{k+2}$.
\end{proposition}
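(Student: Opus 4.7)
The plan is to parameterise $w = \ta^{a_1}\tb^{b_1}\ta^{a_2}\tb^{b_2}\ta^{a_3}$ with $a_1, a_2, a_3, b_1, b_2 \geq 0$, $a_1 + a_2 + a_3 = k$, and $b_1 + b_2 = k$, and to show that the five parameters are determined by the set $S := \ScatFact_{k'}(w) \cap \Sigma_{\ed}^{k'}$, where $k' = k + 1$ for $k$ odd and $k' = k + 2$ for $k$ even. Setting $m = k'/2$, every element of $S$ is a word of length $2m$ containing exactly $m$ copies of each letter.

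The key structural observation is that every scattered factor of $w$ itself lies in $\ta^*\tb^*\ta^*\tb^*\ta^*$, since deleting letters can only merge blocks and $w$ has at most two $\tb$-blocks. Thus each $u \in S$ can be written as $u = \ta^{x_1}\tb^{y_1}\ta^{x_2}\tb^{y_2}\ta^{x_3}$ with $x_1+x_2+x_3 = y_1+y_2 = m$ and $x_i, y_j \geq 0$. A short positional argument shows that if $x_2, y_1, y_2 \geq 1$, then in any embedding of $u$ into $w$ the block $\tb^{y_1}$ must land entirely in the first $\tb$-block of $w$ and $\tb^{y_2}$ entirely in the second (otherwise $\tb^{y_2}$ would need to sit after the third $\ta$-block of $w$, where no $\tb$s remain). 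Consequently, the three $\ta$-blocks of $u$ are forced one-by-one into the three $\ta$-blocks of $w$, and such a ``5-block'' $u$ is a scattered factor of $w$ iff $x_1 \leq a_1$, $y_1 \leq b_1$, $x_2 \leq a_2$, $y_2 \leq b_2$, and $x_3 \leq a_3$.

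Using these 5-block queries I can read off from $S$ the capped values $\min(a_i, m-1)$ for each $i$ and $\min(b_j, m-1)$ for each $j$, by fixing the other coordinates at their minimum admissible values and maximising the coordinate of interest. The sum constraints $a_1+a_2+a_3 = b_1+b_2 = k$ then determine most configurations uniquely, since for $k$ odd the sum $2m-1$ forces exactly one $b_j$ and at most one $a_i$ to reach $m$, while for $k$ even the sum $2m-2$ still forces at most one of each but additionally permits $b_1 = b_2 = m-1$. Residual ambiguities, such as ``$b_1 = m-1, b_2 = m$ vs.\ $b_1 = m, b_2 = m-1$'' (for $k$ odd), or configurations in which two of the $a_i$ are tied at $m-1$ with the third equal to $0$, are resolved using short auxiliary queries such as $\ta^m\tb^m$, $\tb^m\ta^m$, $\tb\ta^m\tb^{m-1}$, or $\ta^{m-1}\tb\ta\tb^{m-1}$, each strictly balanced of length $2m$; a direct positional check shows that the SF-status of each auxiliary in $w$ depends on which of the candidate configurations holds, thereby distinguishing them.

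The main obstacle is the careful enumeration of all residual configurations together with the choice of a distinguishing auxiliary query for each, and verifying that this distinguishing query is itself strictly balanced of the prescribed length $2m$. Degenerate cases in which some block of $w$ is empty (so $w$ has fewer than five blocks) must be handled separately, but in those cases $w$ has fewer degrees of freedom and the analysis simplifies. The parity choice $k' = k+1$ vs.\ $k+2$ is precisely what guarantees that for every residual configuration there exists a strictly balanced scattered factor of length $2m$ in $w$ capable of distinguishing it from every alternative, which is why length $k+2$ rather than $k+1$ is required for even $k$.
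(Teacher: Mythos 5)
Your overall strategy is essentially the same as the paper's: both reconstruct the five block exponents of $w=\ta^{a_1}\tb^{b_1}\ta^{a_2}\tb^{b_2}\ta^{a_3}$ by querying whether scattered factors of prescribed block shapes lie in $S=\ScatFact_{k'}(w)\cap\Sigma_{\ed}^{k'}$, and your componentwise-domination criterion for five-block scattered factors (that $\ta^{x_1}\tb^{y_1}\ta^{x_2}\tb^{y_2}\ta^{x_3}$ with $x_2,y_1,y_2\geq 1$ embeds into $w$ if and only if $x_i\leq a_i$ and $y_j\leq b_j$) is correct and is implicitly what the paper's ``maximum $m$ such that\ldots'' queries rest on. The difference is organisational: the paper fixes an order (first decide whether $a_2=0$, then whether $a_2\geq\frac{k+1}{2}$, then pins down each exponent with an explicit query and an explicit tie-breaker), whereas you read off all the capped values at once and postpone the tie-breaking.

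However, as written the proposal has a genuine gap, and you name it yourself: ``the careful enumeration of all residual configurations together with the choice of a distinguishing auxiliary query for each'' is not a side obstacle --- it \emph{is} the proof. The capped values $\min(a_i,m-1)$, $\min(b_j,m-1)$ together with the sum constraints leave concrete ambiguities --- e.g.\ $(b_1,b_2)\in\{(m-1,m),(m,m-1)\}$ for odd $k$, and the degenerate case $a_2=0$, in which no five-block scattered factor exists at all and the $b_j$ are not even separately defined --- and whether each such ambiguity can be resolved by a \emph{strictly balanced} word of length exactly $2m$ depends on the already-recovered values of the other exponents. For instance, $\tb^m\ta^m\in S$ holds if and only if ($b_1\geq m$ and $a_2+a_3\geq m$) or $a_3\geq m$, so this query separates $(b_1,b_2)=(m,m-1)$ from $(m-1,m)$ only in the parameter range $a_3<m\leq a_2+a_3$; other ranges require different auxiliaries. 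This is precisely the case analysis the paper carries out (splitting on $a_2=0$, $a_2\geq\frac{k+1}{2}$, and $0<a_2<\frac{k+1}{2}$), and it is also exactly where the parity of $k$ --- hence the choice of $k+1$ versus $k+2$ --- actually enters. Until that enumeration is done and each auxiliary query is verified to be strictly balanced of length $2m$ with the claimed membership behaviour, the argument remains a plan rather than a proof.
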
\refstepcounter{definition}

\begin{proof}
As in the proof of Proposition~\ref{prop:erdos}, we give an algorithm for 
uniquely reconstructing $w$. 
W.l.o.g., let $k$ be odd. The case that $k$ is even is easily adapted. Let $w = 
\ta^i\tb^j\ta^\ell\tb^{k-j}\ta^{k-i-\ell}$ and let $S = \ScatFact_{k+1}(w) \cap 
\Sigma^*_{\ed}$.
Firstly, we shall deal with the case that $\ell=0$. Note that we can decide 
whether $\ell =0$ by querying whether there exists a scattered factor $u \in S$ 
such that $u \in \ta^*\tb^+\ta^+\tb^+\ta^*$. Now, if $\ell =0$, we have $w = 
\ta^i \tb^k\ta^{k-i}$. Since $k$ is odd, exactly one of $i,k-i$ will be at most 
$\frac{k-1}{2}$. We can decide which one by querying whether 
$\ta^{\frac{k+1}{2}}\tb^{\frac{k+1}{2}} \in S$. W.l.og., suppose $i \leq 
{\frac{k-1}{2}}$ (so the query returns ``no''). The other case is symmetric. 
Then note that $\ta^i\tb^{\frac{k+1}{2}}\ta^{\frac{k+1}{2}-i} \in S$ but 
$\ta^{i+1}\tb^{\frac{k+1}{2}}\ta^{\frac{k+1}{2}-i-1} \notin S$. Thus the exact 
value of $i$ (and therefore $k-i$) can be inferred directly from observing 
scattered factors of this form in $S$.

Now consider the the case that $\ell \not= 0$. Note that there exists $u \in 
\tb^+ \ta^{\frac{k+1}{2}} \tb^+ \cap S$ if and only if $\ell \geq 
\frac{k+1}{2}$. Suppose firstly that $\ell \geq \frac{k+1}{2}$. Then $i + 
(k-i-\ell) \leq \frac{k-1}{2}$. Thus we can determine $i$ and $(k-i-\ell)$ (and 
therefore $\ell$) by looking for the maximum $m_1,m_2$ such that there exists $u 
\in \ta^{m_1}\tb^+\ta^+\tb^+\ta^{m_2}$ with $u \in S$ ($i$ is the maximum $m_1$ 
while $k-i-\ell$ is the maximum $m_2$). Moreover, exactly one of $j,k-j$ will be 
less than $\frac{k+1}{2}$. We can decide which one by querying whether 
$\ta^{\frac{k+1}{2}} \tb^{\frac{k+1}{2}} \in S$. If so, it must be that $k-j 
\geq {\frac{k+1}{2}}$. Suppose that this is the case (the other case is 
symmetric). Then as before, we can determine the exact value of $j$ by looking 
at the scattered factors of the form $\tb^m\ta^+\tb^+\ta^*$ (i.e., $j$ is the 
maximum $m$) and we are done.

Finally, we consider the case that $0 <\ell <{\frac{k+1}{2}}$. Then $\ell$ can 
be uniquely determined as the maximum $m$ such that there exists $u \in 
\ta^*\tb^+\ta^m\tb^+\ta^*$ with $u \in S$. In order to determine $i$ (or 
equivalently $k-i-\ell$), we look for the maximum $m_1,m_2$ such that there 
exist $u_1\in \ta^{m_1} \tb^+\ta^+\tb^+\ta^*$ and $u_2 \in \ta^* 
\tb^+\ta^+\tb^+\ta^{m_2}$ with $u_1,u_2\in S$. In particular at least one of 
$m_1,m_2$ must be strictly less than $\frac{k-1}{2}$. If $m_1 < \frac{k-1}{2}$, 
then $j=m_1$ and if $m_2 < {\frac{k-1}{2}}$ then $k-\ell-i = m_2$. In either 
case, since $\ell$ is already known, this uniquely determines both $i$ and 
$k-i-\ell$. 

It remains to determine $j$ (or equivalently $k-j$). Recall that exactly one of 
$j,k-j$ will be less than $\frac{k+1}{2}$. Let $m_1$ be the maximum $m$ such 
that there exists $u\in \ta^*\tb^m\ta^+\tb^+\ta^*$ with $u \in S$ and let $m_2$ 
be the maximum $m$ such that there exists $u\in\ta^*\tb^+\ta^+\tb^m\ta^*$ with 
$u\in S$. Note that $m_1,m_2\leq \frac{k-1}{2}$. If $m_1 < \frac{k-1}{2}$ 
(resp. 
$m_2 < \frac{k-1}{2}$), then $j= m_1$ (resp $k-j = m_2$), and thus $j$ and 
$j-k$ 
can be inferred. If $m_1 = m_2 = \frac{k-1}{2}$, then either $j = 
\frac{k-1}{2}$ 
or $k-j = \frac{k-1}{2}$. Now, if $k-i-\ell < \frac{k+1}{2}$, there exists $u 
\in \ta^*\tb^{\frac{k+1}{2}}\ta^+\ta^{k-i-\ell}$ with $u\in S$ if and only if 
$j= \frac{k+1}{2}$ (in which case $k-j = \frac{k-1}{2}$). On the other hand, if 
$k-i-\ell \geq \frac{k+1}{2}$, then $i<\frac{k+1}{2}$ and there exists $u\in 
\ta^i\ta^+\tb^{\frac{k+1}{2}}\ta^*$ with $u\in S$ if and only if $k-j = 
\frac{k+1}{2}$ (in which case $j= \frac{k-1}{2}$). In either case, all 
exponents 
are known and we have uniquely reconstructed $w$.\qed
\end{proof}

The difficulty in proving Conjecture~\ref{con:bounds} seems to arise from the 
fact that, for different pairs of words $w,w’ \in \Sigma_{\ed}$, the set of 
scattered factors which distinguish them, namely the symmetric difference of 
$\ScatFact_k(w) \cap \Sigma^{k}_{\ed}$  and $\ScatFact_k(w') \cap \Sigma^{k}_{\ed}$ 
(for appropriate $k$), varies considerably, unlike with the proof(s) of 
Proposition~\ref{prop:erdos}, where the set of distinguishing scattered factors 
is always made up words of the same form, regardless of the choice of $w$ and 
$w’$. As an example, consider 
the words $w = \ta\tb\ta\tb\ta\tb$, $w^\prime = \tb\ta\tb\ta\tb\ta$, and 
$w^{\prime\prime} = 
\ta\tb\ta\tb\tb\ta$. Then the symmetric difference of $\ScatFact_{4}(w) \cap 
\Sigma_{\ed}^{4}$ and 
$\ScatFact_{4}(w^\prime) \cap \Sigma_{\ed}^{4}$ is $\{\ta\ta\tb\tb, 
\tb\tb\ta\ta\}$. On the other 
hand, considering $\ScatFact_{4}(w^{\prime}) \cap \Sigma_{\ed}^{4}$ and 
$\ScatFact_{4}(w^{\prime\prime}) \cap \Sigma_{\ed}^{4}$, the symmetric 
difference is 
$\{\tb\ta\ta\tb\}$.

\section{Conclusions}
We have considered properties of $k$-spectra of weakly-$0$-balanced words. In 
particular, in Section~3 
we give several insights into the structure of the set of all $k$-spectra of 
weakly-$0$-balanced words 
of length $2k$ by considering for which numbers $n$ there exists $w$ such that 
the $k$-spectrum of 
$w$ has cardinality $n$. In particular, we characterise the first two gaps in 
the possibilities for 
each $k$ which are regular (in the sense that the first and second gaps are 
always from $k+2$ to 
$2k-1$ and $2k+1$ to $3k-4$ (inclusive). On the other hand, we see that the 
third gap is 
considerably 
less regular and thus resists a natural characterisation.

In Section~4, we consider the task of reconstructing 
weakly-$0$-balanced words 
from their $k$-spectra. We note that this is, in a sense, as hard as in the 
general case, however, we also conjecture that even if we consider only the 
scattered factors which are also weakly-$0$-balanced, then the situation remains 
the same, in the sense that it can be achieved for the same choices of $k$. 
Resolving this conjecture appears to require some new approach however since the 
techniques for the general case are not easily adapted.

As mentioned at the end of Section~3 some of the weakly-$0$-balanced words are 
$\theta$-palindromes. 
Since the $\theta$-palindromes of length $2k$ are constructible from the ones of 
length $2(k-1)$ 
(except for each even $k$ exactly one $\theta$-palindrome) we surmised that the 
structure and 
properties propagate. Moreover we expected that the knowledge of the word's 
second half helps in 
finding the cardinalities of the $k$-spectra. Nevertheless we were only able to 
get results for 
$\theta$-palindromes in the same manner as for the other words, but we still 
believe that the 
structure of the $\theta$-palindromes can reveal more insights with further 
work.

\newpage

\bibliographystyle{plain}
\bibliography{scatfact}

%
%
%
%
%
%
%
%

\end{document}